\setlist[itemize,enumerate]{itemsep=0pt,topsep=0pt}
\tikzset{vertex/.style={circle,fill,draw,inner sep=0,minimum size=5pt}}
\DeclareMathOperator{\size}{size}
\DeclareMathOperator{\poly}{poly}
\newcommand{\eps}{\varepsilon}
\renewcommand{\epsilon}{\varepsilon}
\renewcommand{\bar}{\overline}
\newcommand{\N}{\mathbb N}
\newcommand{\Z}{\mathbb Z}
\newcommand{\Q}{\mathbb Q}
\newcommand{\R}{\mathbb R}
\newcommand{\C}{\mathbb C}
\title{Approximating the chromatic polynomial is as hard as computing it exactly}
\author{Ferenc Bencs\\
Korteweg de Vries Institute\\ for Mathematics,\\
University of Amsterdam. \\ P.O. Box 94248 1090 GE
\\
Amsterdam, the Netherlands
\\
\email{ferenc.bencs@gmail.com}
\and
Jeroen Huijben
\\
Korteweg de Vries  Institute \\ for Mathematics,
\\
University of Amsterdam. \\  P.O. Box 94248 1090 GE 
\\
Amsterdam, the Netherlands
\\
\email{jeroenhuijben95@gmail.com} 
\and 
Guus Regts
\\
Korteweg de Vries Institute \\for Mathematics,
\\
University of Amsterdam. \\ P.O. Box 94248 1090 GE 
\\
Amsterdam, the Netherlands
\\
\email{guusregts@gmail.com}
}
\begin{abstract}
We show that for any non-real algebraic number $q$ such that $|q-1|>1$ or $\Re(q)>\frac{3}{2}$ it is \textsc{\#P}-hard to compute a multiplicative (resp. additive) approximation to the absolute value (resp. argument) of the chromatic polynomial evaluated at $q$ on planar graphs. 
This implies \textsc{\#P}-hardness for all non-real algebraic $q$ on the family of all graphs.
We moreover prove several hardness results for $q$ such that $|q-1|\leq 1$.

Our hardness results are obtained by showing that a polynomial time algorithm for \emph{approximately} computing the chromatic polynomial of a planar graph at non-real algebraic $q$ (satisfying some properties) leads to a polynomial time algorithm for \emph{exactly} computing it, which is known to be hard by a result of Vertigan.
Many of our results extend in fact to the more general partition function of the random cluster model, a well known reparametrization of the Tutte polynomial.
\end{abstract}
\begin{document}
\section{Introduction}
The study of (approximately) computing the chromatic polynomial, or in fact the more general Tutte polynomial\footnote{Recall that the Tutte polynomial is a $2$-variable polynomial that has the chromatic polynomial among its many specializations.} was initiated by~\cite{complexityJonesandTutte} over thirty years ago. Among other things they proved that evaluating the  chromatic polynomial of a graph at any algebraic number $q$ exactly is \textsc{\#P}-hard except for $q=0,1,2$.
This was extended by~\cite{VertiganplanarTutte} who showed that the same is true when restricted to planar graphs.
The next step was taken by~\cite{GoldbergJerrum08} who proved that it is NP-hard to approximate the chromatic polynomial at real values $q>2$ (as part of a much larger result concerning inapproximability of the Tutte polynomial).
As far as we know the complexity of approximating the chromatic polynomial at real $q$ on \emph{planar graphs} is open. See~\cite{GoldbergJerrumplanar} for hardness result for evaluations of the Tutte polynomial on planar graphs `close' to the chromatic polynomial. 

Partly motivated by applications to quantum computing, ~\cite{GoldbergGuocomplexityIsing-Tutte} proved the first inapproximability results for certain non-real evaluations of the Tutte polynomial, showing \textsc{\#P}-hardness of approximating and not just NP-hardness. These results were recently extended in~\cite{GalanisGoldbergHerrera} to a much larger family of evaluations and planar graphs.

So far, as far as we know, no inapproximability results were known for the chromatic polynomial at non-real values of $q$. 
For several graph polynomials, such as the independence polynomial and the partition function of the Ising model, recent developments in the study of approximate counting has indicated that approximating evaluations of these polynomials is computationally hard in the vicinity of the zeros of these polynomials~\cite{PetersRegts,BGGSindsetcomplexplane,Buys,RoederLeeYangCayley,PetersRegts20,LeeYangzerosandcomplexity,galanis2022a}.
Motivated by this connection and Sokal's famous result~\cite{Sokaldense} saying that the zeros of of the chromatic polynomial are dense in the complex plane, one might be tempted to conjecture that approximating the chromatic polynomial at non-real numbers should be hard.
Our main result indeed confirms this.

\subsection{Main results}
Before we state our main result we first formally state the computational problems we are interested in and give the definition of the chromatic polynomial.

We denote the chromatic polynomial of a graph $G=(V,E)$ by $Z(G;q)$; it is defined as 
\[
Z(G;q) \coloneqq \sum_{A\subseteq E}(-1)^{|A|}q^{k(A)},
\]
where $k(A)$ denotes the number of components of the graph $(V,A)$.
For a positive integer $q$, $Z(G;q)$ equal the number of proper $q$-colorings of $G$.

We will consider two types of approximation problems, one for the norm of $Z(G;q)$ and one for its argument, for each algebraic number $q$ separately. For a nonzero complex number $\xi$ we will consider the argument $\arg(\xi)$ as an element of $\R/(2\pi \Z)$ measuring the angle with the positive real axis.  For $\overline{a}\in \R/(2\pi \Z)$ we denote $|\overline{a}|:=\min_{a'\in \overline{a}}|a'|.$

Let $\xi$ be a complex number and $\eta>0$. 
We call a number $r\in \mathbb{Q}$ an \emph{$\eta$-abs-approximation} of $\xi$ if $\xi\neq 0$ implies $e^{-\eta}\leq r/\left|\xi \right|\leq e^\eta$.
We call a number $r\in \mathbb{Q}$ an \emph{$\eta$-arg-approximation} of $\xi$ if $\xi\neq 0$ implies that $|r-\arg(\xi)| \leq \eta$.
Note that in both cases an approximation of $0$ could be anything.
In~\ref{subsec:algebraic numbers} below we will indicate how we will represent algebraic numbers.
Throughout graphs may have multiple edges between any pair of vertices and loops, unless stated otherwise.
Consider for an algebraic number $q$ the following computational problems. 
\\% for general graphs and planar graph separately.

\noindent
\begin{tabular}{rl}
    Name:& \textsc{$q$-Planar-Abs-Chromatic}\\
    Input:& A planar graph $G$.\\
    Output:& An $0.25$-abs-approximation of $Z(G;q).$
\end{tabular}\\[1em]
\noindent
\begin{tabular}{rl}
    Name:& \textsc{$q$-Planar-Arg-Chromatic}\\
    Input:& A planar graph $G$.\\
    Output:& An $0.25$-arg-approximation of $Z(G;q).$ 
\end{tabular}\\[1em]

\noindent We define the problems \textsc{$q$-Abs-Chromatic},\textsc{$q$-Arg-Chromatic} in the same way except that the input for both problems may now be any graph.
We note that these problems do not change in complexity when restricting to simple graphs, since the chromatic polynomial of a graph with a loop is constantly equal to $0$ and the chromatic polynomial of a graph with no loops is equal to the chromatic polynomial of its underlying simple graph.

Our main result is the following:
\begin{theorem}\label{thm:main_result planar}
    For each non-real algebraic number $q\in\C$ such that $|1-q|>1$ or $\Re(q)>3/2$, the problems \textsc{$q$-Planar-Abs-Chromatic} and  \textsc{$q$-Planar-Arg-Chromatic} are \textsc{\#P}-hard.
\end{theorem}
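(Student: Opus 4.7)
The plan is to reduce exact evaluation of $Z(G;q)$ on planar $G$---\textsc{\#P}-hard at the given $q$ by Vertigan's theorem---to either approximation problem. The basic amplification tool will be the multiplicativity $Z(G^{\sqcup k};q) = Z(G;q)^k$; the graph $G^{\sqcup k}$ is planar of size $k|V(G)|$, so an $\eta$-abs-approximation of $Z(G^{\sqcup k};q)$ yields an $(\eta/k)$-abs-approximation of $|Z(G;q)|$ by taking a $k$-th root, and an $\eta$-arg-approximation yields $k\arg Z(G;q)\bmod 2\pi$ to precision $\eta$. Choosing $k=\poly(n)$ will produce an inverse-polynomial-precision estimate of $|Z(G;q)|$ directly; to recover $\arg Z(G;q)\in[0,2\pi)$ itself I plan to use an iterative phase-unwinding procedure across $k = 1, 2, 4,\ldots$, each step halving the ambiguity by comparison with the previous estimate, which keeps the graphs polynomial in size.

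To convert a \emph{one-sided} oracle (only $|\cdot|$ or only $\arg$) into information about $Z(G;q)$ as a complex number, I plan to use planar two-terminal gadgets $H_j$ so that attaching $H_j$ to $G$ produces a planar graph $G_j$ with $Z(G_j;q) = T_j(Z(G;q))$ for a known linear or M\"obius transformation $T_j$. Estimating $|Z(G_j;q)|$ (respectively $\arg Z(G_j;q)$) for enough varied $T_j$ will triangulate $Z(G;q)\in\C$. The hypotheses $|1-q|>1$ and $\Re q>\tfrac{3}{2}$ enter precisely here: they guarantee expanding dynamics of the series (respectively parallel) effective-edge-weight recursions in the random-cluster formulation at $v=-1$, so that polynomially many gadget iterations generate a sufficiently rich family of $T_j$. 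The two conditions correspond to two complementary gadget families that together cover the stated $q$-region.

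Finally, Vertigan's \textsc{\#P}-hardness reduction produces planar graphs $G$ whose values $Z(G;q)$ lie in a polynomially separated set encoding a \textsc{\#P}-complete quantity (e.g., \#SAT), so a $1/\poly(n)$-precision complex approximation of $Z(G;q)$ should suffice for unambiguous decoding by rounding to the nearest candidate. The main obstacle will be the second step: designing explicit planar gadget families whose induced transformations $\{T_j\}$ span enough of $\C$, and formally verifying the expansion property of the associated M\"obius dynamics in each of the two regimes $|1-q|>1$ and $\Re q>\tfrac{3}{2}$, with error-propagation analysis through the gadget iterations that keeps the final precision at $1/\poly(n)$. The case split in the theorem statement is exactly what this dynamical dichotomy forces.
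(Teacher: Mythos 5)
Your overall reduction target (reduce exact planar evaluation, \textsc{\#P}-hard by Vertigan, to approximate evaluation at the same $q$) is the same as the paper's, and the idea of using planar two-terminal gadgets to modulate $Z(G;q)$ is also the right direction. However, the plan has a fundamental quantitative gap that would sink it.

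The central problem is precision. Disjoint-union amplification $Z(G^{\sqcup k};q)=Z(G;q)^k$ only refines the relative error of $|Z(G;q)|$ (and the additive error of $\arg Z(G;q)$) by a factor of $k$, and $k$ must stay polynomial in $n$ for the graphs to remain polynomial-size. The same holds for your phase-unwinding scheme: doubling $k$ at each step gives one extra bit, so $O(\log n)$ doublings give $O(\log n)$ bits before the graph size blows up. You therefore end with a $1/\poly(n)$-precision complex approximation of $Z(G;q)$. But $Z(G;q)$ is an algebraic number whose height (coefficient size) is of order $2^{\Theta(m)}$, and distinct candidate values in Vertigan's reduction need not be separated by more than $2^{-\poly(n)}$; ``rounding to the nearest candidate'' requires exponentially fine precision, not inverse-polynomial. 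Your claim that Vertigan's construction yields a ``polynomially separated'' set of values is the missing (and, as far as I can see, unjustified) step on which everything hinges.

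Two further structural issues. First, replacing an edge $e$ of $G$ by a two-terminal gadget $H$ does not produce $Z(G_j;q)=T_j(Z(G;q))$; it produces a linear combination of \emph{two} unknowns, $Z(G;q)$ and $Z(G/e;q)$. So the transformations are not one-dimensional M\"obius maps of $Z(G;q)$, and ``triangulating'' a single complex number is not what's available. Second, you need to handle the degenerate case $Z(G/e;q)=Z(G\setminus e;q)=0$, which your plan does not address.

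The paper resolves all of these at once with a different decomposition: it determines the \emph{ratio} $Z(G;q)/Z(G/e;q)$, which, unlike $Z(G;q)$ itself, is an algebraic number of polynomially bounded degree and absolute height (Corollary~\ref{cor:bounding algebraic numbers}). To pin down that ratio to precision $2^{-\poly(n)}$ it does not enlarge the base graph by disjoint copies; instead it iterates a ``box shrinking'' procedure (Theorem~\ref{thm:box shrinking}) that shrinks the feasible region by a constant factor per query, so $\poly(n)$ queries give exponential precision, each query inserting a series-parallel gadget whose \emph{size} grows only polynomially in $\log(\eps^{-1})$ (Theorem~\ref{thm:gadget finding}). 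It then recovers the ratio exactly via the Kannan--Lenstra--Lov\'asz lattice algorithm (Proposition~\ref{prop:determine minimal polynomial}) and assembles $Z(G;q)$ as a telescoping product along a deletion--contraction sequence, with the zero-ratio case handled by a zero-detection output bit (Theorem~\ref{thm:allow telescope}). The two $q$-conditions in the statement then enter exactly as you suspected---they ensure some effective or virtual interaction has modulus $>1$, giving the expanding dynamics needed for gadget density---but they feed a single gadget-construction mechanism, not two complementary families. The crux you would need to add to your plan is (i) working with a quantity of bounded height (the ratio, not $Z(G;q)$ itself), and (ii) an amplification mechanism that reaches $2^{-\poly(n)}$ precision without exponential graph blow-up.
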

Note that by planar duality, this result also applies to the flow polynomial. 

As an immediate consequence of~\ref{thm:main_result planar}, we obtain hardness for approximately computing the chromatic polynomial on the entire complex plane except the real line for the family of all graphs:
\begin{corollary}\label{cor:main_result}
    For each non-real algebraic number $q\in\C$, the problems \textsc{$q$-Abs-Chromatic} and  \textsc{$q$-Arg-Chromatic} are \textsc{\#P}-hard.
\end{corollary}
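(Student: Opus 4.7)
The plan is to reduce Corollary~\ref{cor:main_result} directly to Theorem~\ref{thm:main_result planar}. The theorem already covers every non-real algebraic $q$ with $\Re(q)>3/2$, so we may restrict attention to non-real algebraic $q$ with $\Re(q)\leq 3/2$ (where possibly $|1-q|\leq 1$).

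For such $q$, set $q':=q-2$. Then $q'$ is non-real algebraic and
\[
|1-q'|=|3-q|\geq \Re(3-q)=3-\Re(q)\geq 3/2>1,
\]
so Theorem~\ref{thm:main_result planar} gives \textsc{\#P}-hardness of $q'$-\textsc{Planar-Abs-Chromatic} and $q'$-\textsc{Planar-Arg-Chromatic}. It therefore suffices to give a polynomial-time reduction from these planar problems at $q'$ to the corresponding problems (without the planarity restriction) at $q$.

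The reduction is based on the classical identity
\[
Z(K_n\vee G;q)=q(q-1)\cdots(q-n+1)\cdot Z(G;q-n),
\]
where $K_n\vee G$ denotes the graph join of $K_n$ and $G$. This identity is easily verified for positive integers $q\geq n$ by counting proper colorings (the clique uses $n$ distinct colors, leaving $q-n$ for $G$) and then extends to all $q\in\C$ by polynomial identity. Specializing to $n=2$: given a planar graph $G$, we form $K_2\vee G$ (which is no longer planar in general), invoke the assumed algorithm for $q$-\textsc{Abs-Chromatic} or $q$-\textsc{Arg-Chromatic} on it, and recover an approximation of $Z(G;q-2)=Z(G;q')$ by dividing by the fixed nonzero algebraic constant $q(q-1)$ (in the abs case) or subtracting $\arg(q(q-1))$ modulo $2\pi$ (in the arg case).

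The main---though quite mild---technical point is to check that the $0.25$-approximation quality is preserved throughout. An approximation to $|q(q-1)|$ or $\arg(q(q-1))$ of arbitrarily small precision can be computed in polynomial time from the representation of $q$, and any residual loss in accuracy can be absorbed by standard boosting of the oracle; in the abs case one uses $Z(G_1\sqcup G_2;q)=Z(G_1;q)Z(G_2;q)$ to amplify accuracy via a few disjoint copies of $K_2\vee G$, and in the arg case it suffices to argue the hardness is insensitive to the precise constant in the approximation factor. With these adjustments the reduction runs in polynomial time, yielding the desired \textsc{\#P}-hardness at $q$.
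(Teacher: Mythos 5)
Your proof is correct and takes essentially the same route as the paper: the paper joins a $K_3$ (reducing the parameter to $q-3$ and verifying $|1-q'|>1$ after first restricting to $|q-1|\leq 1$ via Theorem~\ref{thm:main_result planar}), while you join a $K_2$ (reducing to $q-2$ and verifying the same condition after restricting to $\Re(q)\leq 3/2$), a purely cosmetic difference. Your remarks about preserving the $0.25$-approximation under division by the fixed algebraic constant $q(q-1)$ address a genuine, if mild, technical point that the paper's proof glosses over; either the disjoint-union boosting you propose or the observation that the downstream box-shrinking argument tolerates a slightly larger approximation constant handles it.
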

\begin{proof}
This follows the same argument as Sokal's density result~\cite{Sokaldense}. We reduce the problems to their planar counterpart. 
Given a planar graph $G$. 
Clearly, we may assume that $|q-1|\leq 1$.
Create a new graph $\hat G$ by adding three new vertices pairwise connected by an edge and connect each of  these three vertices to all original vertices of $G$.
It is well known and easy to see that $Z(\hat G;q)=q(q-1)(q-2)Z(G;q-3)$.
Denote $q'=q-3$ and note that $|q'-1|=|(q-1)-3|>1$.
So a polynomial time algorithm that solves the problem \textsc{$q$-Abs-Chromatic} (resp. \textsc{$q$-Arg-Chromatic}) can be used to solve \textsc{$q'$-Planar-Abs-Chromatic} (resp. \textsc{$q'$-Planar-Arg-Chromatic}) in polynomial time.
Since the latter two problems are \textsc{\#P}-hard by~\ref{thm:main_result planar}, the same holds for the former two.
\end{proof}

While the main focus of this paper is on the chromatic polynomial, we also derive results for the more general partition function of the \emph{random cluster model} 
\[
Z(G;q,y) \coloneqq \sum_{A\subseteq E} (y-1)^{|A|} q^{k(A)},
\]
and the associated problems of approximating the value for fixed $q,y$ on input of a planar graph $G$. 
In~\ref{thm:general main result} we find a sufficient condition such that these problems are \textsc{\#P}-hard, and in~\ref{cor:##P-hard parameters} we record some explicit ranges for $q,y$ where the problems are \textsc{\#P}-hard, which includes the result of~\ref{thm:main_result planar}. ~\ref{fig:chromatic hardness region} shows a region of $q$-values for which we could verify the condition in~\ref{thm:general main result} with a computer.
\begin{figure}
    \centering
    \includegraphics[width=10cm]{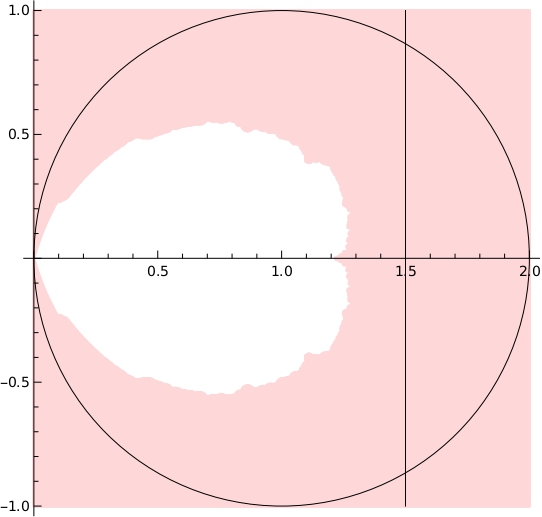}
    \caption{The red-shaded region represents values of $q$ for which the problems \textsc{$q$-Planar-Abs-Chromatic} and  \textsc{$q$-Planar-Arg-Chromatic} are \textsc{\#P}-hard. This is a pixel-picture with a $1001\times 1001$ resolution. The region depicted in the figure ranges from $-i$ to $2+i$.}
    \label{fig:chromatic hardness region}
\end{figure}

As is well known, $Z(G;q,y)$ is essentially equal to the Tutte polynomial $T(G;x,y)$ for $q=(x-1)(y-1)$.
See e.g.~\cite{Sokalsurvey,ellis2022handbook} for background. 
Thus \ref{thm:general main result} immediately gives us several inapproximability results for the Tutte polynomial on planar graphs. 
However, the cases $x=1$ or $y=1$, corresponding to $q=0$, are not covered by our approach. 
(The case $x=1$ corresponds to the all terminal reliability polynomial.)
We comment on how our approach could be used to show to hardness of approximation for these cases in~\ref{sec:conclusion}.
%In the special case of the chromatic polynomial, where $y=0$, 

\subsection{Proof outline}
We sketch a proof outline here (only for the chromatic polynomial), leaving most of the technical details and definitions to later sections.
Our proof goes along similar lines as other inapproximability results for non-real parameters obtained recently~\cite{GoldbergGuocomplexityIsing-Tutte,BGGSindsetcomplexplane,LeeYangzerosandcomplexity,GalanisGoldbergHerrera,galanis2022a}, but it also differs from these at certain steps.
For example, in the previous works just mentioned the problem of approximate evaluation is reduced from exact evaluation of the polynomial/partition function at different parameters (often real) causing extra work to be done.
One of our novel contributions is that we reduce exact evaluation at $q$ (which by~\cite{VertiganplanarTutte} is \textsc{\#P}-hard) to approximate evaluation at the same parameter, yielding a very clean reduction. 
Our approach for this is quite robust and could be applied to other partition functions/graph polynomials.
We say a bit more about this below.
First we describe our approach.

The goal is to show that for a given algebraic number $q$, assuming the existence of a polynomial time algorithm for \textsc{$q$-Planar-Abs-Chromatic} or \textsc{$q$-Planar-Arg-Chromatic}, we can design a polynomial time algorithm to compute the evaluation of the chromatic polynomial at $q$ exactly. This is essentially done in two steps.

The first step is to replace an edge $e$ of a given graph $G$ by another graph $H$ (with two marked vertices) also called a \emph{gadget}.
The chromatic polynomial of the resulting graph $G'$ is then, up to some easily computable factor (if the graph $H$ is series-parallel for example), given by
\[
Z(G;q)+y_HZ(G/e;q),
\]
where is $y_H$ the \emph{effective edge interaction} of $H$ (to be defined in the next section).
If one can determine the value $y^*$ such that $Z(G;q)+y^*Z(G/e;q)=0$, then this means that one can determine the ratio $r=\frac{Z(G;q)}{Z(G/e;q)}$, assuming $Z(G/e;q)\neq 0$.
A potential problem is when both $Z(G/e;q)$ and $Z(G\setminus e;q)$ are equal to zero. Below we indicate how to overcome this difficulty.

In case $q$ is real the value of $y^*$ can be approximated very accurately by means of a binary search procedure due to~\cite{GoldbergJerrumsign}. 
Then, using that $y^*$ is an algebraic number of polynomial size, this implies one can in fact determine $y^*$ \emph{exactly} in polynomial time. 
The binary search is done by applying the assumed polynomial time algorithm that approximately computes the absolute value or argument of $Z(G';q)$ and using the output of this algorithm for various values of $y$ to steer the binary search.
We extend and simplify this binary search strategy to a procedure we call `box shrinking', see~\ref{thm:box shrinking} below, so that it applies to non-real $q$. 
Moreover, we modify it in way so that it allows us to determine if $Z(G/e;q)$ is equal to $0$ or not, provided not both $Z(G/e;q)$ and $Z(G\setminus e,q)$ are zero.
Having this extra information allows us then to compute $Z(G;q)$ exactly, writing it as a telescoping product.
See~\ref{thm:allow telescope} below for the details.

The box shrinking procedure requires us to be able to generate graphs $H$ such that their effective edge interactions approximate any given $y_0\in \mathbb{Q}[i]$ with very high precision fast. 
This brings us to the second step. 
As in~\cite{GalanisGoldbergHerrera}, we use series-parallel graphs to achieve this.
The approach to do this is partly based on~\cite{BGGSindsetcomplexplane} and inspired by~\cite{de2021zeros,bencs2022location}. See~\ref{thm:gadget finding} below for the precise statement of what we obtain.

The novel parts of our approach, the box shrinking procedure, \ref{thm:box shrinking}, and~\ref{thm:allow telescope}, are quite robust and can for example be applied to the independence polynomial.
Doing so, one reduces the problem of exactly evaluating the independence polynomial at some non-real fugacity parameter, which is \textsc{\textsc{\#P}}-hard by~\cite{CKholant}, to approximately evaluating it.
This would shorten and simplify the reduction used in~\cite{BGGSindsetcomplexplane}, where a reduction from evaluating the polynomial at $1$ is used.

\paragraph{Organization}
In the next section we collect some definitions and results around series-parallel graphs and some notions regarding the representation of algebraic numbers.
Then in~\ref{sec:proof} we state our two main technical contributions and combine these to give a proof of~\ref{thm:main_result planar}.
~\ref{sec:constructing gadgets} and~\ref{sec:complex binary search} contain our proofs of these two contributions.
Finally in~\ref{sec:conclusion} we conclude with some questions and problems left open by our work.

\section{Preliminaries}
In this section we set up some notation and introduce some basic notions that we will use.
\subsection{Series-parallel graphs}\label{sec:SP}
As mentioned in the introduction we will be using series-parallel graphs as gadgets.
We introduce these here closely following~\cite{bencs2022location} and thereby~\cite{RoyleSokal} in their use of notation.
%`We refer to~\cite{} for further background.

Let $G_1$ and $G_2$ be two graphs with designated start- and endpoints $s_1,t_1$, and $s_2,t_2$ respectively, referred to as \emph{two-terminal graphs}.
The \emph{parallel composition} of $G_1$ and $G_2$ is the graph $G_1\parallel G_2$ with designated start- and endpoints $s,t$ obtained from the disjoint union of $G_1$ and $G_2$ by identifying $s_1$ and $s_2$ into a single vertex $s$ and by identifying $t_1$ and $t_2$ into a single vertex $t.$
The \emph{series composition} of $G_1$ and $G_2$ is the graph $G_1\bowtie G_2$ with designated start- and endpoints $s,t$ obtained from the disjoint union of $G_1$ and $G_2$ by identifying $t_1$ and $s_2$ into a single vertex and by renaming $s_1$ to $s$ and $t_2$ to $t$. Note that the order matters here.
A two-terminal graph $G$ is called \emph{series-parallel} if it can be obtained from a single edge using series and parallel compositions. 
Note that series-parallel graphs are automatically connected. 
%We wish to make one exception and call a two-terminal graph just consisting of the vertices $s$ and $t$ with no edges a series-parallel graph as well.

From now on we will implicitly assume the presence of the start- and endpoints when referring to a two-terminal graph $G$.
We denote by $\mathcal{G}_{\mathrm{SP}}$ the collection of all series-parallel graphs. % and by $\mathcal{G}^*_{\mathrm{SP}}$ the collection of all series-parallel graphs $G$ such that the vertices $s$ and $t$ are not connected by an edge.

\subsection{Effective edge interactions}
An important ingredient in our proof will be the notion of \emph{effective edge interaction}.
It requires a few preliminary definitions to define it. 
We extend these definitions from~\cite{bencs2022location} to the partition function of the random cluster model.

Recall that for a positive integer $q$, any $y\in\mathbb{C}$ and a graph $G=(V,E)$ we have
\[
Z(G;q,y)=\sum_{\phi:V\to \{1,\ldots,q\}} \prod_{uv\in E}(1+(y-1)\delta_{\phi(u),\phi(v)}),
\]
where $\delta_{i,j}$ denotes the Kronecker delta.
For a positive integer $q$ and a two-terminal graph $G$, we can thus write,
\begin{equation}\label{eq:same dif}
Z(G;q,y)=Z^{\mathrm{same}}(G;q,y)+Z^{\mathrm{dif}}(G;q,y),
\end{equation}
where $Z^{\mathrm{same}}(G;q,y)$ collects those contributions where $s,t$ receive the same color and where $Z^{\mathrm{dif}}(G;q,y)$ collects those contribution where $s,t$ receive different colors. 
Since $Z^{\mathrm{same}}(G;q,y)$ is equal to $Z(G';q,y)$ where $G'$ is obtained from $G$ by identifying the vertices $s$ and $t$, both these terms are polynomials in $q$ and $y$. 
Therefore \ref{eq:same dif} also holds for any $q\in \mathbb{C}$.

% We next collect some basic properties of $Z$, $Z^{\mathrm{same}}$ and $Z^{\mathrm{dif}}$ under series and parallel compositions in the lemma below. 
% They can for example also be found in~\cite{Sokaldense}.
% \begin{lemma}\label{lem:basic}
% Let $G_1$ and $G_2$ be two two-terminal graphs and let us denote by $K_2$ an edge. Then we have the following equalities:
% \begin{itemize}
%  \item   $Z^{\mathrm{dif}}(G;q) = Z(G \parallel K_2 ;q)$,
%     \item $Z^{\mathrm{same}}(G_1 \bowtie G_2 ;q) = Z(G_1 \parallel G_2 ;q)$,
%   \item  $Z(G_1 \bowtie G_2 ;q) = \tfrac{1}{q} \cdot Z(G_1 ;q) \cdot Z(G_2 ;q)$,
%   \item $Z^{\mathrm{same}}(G_1 \parallel G_2 ;q) = \tfrac{1}{q}\cdot Z^{\mathrm{same}}(G_1 ;q) \cdot Z^{\mathrm{same}}(G_2 ;q)$,
%   \item  $Z^{\mathrm{dif}}(G_1 \parallel G_2 ;q) = \tfrac{1}{q(q-1)}\cdot Z^{\mathrm{dif}}(G_1 ;q) \cdot Z^{\mathrm{dif}}(G_2 ;q)$.
% \end{itemize}
% \end{lemma}

%We note that $Z^{\mathrm{same}}(G;q)$ is equal to $Z(G',q,0)$, where $G'$ is the (multi)graph obtained from $G$ by identifying $s$ and $t$. 
%Thus for $G\in \mathcal{G}_{\mathrm{SP}}$, $Z^{\mathrm{same}}(G;q)$ is constantly equal zero if and only if $G\notin\mathcal{G}*_{\mathrm{SP}}$.

For fixed $y\in \mathbb{C}$ the \emph{effective edge interaction} is defined as
\begin{equation}\label{eq:ratio}
y_G(q,y):=(q-1)\frac{Z^{\mathrm{same}}(G;q,y)}{Z^{\mathrm{dif}}(G;q,y)},
\end{equation}
which we view as a rational function in $q$ and hence it takes values in the Riemann sphere $\mathbb{C}\cup \{\infty\}$.
(It might be slightly confusing that both the function and one of its inputs are called $y$. This is because $y_G$ will play a role similar as $y$. The name effective edge interaction stems from the fact that replacing all edges of a graph $H$ with a $2$-terminal graph $G$ with effective edge interaction $y_G$ yielding the graph $H(G)$ has the property that $Z(H;q,y_G)=Z(H(G);q,y)$ up to a simple factor depending only on $G$.)
We note that in case $G$ contains an edge between $s$ and $t$, the rational function $q\mapsto y_G(q,0)$ is constantly equal to $0$.
If $q,y$ are clear from the context we may occasionally just write $y_G$ for the effective edge interaction.

For any $q\neq 0$ define the following M\"obius transformation 
\[
f_q(z):=1+\frac{q}{z-1}\]
and note that $f_q$ is defined on the Riemann sphere and is an involution, i.e. $f_q(f_q(z))=z$ for all $z\in \mathbb{C}\cup \{\infty\}$, as follows by a direct calculation.
Following~\cite{bencs2022location}, for a two-terminal graph $G$ with effective edge interaction $y_G$, we call $f_q(y_G)$ a \emph{virtual interaction}.

The next lemmas are Lemma 1 and 3 from~\cite{bencs2022location} and capture the behavior of the effective edge interactions under series and parallel compositions. Even though they were only stated for the chromatic polynomial in~\cite{bencs2022location}, their proof automatically extends to the more general setting of the partition function of the random cluster model.

\begin{lemma}[\cite{bencs2022location}]\label{lem:basic Zsame and Zdif}
Let $G_1$ and $G_2$ be two two-terminal graphs and let $q,y\in \mathbb{C}$. Then we have the following identities:\\[5pt]
%\begin{align*}
% \bullet&& Z^{\mathrm{same}}(G_1 \parallel G_2 ;q,y) &= \tfrac{1}{q}\cdot Z^{\mathrm{same}}(G_1 ;q,y) \cdot Z^{\mathrm{same}}(G_2 ;q,y),\\
% \bullet&& Z^{\mathrm{dif}}(G_1 \parallel G_2 ;q,y) &= \tfrac{1}{q(q-1)}\cdot Z^{\mathrm{dif}}(G_1 ;q,y) \cdot Z^{\mathrm{dif}}(G_2 ;q,y),\\
% \bullet&& Z(G_1 \bowtie G_2 ;q,y) &= \tfrac{1}{q} \cdot Z(G_1 ;q,y) \cdot Z(G_2 ;q,y),\\
% \bullet&& Z^{\mathrm{same}}(G_1 \bowtie G_2 ;q,y)&= Z(G_1 \parallel G_2 ;q,y)\\
% &&&=\tfrac{1}{q}\cdot Z^{\mathrm{same}}(G_1 ;q,y) \cdot Z^{\mathrm{same}}(G_2 ;q,y)\\
%  &&&+\tfrac{1}{q(q-1)}\cdot Z^{\mathrm{dif}}(G_1 ;q,y) \cdot Z^{\mathrm{dif}}(G_2 ;q,y),\\
% \bullet&& Z^{\mathrm{dif}}(G_1 \bowtie G_2 ;q,y) &=\tfrac{1}{q}\cdot Z^{\mathrm{same}}(G_1;q,y)\cdot Z^{\mathrm{dif}}(G_2;q,y)\\
% &&&+\tfrac{1}{q}\cdot Z^{\mathrm{dif}}(G_1;q,y)\cdot Z^{\mathrm{same}}(G_2;q,y)\\
%    &&&+\tfrac{q-2}{q(q-1)}\cdot Z^{\mathrm{dif}}(G_1;q,y)\cdot Z^{\mathrm{dif}}(G_2;q,y).
%\end{align*}
$\begin{array}{crcl}
 \bullet& Z^{\mathrm{same}}(G_1 \parallel G_2 ;q,y) &=& \tfrac{1}{q}\cdot Z^{\mathrm{same}}(G_1 ;q,y) \cdot Z^{\mathrm{same}}(G_2 ;q,y),\\[5pt]
 \bullet& Z^{\mathrm{dif}}(G_1 \parallel G_2 ;q,y) &=& \tfrac{1}{q(q-1)}\cdot Z^{\mathrm{dif}}(G_1 ;q,y) \cdot Z^{\mathrm{dif}}(G_2 ;q,y),\\[5pt]
 \bullet& Z(G_1 \bowtie G_2 ;q,y) &=& \tfrac{1}{q} \cdot Z(G_1 ;q,y) \cdot Z(G_2 ;q,y),\\[5pt]
 \bullet& Z^{\mathrm{same}}(G_1 \bowtie G_2 ;q,y)&=& Z(G_1 \parallel G_2 ;q,y)\\[5pt]
 &&=&\tfrac{1}{q}\cdot Z^{\mathrm{same}}(G_1 ;q,y) \cdot Z^{\mathrm{same}}(G_2 ;q,y)\\[5pt]
  &&+&\tfrac{1}{q(q-1)}\cdot Z^{\mathrm{dif}}(G_1 ;q,y) \cdot Z^{\mathrm{dif}}(G_2 ;q,y),\\[5pt]
 \bullet& Z^{\mathrm{dif}}(G_1 \bowtie G_2 ;q,y) &=&\tfrac{1}{q}\cdot Z^{\mathrm{same}}(G_1;q,y)\cdot Z^{\mathrm{dif}}(G_2;q,y)\\[5pt]
 &&+&\tfrac{1}{q}\cdot Z^{\mathrm{dif}}(G_1;q,y)\cdot Z^{\mathrm{same}}(G_2;q,y)\\[5pt]
    &&+&\tfrac{q-2}{q(q-1)}\cdot Z^{\mathrm{dif}}(G_1;q,y)\cdot Z^{\mathrm{dif}}(G_2;q,y).
\end{array}$
\end{lemma}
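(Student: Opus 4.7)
The plan is to verify all six identities for positive integer values of $q$, where the coloring-sum representation of $Z(G;q,y)$ is available, and then invoke polynomial interpolation to extend to all $q\in\mathbb{C}$. Since $Z(G;q,y)$ is by construction a polynomial in $(q,y)$, and $Z^{\mathrm{same}}(G;q,y)$, $Z^{\mathrm{dif}}(G;q,y)$ are polynomials too (the former is $Z$ of the graph obtained by identifying $s$ and $t$, the latter is the difference), both sides of each proposed identity are polynomials in $(q,y)$; hence agreement on the Zariski-dense set $\mathbb{Z}_{>0}\times\mathbb{C}$ forces equality everywhere.

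For positive integer $q$ and a two-terminal graph $G$, write $Z^{s=c,\,t=c'}(G;q,y)$ for the partial coloring sum restricted to $\phi$ with $\phi(s)=c$ and $\phi(t)=c'$. Invariance of the edge weights under global color permutations gives the normalizations
\[
Z^{s=c,\,t=c}(G;q,y)=\tfrac{1}{q}\,Z^{\mathrm{same}}(G;q,y),\qquad Z^{s=c,\,t=c'}(G;q,y)=\tfrac{1}{q(q-1)}\,Z^{\mathrm{dif}}(G;q,y)\text{ for }c\neq c'.
\]
These color-symmetry normalizations are the only structural input needed below.

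For the parallel composition, the vertex identifications force a coloring of $G_1\parallel G_2$ to be a pair of colorings of $G_1$ and $G_2$ agreeing on both terminals, and the weight factors accordingly. Summing over the common color in the same-terminals case and over ordered distinct pairs ($q(q-1)$ of them) in the different-terminals case immediately yields the first two identities. The unrestricted series identity $Z(G_1\bowtie G_2)=\tfrac{1}{q}Z(G_1)Z(G_2)$ is obtained in the same spirit: condition on the color of the glue vertex $t_1=s_2$, note that the colorings of $G_1$ and $G_2$ then become independent, sum over the $q$ choices of this glue color, and apply the normalization $Z^{\phi(t_1)=c}(G_1)=\tfrac{1}{q}Z(G_1)$ (also a consequence of color-symmetry).

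The two remaining identities, for $Z^{\mathrm{same}}(G_1\bowtie G_2)$ and $Z^{\mathrm{dif}}(G_1\bowtie G_2)$, require a case split on how the glue color $c_m$ relates to the terminal colors $c_s=\phi(s_1)$ and $c_t=\phi(t_2)$. For $Z^{\mathrm{same}}$, where $c_s=c_t$, the two sub-cases $c_m=c_s$ and $c_m\neq c_s$ contribute through $Z^{\mathrm{same}}\cdot Z^{\mathrm{same}}$ and $Z^{\mathrm{dif}}\cdot Z^{\mathrm{dif}}$ respectively, producing the stated two-term expression (which one recognizes as $Z(G_1\parallel G_2)$ via the parallel formulas just proved). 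For $Z^{\mathrm{dif}}$, where $c_s\neq c_t$, three disjoint sub-cases appear: $c_m=c_s$, $c_m=c_t$, and $c_m\notin\{c_s,c_t\}$; the first two give the two $\tfrac{1}{q}$ cross-terms, while the third has $q(q-1)(q-2)$ ordered color triples and each factor of $Z^{\mathrm{dif}}$ carries a $\tfrac{1}{q(q-1)}$ normalization, producing the coefficient $\tfrac{q-2}{q(q-1)}$. The main pitfall is keeping the ordered-color counts straight across these three sub-cases of the $Z^{\mathrm{dif}}$ formula; no clever idea is needed beyond careful bookkeeping.
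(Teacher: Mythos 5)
Your proposal is correct, and it is the standard argument (the paper itself does not prove the lemma but cites it from \cite{bencs2022location}, where the same combinatorial strategy---verify on positive integer $q$ via the coloring-sum representation, then interpolate---is used). All six identities are derived exactly as you describe, and the ordered-color bookkeeping in the three-case split for $Z^{\mathrm{dif}}(G_1\bowtie G_2)$ checks out: $q(q-1)$ ordered terminal pairs, with $1$, $1$, and $q-2$ choices of the glue color respectively, combined with the $\tfrac{1}{q}$ and $\tfrac{1}{q(q-1)}$ normalizations, produce $\tfrac{1}{q}$, $\tfrac{1}{q}$, and $\tfrac{q-2}{q(q-1)}$.

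One small imprecision worth tightening: you assert that ``both sides of each proposed identity are polynomials in $(q,y)$,'' but as written the right-hand sides carry factors $\tfrac{1}{q}$ and $\tfrac{1}{q(q-1)}$ and are therefore rational functions, not manifestly polynomials. The cleanest fix is to multiply each identity through by $q$ or $q(q-1)$ before applying the Zariski-density/interpolation argument, which makes both sides honest polynomials; this also clarifies that the identities should be read for $q\notin\{0,1\}$, or equivalently as identities of rational functions (as the paper does explicitly in the subsequent Lemma~\ref{lem:formulas effective}). Alternatively one may note that $q$ divides $Z^{\mathrm{same}}$ and $Z^{\mathrm{dif}}$, and $q-1$ divides $Z^{\mathrm{dif}}$, so the right-hand sides are in fact polynomial---but stating the interpolation for the cleared-denominator form is simpler and avoids needing these divisibility facts.
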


\begin{lemma}[\cite{bencs2022location}]\label{lem:formulas effective}
Let $G_1,G_2$ be two two-terminal graphs and let $y\in \mathbb{C}$. Then the following identities hold as rational functions:
\begin{align*}
y_{G_1\parallel G_2}&=y_{G_1}\cdot y_{G_2},
\\
f_q(y_{G_1\bowtie G_2})&=f_q(y_{G_1}) \cdot f_q(y_{G_2}).
 \end{align*}
Moreover, for any fixed $q_0,y_0\in \mathbb{C}$, if $\{y_{G_1}(q_0,y_0),y_{G_2}(q_0,y_0)\}\neq \{0,\infty\}$, then
\[
y_{G_1\parallel G_2}(q_0,y_0)=y_{G_1}(q_0,y_0)\cdot y_{G_2}(q_0,y_0),
\]
and if $ \{y_{G_1}(q_0,y_0),y_{G_2}(q_0,y_0)\} \neq \{1,1-q_0\}$ and $q_0\neq 0$, then
\[
f_{q_0}(y_{G_1\bowtie G_2}(q_0,y_0))=f_{q_0}(y_{G_1}(q_0,y_0))\cdot f_{q_0}(y_{G_2}(q_0,y_0)).
\]
\end{lemma}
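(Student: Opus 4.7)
The plan is to derive both identities by direct algebraic manipulation from the formulas in Lemma~\ref{lem:basic Zsame and Zdif}, working first at the level of rational functions in $q$ (where $Z^{\mathrm{same}}(G_i;q,y)$ and $Z^{\mathrm{dif}}(G_i;q,y)$ are treated as polynomial expressions, with $y$ acting like a parameter) and then separately discussing what the identities mean at a specific point $(q_0,y_0)$ where some of the quantities may vanish.

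For the parallel case, I would simply substitute the first two formulas of Lemma~\ref{lem:basic Zsame and Zdif} into the definition
\[
y_{G_1\parallel G_2}=(q-1)\,\frac{Z^{\mathrm{same}}(G_1\parallel G_2)}{Z^{\mathrm{dif}}(G_1\parallel G_2)}.
\]
The factors of $1/q$ and $1/q(q-1)$ cancel to produce $(q-1)^2\,Z^{\mathrm{same}}_1 Z^{\mathrm{same}}_2 / (Z^{\mathrm{dif}}_1 Z^{\mathrm{dif}}_2)=y_{G_1}y_{G_2}$, which proves the rational-function identity.

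For the series case, the cleaner route is to first rewrite $f_q(y_G)$ in a useful closed form. Starting from $y_G=(q-1)Z^{\mathrm{same}}/Z^{\mathrm{dif}}$ and $f_q(z)=1+q/(z-1)$, a short calculation gives
\[
f_q(y_G)=\frac{(q-1)\bigl(Z^{\mathrm{same}}+Z^{\mathrm{dif}}\bigr)}{(q-1)Z^{\mathrm{same}}-Z^{\mathrm{dif}}}=\frac{(q-1)Z(G)}{(q-1)Z^{\mathrm{same}}(G)-Z^{\mathrm{dif}}(G)}.
\]
Plugging in the series formulas for $Z$, $Z^{\mathrm{same}}$ and $Z^{\mathrm{dif}}$ from Lemma~\ref{lem:basic Zsame and Zdif} I expect the denominator $(q-1)Z^{\mathrm{same}}(G_1\bowtie G_2)-Z^{\mathrm{dif}}(G_1\bowtie G_2)$ to factor as $\tfrac{1}{q(q-1)}\bigl[(q-1)Z^{\mathrm{same}}_1-Z^{\mathrm{dif}}_1\bigr]\bigl[(q-1)Z^{\mathrm{same}}_2-Z^{\mathrm{dif}}_2\bigr]$, after which the numerator $(q-1)\cdot\tfrac{1}{q}Z_1 Z_2$ combines with this to yield exactly $f_q(y_{G_1})f_q(y_{G_2})$. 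The main checking is this factorization of the denominator; everything else is bookkeeping.

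For the pointwise statements at $(q_0,y_0)$, the only real obstacle is the possible appearance of indeterminate $0/0$ or $0\cdot\infty$ expressions when some $Z^{\mathrm{dif}}_i(q_0,y_0)$ or $Z^{\mathrm{same}}_i(q_0,y_0)$ vanishes. I would handle this by case analysis. For parallel composition, if both $y_{G_i}$ are finite and nonzero the rational-function identity evaluated at $(q_0,y_0)$ gives the claim. If exactly one $y_{G_i}$ equals $0$ (i.e.\ $Z^{\mathrm{same}}_i=0$) then $Z^{\mathrm{same}}(G_1\parallel G_2)=0$ via Lemma~\ref{lem:basic Zsame and Zdif} and both sides equal $0$; if exactly one equals $\infty$ the analogous argument with $Z^{\mathrm{dif}}$ gives $\infty$ on both sides; the excluded case $\{0,\infty\}$ is precisely the forbidden $0\cdot\infty$ ambiguity. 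For the series case one uses that $f_{q_0}$ sends $1\mapsto \infty$ and $1-q_0\mapsto 0$, so the excluded pair $\{1,1-q_0\}$ for the $y_{G_i}$ corresponds exactly to the pair $\{0,\infty\}$ for the $f_{q_0}(y_{G_i})$, and a parallel case analysis on whether $(q-1)Z^{\mathrm{same}}_i-Z^{\mathrm{dif}}_i$ or $Z_i$ vanishes closes the argument. The only subtlety I anticipate is being careful that the rewriting $f_q(y_G)=(q-1)Z/[(q-1)Z^{\mathrm{same}}-Z^{\mathrm{dif}}]$ remains valid, as a limit or by continuity, in the boundary cases; with the stated exclusions this should go through.
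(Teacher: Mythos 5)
Your algebraic verification is correct; I checked the two key computations. For the parallel case, the $1/q$ and $1/(q(q-1))$ prefactors indeed combine to give the extra factor of $(q-1)$, yielding $y_{G_1\parallel G_2}=y_{G_1}y_{G_2}$. For the series case, your rewriting $f_q(y_G)=(q-1)Z(G)/\bigl[(q-1)Z^{\mathrm{same}}(G)-Z^{\mathrm{dif}}(G)\bigr]$ is right, and the claimed factorization of the denominator does hold: substituting the series formulas from Lemma~\ref{lem:basic Zsame and Zdif}, the $Z^{\mathrm{dif}}_1 Z^{\mathrm{dif}}_2$ coefficient is $\tfrac{1}{q}-\tfrac{q-2}{q(q-1)}=\tfrac{1}{q(q-1)}$, and the whole expression is exactly $\tfrac{1}{q(q-1)}\bigl[(q-1)Z^{\mathrm{same}}_1-Z^{\mathrm{dif}}_1\bigr]\bigl[(q-1)Z^{\mathrm{same}}_2-Z^{\mathrm{dif}}_2\bigr]$, which combined with $(q-1)Z(G_1\bowtie G_2)=\tfrac{q-1}{q}Z_1Z_2$ gives $f_q(y_{G_1\bowtie G_2})=f_q(y_{G_1})f_q(y_{G_2})$. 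Your observation that $f_{q_0}$ maps $1\mapsto\infty$ and $1-q_0\mapsto 0$, so that the excluded pair $\{1,1-q_0\}$ is the $f_{q_0}$-preimage of $\{0,\infty\}$, is the right way to see why the two pointwise conditions are parallel.

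One thing worth noting: this lemma is quoted verbatim from~\cite{bencs2022location}, and the paper itself gives no proof, so there is no in-paper argument to compare against. Your proposal is a direct from-scratch verification using Lemma~\ref{lem:basic Zsame and Zdif}, which is essentially the only reasonable route. The only place I would tighten the write-up is the pointwise case analysis: when you argue that one side equals $0$ or $\infty$, you should explicitly record that the relevant denominator (e.g.\ $Z^{\mathrm{dif}}(G_1\parallel G_2)=\tfrac{1}{q(q-1)}Z^{\mathrm{dif}}_1Z^{\mathrm{dif}}_2$) is nonzero, which follows from the exclusion $\{y_{G_1},y_{G_2}\}\neq\{0,\infty\}$ together with the assumption that the individual $y_{G_i}(q_0,y_0)$ are well-defined points of $\hat{\mathbb{C}}$. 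As written you gesture at this but do not spell it out. This is a presentational, not a mathematical, gap.
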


\subsection{Representing algebraic numbers}\label{subsec:algebraic numbers}
We discuss here how we deal with the representation of algebraic numbers.

An algebraic number $a$ is by definition a complex number that is the zero of some polynomial with integer coefficients. The minimal polynomial of $a$ is the unique polynomial $p\in \mathbb{Z}[x]$ of smallest degree such that $p(a)=0$, whose coefficients have no common prime factors, and whose leading coefficient is positive when $a\neq 0$.
Following \cite{galanis2022a,GalanisGoldbergHerrera} we represent an algebraic number $a$ by its minimal polynomial together with an open rectangle in the complex plane (defined by two rational intervals parallel to the real and imaginary axis respectively) such that $a$ is the only zero of the polynomial in that rectangle. 
There is some ambiguity here, as many rectangles will do the job. However one can check whether two representations represent the same number, by checking if the polynomials are equal and checking if the two rectangles intersect and by counting the number of zeros in the intersection (using for example the algorithm of \cite{Wilfzero}).
When referring to an algebraic number we thus implicitly assume we have its minimal polynomial and a rectangle as above.
The \emph{size} of an algebraic number will thus be the number of bits needed to represent the minimal polynomial and the rectangle.  

It is described in~\cite{computingin} how to execute all basic operations, i.e., addition, subtraction, multiplication and inversion in terms of this representation.
It can be seen that these operations can be executed in polynomial time in the representation sizes by combining a result from~\cite{Mahlerrootdistance}, lower bounding the distance between roots of a polynomial, a result on using resultants to find polynomials with a prescribed zero~\cite{computingLoos}, the famous factoring algorithm of~\cite{LLLfactoring}, a result of~\cite{Mignotte} bounding the coefficients of factors of polynomials, and an algorithm of~\cite{Wilfzero} finding the number of zeros of a polynomial in a rectangle. 
It follows that we can compare and compute absolute values of algebraic numbers and test whether two algebraic numbers are equal in polynomial time in their representation size.

\section{Proof of the main results}\label{sec:proof}
In this section we state our main technical contributions, which we will prove in the subsequent sections. These results allow us to prove our main results at the end of this section.
Let us first define for algebraic numbers $q,y\in \mathbb{C}$ the more general computational problems \textsc{$(q,y)$-Planar-Abs-RC} and \textsc{$(q,y)$-Planar-Arg-RC}, which on input of a planar graph $G$ ask for an $0.25$-abs-approximation resp.\@ an $0.25$-arg-approximation to $Z(G;q,y)$. 
The problems \textsc{$q$-Planar-Abs-Chromatic} and \textsc{$q$-Planar-Arg-Chromatic} are the particular cases \textsc{$(q,0)$-Planar-Abs-RC} and \textsc{$(q,0)$-Planar-Arg-RC}.

\subsection{Telescoping}
For a graph $G$ and an edge $e$ of $G$ we denote by $G\setminus e$ the graph obtained from $e$ by removing the edge $e$ and by $G/e$ the graph obtained from $G$ by contracting the edge $e$
Recall that we allow multiple edges between two vertices and loops. 
We note here that the family of planar graphs is closed under deletion and contraction of edges.
We recall the well known deletion contraction recurrence for a graph $G$ and an edge $e$ of $G$ (contrary to the Tutte polynomial, this recurrence also holds when $e$ is a bridge or a loop):
\begin{equation}
Z(G;q,y)=Z(G\setminus e;q,y)+(y-1)Z(G/e;q,y). \label{eq:del-con}
\end{equation}
If one can determine for any graph $G$ and an edge $e\in E(G)$ one of the ratios $\frac{Z(G;q,y)}{Z(G/e;q,y)}$, or $\frac{Z(G;q,y)}{Z(G\setminus e;q,y)}$ exactly, then one can compute $Z(G;q,y)$ exactly by writing it as a telescoping product. 
A potential catch is that both $Z(G;q,y)$ and $Z(G/e;q,y)$ (and hence $Z(G\setminus e;q,y)$ by \ref{eq:del-con}) could equal zero.
Under suitable assumptions, our next result is able to deal with this issue.
In what follows we denote for a graph $H$ by $\size(H)$ the sum of the number of vertices of $H$ and the number of edges of $H$. 

\begin{theorem}\label{thm:allow telescope}
Let $q,y$ be fixed algebraic numbers, with $q\neq 0$.
Suppose that we have access to an algorithm that on input of a planar graph $G$ and an edge $e\in E(G)$ outputs an algebraic number $r$ and a number $b\in \{0,1\}$ in polynomial time in $\size(G)$ such that %for any edge $e$ of $G$,
\begin{itemize}
    \item[(1)] If $Z(G/e;q,y)\neq 0$, then $b=1$ and $r=\frac{Z(G;q,y)}{Z(G/e;q,y)}$;
    \item[(2)] if $Z(G/e;q,y)= 0$ and $Z(G\setminus e;q,y)\neq 0$, then $b=0$ and $r=1$;
    \item[(3)] if both $Z(G/e;q,y)$ and $Z(G\setminus e;q,y)$ are zero, then the algorithm may output any algebraic number $r$ and bit $b$.
    \end{itemize}
Then there is an algorithm to compute $Z(G;q,y)$ in polynomial time in $\size(G)$.    
\end{theorem}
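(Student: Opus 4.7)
The plan is to build a recursive telescoping algorithm $A$ that walks $G$ down to an edgeless graph while using the oracle to supply, at each step, either a multiplicative correction or a free pass. Explicitly: if $G$ has no edges, output $q^{|V(G)|}$; otherwise pick any edge $e$, invoke the oracle on $(G,e)$ to obtain $(r,b)$, and return $r\cdot A(G/e)$ when $b=1$ and $A(G\setminus e)$ when $b=0$. The correctness claim to prove by induction on $|E(G)|$ is that $A(G)=Z(G;q,y)$ in all cases, and I would split the inductive step according to whether $Z(G;q,y)$ vanishes.

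When $Z(G;q,y)\neq 0$ the argument is direct: the deletion--contraction identity~\eqref{eq:del-con} forces us out of the third, ``adversarial'' branch of the oracle's specification, and the first two branches are designed so that $r\cdot Z(G/e;q,y)=Z(G;q,y)$ if $b=1$ and $Z(G\setminus e;q,y)=Z(G;q,y)$ if $b=0$. In either situation, the induction hypothesis applied to the smaller graph yields $A(G)=Z(G;q,y)$ immediately.

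The main subtle point, and the place I expect most of the work to go, is the case $Z(G;q,y)=0$, where the oracle is entitled to return garbage. Here I would argue that regardless of what is returned, every recursive call is made on a graph whose partition function at $(q,y)$ also vanishes, so the induction hypothesis gives $A(G)=0=Z(G;q,y)$. Indeed, if $Z(G/e;q,y)\neq 0$ the oracle (honest in that branch) must return $r=0$, so $r\cdot A(G/e)=0$; the subcase $Z(G/e;q,y)=0$ and $Z(G\setminus e;q,y)\neq 0$ is ruled out by~\eqref{eq:del-con} since it would force $Z(G;q,y)\neq 0$; and if both $Z(G/e;q,y)$ and $Z(G\setminus e;q,y)$ vanish, then whichever of $G/e$ or $G\setminus e$ we recurse on, the induction hypothesis returns $0$. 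Thus the adversarial oracle cannot escape a chain of vanishing minors, and the algorithm remains pinned to $0$ on the whole chain.

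For the running time, observe that the recursion depth is exactly $|E(G)|$, with one oracle call and at most one multiplication of algebraic numbers per level. All intermediate quantities lie in the fixed number field $\mathbb{Q}(q,y)$ and have polynomial bit-size: the oracle guarantees this for each $r$ it outputs, and every intermediate value of $A$ is either $0$ or equal to $Z(H;q,y)$ for a minor $H$ of $G$, which has polynomial bit-size in $\size(G)$ because it is a polynomial in $q$ and $y$ of degree at most $|V(G)|$ and $|E(G)|$ with integer coefficients of bit-size polynomial in $|E(G)|$. Combining this with the polynomial-time arithmetic in $\mathbb{Q}(q,y)$ recalled in Subsection~\ref{subsec:algebraic numbers}, the overall algorithm runs in polynomial time in $\size(G)$.
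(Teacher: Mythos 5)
Your proposal is correct and follows essentially the same approach as the paper: both build a deletion/contraction chain from $G$ down to an edgeless graph guided by the oracle's bit $b$, telescope the ratios $r$, and handle the case $Z(G;q,y)=0$ by observing that along the chain one must eventually hit a nonzero partition function, which forces a factor $r=0$. The only difference is presentational — you phrase the walk as a recursion with induction on $|E(G)|$, while the paper builds the chain $G_0,\ldots,G_m$ explicitly and then analyzes the product $q^n\prod_i r_i$ after the fact.
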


\begin{proof}%[Proof of Theorem~\ref{thm:allow telescope}]
We construct a sequence of planar graphs $G_0,\ldots,G_m$, where $G_m$ is a graph with no edges, as follows.
We let $G_0=G$.
Now for $i\geq 0$, we apply the assumed algorithm to the graph $G_i$ and an edge $e_i$ of $G_i$.
Assume the algorithm outputs the pair $(r_i,b_i)$.
If $b_i=1$ we set $G_{i+1}=G_i/e_i$ and if $b_i=0$ we set $G_{i+1}=G_i\setminus e_i$.
Let $n$ denote the number of vertices of $G_m$.
The output of our algorithm will be the number $q^{n}\prod_{i=0}^{m-1} r_i$.
Since $m$ is at most the number of edges of $G$ and since $\size(G_i)\leq \size(G_0)$ for each $i$, this clearly takes polynomial time in $\size(G)$ to compute.

What remains to show is that 
\begin{equation}\label{eq:Z is product}
    Z(G;q,y)=q^{n}\prod_{i=0}^{m-1} r_i.
\end{equation}
To prove~\ref{eq:Z is product}, let us first assume that $Z(G_0;q,y)=0$. 
Then we do not know whether or not we can trust the output of the algorithm, as possibly both $Z(G/e_0;q,y)$ and $Z(G\setminus e_0;q,y)$ could be zero.
However, there is a smallest index $i$ such that $Z(G_i;q,y)\neq 0$ (since $Z(G_m;q,y)=q^n\neq 0$).
In this case we know that $Z(G_{i-1};q,y)=0$ and hence $b_{i-1}=1, G_i=G_{i-1}/e_i$ and thus $r_{i-1}=0$ as well.
Therefore~\ref{eq:Z is product} holds in this case.

Let us next assume that $Z(G_0;q,y)\neq 0$. In that case one of the two values $Z(G/e_0;q,y)$, $Z(G\setminus e_0;q,y)$ must be non-zero by~\ref{eq:del-con}.
This means we are in case (1) or (2) and hence if $b=1$, $G_1=G_0/e_0$, and it holds that $Z(G_1;q,y)\neq 0$ while if $b=0$, $G_1=G_0\setminus e_0$, and it holds that $Z(G_1;q,y)\neq 0$.
By induction it follows that for all $i$, $Z(G_i,q,y)\neq 0$ and thus $r_i\neq 0$ for all $i$.
%Therefore, $Z(G;q)=0$ if and only if $r_i=0$ for some index $i$.
Next observe that
\[
r_i=\frac{Z(G_i;q,y)}{Z(G_{i+1};q,y)}.
\]
Indeed, if $b_i=1$, then this follows by construction and if $b_i=0$ we have $Z(G_{i+1};q,y)=Z(G_i\setminus e_i;q,y)=Z(G_i;q,y)$ by~\ref{eq:del-con} since $Z(G_i/e_i;q,y)=0$.
Therefore, 
\[
\frac{Z(G;q,y)}{q^n}=\frac{Z(G_0;q,y)}{Z(G_1;q,y)}\cdot \frac{Z(G_1;q,y)}{Z(G_2;q,y)}\cdots  \frac{Z(G_{m-1};q,y)}{Z(G_m;q,y)}=\prod_{i=0}^{m-1}r_i.
\]
This finishes the proof.
\end{proof}

This result and its proof are fairly general and do not really rely on specific properties of the partition function of the random cluster model, but could equally well be applied to other polynomials and partition functions that satisfy some recurrence relation such as the independence polynomial for example.

\subsection{Implementing gadgets}
In the previous subsection we indicated that if one can test whether $Z(G/e;q,y)$ is zero or not and compute the ratios $\frac{Z(G;q,y)}{Z(G/e;q,y)}$ exactly this gives rise to an algorithm to exactly compute $Z(G;q,y)$.

The idea from~\cite{GoldbergJerrumsign} for real valued parameters is to use approximations to $\hat{y}Z(G/e;q,y)+Z(G;q,y)$ to steer a binary search procedure to obtain a very precise approximation to the value $y^*$ which satisfies $y^*Z(G/e;q,y)+Z(G;q,y)=0$, or in other words the ratio $-\frac{Z(G;q,y)}{Z(G/e;q,y)}$.
The following result describes the outcome of our box shrinking procedure, which extends this binary search procedure to complex parameters and moreover simplifies it.
(It may be helpful to think of $A=Z(G/e;q,y)$ and $B=Z(G;q,y)$ in the statement of the theorem below.)

For $r>0$ and $m\in \mathbb{C}$ we denote $B(m,r)=\{z\in\C \mid |z-m|<r\}$ and $B_\infty(m,r)=\{z\in \mathbb{C}\mid |\Re(z-m)|<r, |\Im(z-m)|<r\}$. Note that we can view $B_\infty$ as an open ball of radius $r$ centered at $m$ in the $\ell^\infty$-metric on $\mathbb{R}^2=\mathbb{C}.$
\begin{theorem}\label{thm:box shrinking}
Let $A,B$ be complex numbers and let $C>0$ be a rational number such that 
$|A|$ and $|B|$ are both at most $C$, and both are either $0$ or at least $1/C$.
Assume one of the following: 
\begin{itemize}
\item there exists a $\poly(\size(y_0,\eps))$-time algorithm to compute on input of $y_0\in\Q[i]$ and a rational number $\epsilon>0$ an $0.25$-abs-approximation of $A\hat{y}+B$ for some algebraic number $\hat{y} \in B(y_0,\eps)$, or,
\item there exists a $\poly(\size(y_0,\eps))$-time algorithm to compute on input of $y_0\in\Q[i]$ and a rational number $\epsilon>0$ an $0.25$-arg-approximation of $A\hat{y}+B$ for some algebraic number $\hat{y}\in B(y_0,\epsilon)$.
\end{itemize}
Then there exists an algorithm that on input of a rational $\delta>0$ and $C>0$ as above that outputs ``$A=0$'' when $A=0$ and $B\neq 0$, and that outputs ``$A\neq 0$'' and a number $\bar{y}\in \mathbb{Q}[i]$ such that $-B/A\in B_\infty(\bar{y},\delta/2)$ when $A\neq 0$. 
When $A=B=0$ it is allowed to output anything. 
The running time is $\poly(\size(C,\delta))$.
\end{theorem}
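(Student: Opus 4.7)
The plan is to design an iterative box-shrinking procedure. When $A\neq 0$ we have $|y^*|=|B|/|A|\le C^2$, so we begin with $\mathcal B_0=B_\infty(0,C^2)$ and maintain the invariant that the current box $\mathcal B_k$ contains $y^*$. Each iteration will shrink $\mathcal B_k$ by an absolute factor $\rho<1$ using $O(1)$ oracle calls, with the query tolerance $\epsilon$ chosen polynomially small in $\delta/C$ so that the adversarial perturbation $\hat y\in B(y_0,\epsilon)$ returned by the oracle is negligible relative to the current box side. After $O(\log((C^2{+}1)/\delta))$ iterations the box has side at most $\delta$, and its rational centre is the required $\bar y$.

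For the absolute-value oracle, the shrinking step exploits $|A\hat y + B|=|A|\cdot|\hat y - y^*|$. With $\epsilon$ as above, each returned value $\rho(y_0)$ satisfies $\rho(y_0)/(|A|\cdot|y_0-y^*|)\in[e^{-1/4},e^{1/4}]$ up to negligible additive error, so for any two queries $y_i, y_j$,
\[
\rho(y_i)/\rho(y_j)\in [e^{-1/2}, e^{1/2}]\cdot |y_i-y^*|/|y_j-y^*|.
\]
Whenever $\rho(y_i)/\rho(y_j)\ge e^{1/2}$ this certifies $|y_i - y^*|\ge |y_j - y^*|$, placing $y^*$ on the $y_j$-side of the perpendicular bisector of the pair. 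Using a constant number of test points placed carefully around the centre $m$ of $\mathcal B_k$ (for instance, axis-aligned offsets $m\pm r/2$ and $m\pm i r/2$, supplemented by diagonal or larger-scale points as needed), the resulting half-plane constraints restrict $y^*$ to a region contained in a box of side at most $\rho r$ with $\rho<1$ a universal constant.

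For the argument oracle the step is analogous, using $\arg(Ay+B)=\arg(A)+\arg(y-y^*)$: differences of oracle outputs at two query points are $1/2$-approximations of the signed angle at $y^*$ subtended by that pair, and by the inscribed-angle theorem each such estimate confines $y^*$ to a known circular arc. Intersecting $O(1)$ such arcs again yields a sub-box of side at most $\rho r$. To decide whether $A=0$, we prepend a preliminary test: query the oracle at $y_0=0$ and $y_1=2C^2+1$. When $A\neq 0$, $|Ay_1+B|\ge (2C^2{+}1)/C-C\ge C+1/C$, while $|Ay_0+B|=|B|\le C$, and the abs outputs are separated by more than a factor of $e^{1/2}$; an analogous calculation handles the arg oracle. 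If the outputs are not detectably separated, we output ``$A=0$''; otherwise we launch the box-shrinking and report the centre $\bar y$ of the final box.

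The principal obstacle is the shrinking lemma itself: showing that a constant-size query scheme yields a universal contraction factor regardless of where $y^*$ sits in the current box. The delicate case is $y^*$ near the centre of $\mathcal B_k$, where all ratios $\rho(y_i)/\rho(y_j)$ lie in $[e^{-1/2}, e^{1/2}]$ and no single comparison is individually decisive; the argument there is that even these weak inequalities, taken jointly across $O(1)$ carefully placed query pairs, rule out the outer annulus of $\mathcal B_k$ and hence force $y^*$ into a concentric sub-box of side $\rho r$. Once this combinatorial lemma is established, the overall algorithm and its polynomial running time in $\size(C, \delta)$ follow routinely.
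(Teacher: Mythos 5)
Your high-level strategy — iterate a constant-factor box-shrinking step driven by $O(1)$ oracle calls, with a preliminary test to decide $A=0$ — matches the paper. But the proposal has a genuine gap precisely at what you yourself identify as the ``principal obstacle'': you do not prove the shrinking lemma, and the way you sketch it would not actually work.

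Concretely, you suggest querying at axis-aligned offsets $m\pm r/2$ and $m\pm ir/2$ \emph{inside} the current box and using half-plane constraints of the form ``if $\rho(y_i)/\rho(y_j)\ge e^{1/2}$ then $y^*$ is on the $y_j$-side of the perpendicular bisector.'' This rule is conditional: when all ratios land in $[e^{-1/2},e^{1/2}]$ it yields no constraint at all, and your claim that ``weak inequalities taken jointly rule out the outer annulus'' is asserted, not shown. Moreover the interior test points are too close together for the comparison to be decisive in the regime you need. Take the box $B_\infty(m,D)$, test points at $m\pm D/2$, and try to remove the strip $\Re(y-m)\in[\tfrac34 D, D]$: for $y^*$ at the corner $m+\tfrac34 D+iD$ one computes $|y_1-y^*|^2/|y_2-y^*|^2=(25/16+1)/(1/16+1)=41/17\approx 2.4<e$, so the $0.25$-abs-oracle can return $\tilde f(\hat y_1)\le\tilde f(\hat y_2)$ and the deterministic rule throws away $y^*$. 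The paper circumvents this by placing the test points \emph{outside} the box at $m\pm\tfrac54 D$ (with query tolerance $\epsilon=0.1D$), removing only a strip of width $\tfrac14 D$, and then proving (Lemma~\ref{lem:box shrinking does not throw away y*}) that if $y^*$ were in the removed strip the squared distance ratio would exceed $3>e$, so the unconditional ``remove the far strip'' rule is always safe. No intersection of multiple conditional half-planes is ever needed. A similar explicit computation (via a cotangent identity) handles the arg case.

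There is also a smaller arithmetic error in your $A=0$ test. With $y_1=2C^2+1$ you get $|Ay_1+B|\ge C+1/C$ when $A\neq 0$, versus $|Ay_1+B|=|B|\le C$ when $A=0$. The gap $(C+1/C)/C=1+1/C^2$ is well below $e^{1/2}$ for moderate $C$, so the $0.25$-abs-oracle cannot reliably distinguish the two cases. You need a larger query point; the paper uses $|\hat y|>5C^2$, which yields $|f(\hat y)|>4C$ when $A\neq 0$ and $\le C$ when $A=0$, so comparison against the threshold $2C$ works after a $\pm 0.25$-multiplicative error.
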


We will prove~\ref{thm:box shrinking} in~\ref{sec:complex binary search}.
To utilize it we must be able to generate for any given value $y_0$ a number $\hat{y}$ that approximates $y_0$ with arbitrary precision in a way that we can approximate the norm or argument of $\hat{y}Z(G/e;q,y)+Z(G;q,y)$ efficiently.
The idea, going back to~\cite{GoldbergGuocomplexityIsing-Tutte}, is to use certain series-parallel gadgets for this task.

For fixed values of $q,y$, we call a series-parallel graph $H$ a \emph{series-parallel gadget for $(q,y)$} if $Z^\mathrm{dif}(H;q,y)\neq0$. %(Note that any series-parallel graph is a gadget for all but finitely many $q$.)
Let $G$ be a graph with a designated edge $e=\{u,v\}$, and let $H$ be a series-parallel gadget for $(q,y)$.
Then we construct the graph $G'$ obtained from $G$ and $H$ by removing the edge $e$ from $G$ and identifying the start vertex of $H$ with $u$ and the terminal vertex with $v$. Note that by flipping $u$ and $v$ this may result in a different graph. We call any such graph an \emph{implementation of $H$ in $G$ on $e$}.
See~\ref{fig:sp} for an illustration.
%Let $G'$ be a graph obtained from $G$ by replacing the edge $e$ by the graph $H$.

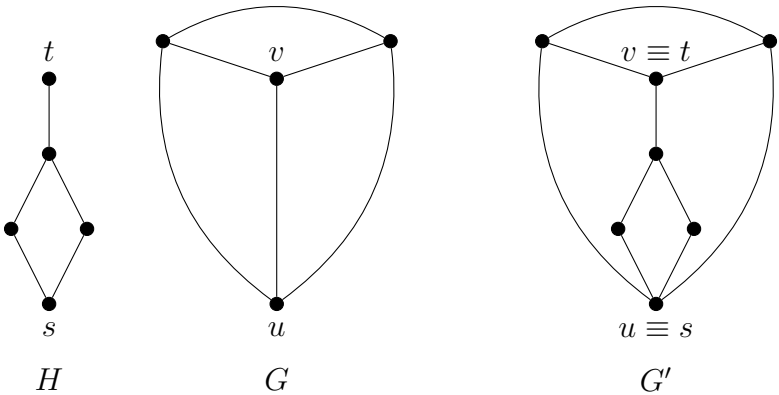
\begin{figure}[h!]
    \centering
     \begin{tikzpicture}
           \node[vertex,label=below:$s$] (u) at (0,0) {};
           \node (captionH) at ($(u)+(0,-1)$) {$H$};
           \node[vertex,label=above:$t$] (v) at (0,3) {};
           \node[vertex] (h1) at (-0.5,1) {};
           \node[vertex] (h2) at (0.5,1) {};
           \node[vertex] (h3) at (0,2) {};
           
           \draw (u) -- (h1);
           \draw (u) -- (h2);
           \draw (h1) -- (h3);
           \draw (h2) -- (h3);
           \draw (v) -- (h3);
           
           \node (trans0) at (3,0) {};

           \node[vertex,label=below:$u$] (g1) at ($(u)+(trans0)$) {};
           \node[vertex,label=above:$v$] (g2) at ($(v)+(trans0)$) {};
           \node[vertex] (g3) at ($(-1.5,3.5)+(trans0)$) {};
           \node[vertex] (g4) at ($(1.5,3.5)+(trans0)$) {};
           
           \node (captionG) at ($(captionH)+(trans0)$) {$G$};
           
           \draw (g1) -- (g2);
           \draw (g1) to[bend left] (g3);
           \draw (g1) to[bend right] (g4);
           \draw (g2) -- (g3);
           \draw (g2) -- (g4);
           \draw (g3) to[bend left] (g4);
           
           \node (trans) at (5,0) {};
           \node (trans2) at ($(trans)+(trans0)$) {};
           
           \node[vertex,label=below:$u \equiv s$] (gg1) at ($(g1)+(trans)$) {};
           \node[vertex,label=above:$v\equiv t$] (gg2) at ($(g2)+(trans)$) {};
           \node[vertex] (gg3) at ($(g3)+(trans)$) {};
           \node[vertex] (gg4) at ($(g4)+(trans)$) {};
           \node[vertex] (gh1) at ($(h1)+(trans2)$) {};
           \node[vertex] (gh2) at ($(h2)+(trans2)$) {};
           \node[vertex] (gh3) at ($(h3)+(trans2)$) {};
          \node (captionGH) at ($(captionH)+(trans0)+(trans)$) {$G'$};

           \draw (gg1) -- (gh1);
           \draw (gg1) -- (gh2);
           \draw (gh1) -- (gh3);
           \draw (gh2) -- (gh3);
           \draw (gg2) -- (gh3);
           \draw (gg1) to[bend left] (gg3);
           \draw (gg1) to[bend right] (gg4);
           \draw (gg2) -- (gg3);
           \draw (gg2) -- (gg4);
           \draw (gg3) to[bend left] (gg4);

     \end{tikzpicture}
    \caption{Implementation of $H$ in $G$ on the edge $\{u,v\}$.}
    \label{fig:sp}
\end{figure}

The next result says that with access to an algorithm that efficiently solves \textsc{$(q,y)$-Planar-Arg-RC} or \textsc{$(q,y)$-Planar-Abs-RC}, we can efficiently approximate $\hat{y}Z(G/e;q,y)+Z(G;q,y)$ for any value $\hat{y}+y$ that is the effective edge interaction of a series-parallel gadget.
\begin{lemma}\label{lem:implementing gadget alg}
%Given a series-parallel gadget $H$ with $y_H(q)=y$, and a 
Let $(q,y)\in\mathbb{C}^2$ be algebraic numbers, such that $q\not\in\{0,1\}$.
Assume there exists a $\poly(\size(G))$-time algorithm to compute an $0.25$-abs-approximation (resp.\@ $0.25$-arg-approximation) of $Z(G;q,y)$ for planar graphs. 
Then there exists a $\poly(\size(G,H))$-time algorithm to find an $0.25$-abs-approximation (resp.\@ $0.25$-arg-approximation) of $(y_H(q,y)-y)Z(G/e;q,y)+Z(G;q,y)$ on input of any planar graph $G$, edge $e$ of $G$, series-parallel gadget $H$ and the number $Z^\mathrm{dif}(H;q,y)$.
\end{lemma}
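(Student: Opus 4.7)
The plan is to reduce the target quantity to a single evaluation of $Z(\cdot;q,y)$ on a planar graph, namely $G'$, the implementation of $H$ in $G$ on $e$ (planar because a series-parallel graph embeds with both terminals on the outer face, so it can be inserted in place of $e$ inside a face of a planar embedding of $G$). The key is the identity
\[
Z(G';q,y) = \tfrac{Z^{\mathrm{dif}}(H;q,y)}{q(q-1)}\bigl[(y_H(q,y)-y)Z(G/e;q,y) + Z(G;q,y)\bigr],
\]
which for positive integer $q$ I would obtain from the Potts form of $Z$ by splitting the sum over colorings of $V(G')$ according to whether the two terminals of the copy of $H$ receive the same color (contributing $Z^{\mathrm{same}}(H;q,y)/q$ per coloring of $V(G)$ with $\phi(u)=\phi(v)$) or different colors (contributing $Z^{\mathrm{dif}}(H;q,y)/(q(q-1))$ per coloring with $\phi(u)\neq\phi(v)$), and then using \eqref{eq:del-con} together with the definition of $y_H$. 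Since both sides are polynomials in $q$, the identity extends to all algebraic $q\neq 0,1$. Setting $\xi := (y_H-y)Z(G/e;q,y)+Z(G;q,y)$ and $\lambda := q(q-1)/Z^{\mathrm{dif}}(H;q,y)$ (an explicitly computable nonzero algebraic number given the inputs), the identity becomes $\xi = \lambda\,Z(G';q,y)$.

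For the abs-approximation case I would then amplify: call the oracle on two disjoint copies of $G'$ (planar, with partition function $Z(G';q,y)^2$) to obtain a rational $r$ with $r/|Z(G';q,y)|^2\in[e^{-0.25},e^{0.25}]$, so that $|\lambda|\sqrt{r}$ approximates $|\xi|$ within a factor $e^{\pm 0.125}$; rounding this algebraic number to a rational within a multiplicative factor $e^{0.1}$ (feasible in polynomial time since algebraic numbers admit polynomial-time rational approximations to any desired precision) yields a $0.225$-abs-approximation of $\xi$. For the arg-approximation case I would call the oracle twice: once on $G'$ for a coarse $0.25$-arg-approximation $a_0$ of $\arg Z(G';q,y)$, and once on $k=10$ disjoint copies for a $0.25$-arg-approximation $a$ of $k\arg Z(G';q,y)$. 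Since $2\pi/k>2(0.25+0.25/k)$, there is a unique $m\in\{0,\ldots,k-1\}$ for which $a/k+2\pi m/k$ lies within $0.25+0.25/k$ of $a_0$ modulo $2\pi$, and this branch selection recovers $\arg Z(G';q,y)$ to precision $0.25/k=0.025$; adding a rational $0.22$-approximation of $\arg\lambda$ produces a $0.245$-arg-approximation of $\arg\xi$.

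The main obstacle is that the correction factor $\lambda$ is generically irrational, so naively post-multiplying a $0.25$-precision oracle output by a rational approximation of $\lambda$ would exceed the $0.25$ error budget. Amplifying the oracle's precision via disjoint-copy tensoring (and, in the arg case, disambiguating the $k$-th root via an additional coarse oracle call) creates the slack needed to absorb the rational rounding of $|\lambda|$ or $\arg\lambda$ and still deliver a genuine $0.25$-precision approximation.
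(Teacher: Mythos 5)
Your proof is correct and rests on the same key reduction the paper uses: regard the endpoints of $e$ as terminals of $G$, form the implementation $G'$ (planar), and use Lemma~\ref{lem:basic Zsame and Zdif} plus the deletion--contraction identity~\eqref{eq:del-con} to obtain
\[
Z(G';q,y)=\tfrac{Z^{\mathrm{dif}}(H;q,y)}{q(q-1)}\bigl[(y_H(q,y)-y)Z(G/e;q,y)+Z(G;q,y)\bigr],
\]
then call the oracle on $G'$ and correct by the known factor $\lambda=q(q-1)/Z^{\mathrm{dif}}(H;q,y)$. Where you diverge from the paper is in how that correction is applied. The paper simply multiplies the oracle's rational output by $|\lambda|$ (resp.\ adds $\arg\lambda$) and declares the result the desired approximation; since $\lambda$ is generically an irrational algebraic number, the result is not literally a rational, so it does not strictly satisfy the definition of an $\eta$-abs/arg-approximation. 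This is a harmless informality, because the only downstream consumer of this lemma (the box-shrinking procedure of Theorem~\ref{thm:box shrinking}) needs only to \emph{compare} these approximate values, which algebraic outputs (or rationals plus a fixed common offset $\arg\lambda$) permit in polynomial time. Your version instead amplifies the oracle by evaluating on disjoint copies of $G'$ (and, in the argument case, a coarse call plus disambiguation among the $k$-th roots) to build slack in the $0.25$ budget before rationally rounding $|\lambda|$ (resp.\ $\arg\lambda$), which delivers a genuinely rational approximation within budget. This closes the small formality gap, at the cost of being more elaborate than the paper's one-line post-processing; both routes are valid. One detail you glossed over in the arg case: for $m\neq 0$ the quantity $a/k+2\pi m/k$ is itself irrational, so $2\pi m/k$ also needs rational rounding, but your remaining margin ($0.245<0.25$) absorbs this.
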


\begin{proof}
We will see $G$ as a two-terminal graph, with the endpoints of $e$ being the terminals.
In this case by definition, we have that
\begin{align*}
 Z^\mathrm{dif}(G\setminus e;q,y)&=Z^\mathrm{dif}(G;q,y)\\
 &=Z(G;q,y)-Z^\mathrm{same}(G;q,y) \\
 &= Z(G;q,y)-y Z(G/e;q,y).
\end{align*}
Let $G'$ denote an implementation of $H$ in $G$ on $e$ and observe that $G'$ is planar. 
We compute using~\ref{lem:basic Zsame and Zdif},
\begin{align*}
    Z(G';q,y) &= \tfrac{1}{q}Z^\mathrm{same}(G\setminus e;q,y) Z^\mathrm{same}(H;q,y) 
    \\& \tfrac{1}{q(q-1)}Z^\mathrm{dif}(G\setminus e;q,y)Z^\mathrm{dif}(H;q,y)\\
    &=\tfrac{Z^\mathrm{dif}(H;q,y)}{q(q-1)}\left((q-1)\tfrac{Z^\mathrm{same}(H;q,y)}{Z^\mathrm{dif}(H;q,y)} Z(G/e;q,y)\right.
    \\&+Z(G;q,y)- yZ(G/e;q,y)\big)\\
    &=\tfrac{Z^\mathrm{dif}(H;q,y)}{q(q-1)}\left((y_H(q,y)-y) Z(G/e;q,y) +Z(G;q,y)  \right).
\end{align*}
As $H$ is a gadget we know that $Z^{\mathrm{dif}}(H;q,y)\neq 0$, and we have its exact value as input of the algorithm.
%and since it is series-parallel we can compute  $Z^{\mathrm{dif}}(H;q,y)$ exactly in time polynomial in $|H|$ (for example by constructing its series-parallel decomposition~\cite{RecognizeSP} in combination with Lemma~3 from~\cite{bencs2022location}).

We then use the assumed algorithm to compute an $0.25$-abs-approximation of $Z(G';q,y)$. 
Multiplying the output by $\left|\frac{q(q-1)}{Z^{\mathrm{dif}}(H;q,y)}\right|$ produces an $0.25$-abs-approximation of \[(y_H(q,y)-y) Z(G/e;q,y) +Z(G;q,y).\] Because $\size(G')=\size(G)+\size(H)-3$ this will be done in $\poly(\size(G,H))$-time, as desired.

Similarly, if we use the assumed algorithm to compute an $0.25$-arg-approximation of $Z(G';q,y)$ and add $\arg \left(\frac{q(q-1)}{Z^{\mathrm{dif}}(H;q,y)}\right)$ to the output, we obtain an $0.25$-arg-approximation of \[(y_H(q,y)-y) Z(G/e;q,y) +Z(G;q,y)\] in $\poly(\size(G,H))$-time.
\end{proof}

In order to obtain a desired algorithm for~\ref{thm:box shrinking} using~\ref{lem:implementing gadget alg} we need to be able to quickly approximate any desired value $y_0$ with effective edge interactions of gadgets. This final ingredient is~\ref{thm:gadget finding}, which will be proved in~\ref{sec:constructing gadgets}.

To do this precisely, we limit ourselves to a slightly smaller family of series-parallel graphs.
Define $\mathcal{H}_{q,y}^*$ to be the family of all series-parallel graphs $H$ that satisfy $y_H(q,y)\not\in \{1,\infty\}$ and $Z^\mathrm{dif}(H;q,y)\neq 0$.

\begin{theorem}\label{thm:gadget finding}
Let $(q,y)\in \mathbb{C}^2\setminus \mathbb{R}^2$ be algebraic numbers such that $q\not\in\{0,1\}$.
Assume there exists $G\in \mathcal{H}^*_{q,y}$ for which $y_G(q,y)$ or $f_q(y_G(q,y))$ is finite and has absolute value strictly bigger than $1$.
There exists an algorithm that for every $y_0\in\mathbb{Q}[i]$ and rational $\epsilon>0$ outputs a series-parallel graph $H\in \mathcal{H}_{q,y}^*$ such that $y_H(q,y) \in B(y_0,\epsilon)$, and outputs the number $Z^\mathrm{dif}(H;q,y)$.
Both the running time of the algorithm and the size of the graph are $\poly(\size(\eps,y_0))$. 
\end{theorem}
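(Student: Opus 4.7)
The plan is to leverage the compositional rules in Lemma~\ref{lem:formulas effective} to generate a multiplicative semigroup of achievable effective edge interactions which, thanks to the non-reality of $(q,y)$, is dense in $\mathbb{C}$, and then quantify the density so that an $\varepsilon$-approximation can be produced by a gadget of size $\poly(\log 1/\varepsilon)$.

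\textbf{Reduction and basic setup.} The assumption provides a gadget $G$ with either $|y_G(q,y)|>1$ or $|f_q(y_G(q,y))|>1$. Since $f_q$ is an involution swapping the roles of the two compositions in Lemma~\ref{lem:formulas effective} (multiplication of $y_H$ under $\parallel$ versus multiplication of $f_q(y_H)$ under $\bowtie$), the two cases are symmetric and I would assume without loss of generality that $a\coloneqq y_G(q,y)$ satisfies $|a|>1$. Set $b\coloneqq f_q(a)$.

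\textbf{Two-parameter family of gadgets.} Parallel composition of $n$ copies of $G$ realizes $a^n$, and series composition of $m$ copies realizes $f_q(b^m)$. Combining these with further parallel compositions, the graph $H(j,k,m)$ obtained by placing $j$ copies of $G$ and $k$ copies of $G^{\bowtie m}$ in parallel has effective edge interaction
\[
y_{H(j,k,m)}\;=\;a^{j}\cdot f_q(b^{m})^{k}.
\]
By further iterating (e.g.\ series composition of such parallel bundles), one generates a rich family $\mathcal{S}\subseteq\mathbb{C}$ whose logarithms form an additive semigroup generated by $\log a$ and $\log f_q(b^m)$ for varying $m$. The non-reality of $(q,y)$ is exactly what guarantees that for some choice of $m$ the pair $\{\log a,\,\log f_q(b^m)\}$ is $\mathbb{R}$-linearly independent modulo $2\pi i\mathbb{Z}$: if they were dependent for all $m$ one could deduce that the Möbius orbit of $a$ is confined to a real one-dimensional set, contradicting the non-reality of the parameters. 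Linear independence implies density of $\mathcal{S}$ in a neighborhood of $\infty$, and by multiplying through by $a^{-n}$ (realized by a further series-parallel composition that inverts a power of $a$, using that $a$ can be inverted up to a Möbius transformation by series composition), density throughout $\mathbb{C}$.

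\textbf{Efficient approximation via LLL.} To achieve an $\varepsilon$-approximation of a target $\log y_0$ by an integer combination $j\log a + k\log f_q(b^m)\pmod{2\pi i}$ in $\poly(\log 1/\varepsilon)$ time, I would invoke the LLL algorithm on a lattice built from sufficiently precise rational approximations of $\log a$, $\log b$ and $2\pi i$. Standard quantitative simultaneous Diophantine approximation (used analogously in~\cite{BGGSindsetcomplexplane,GalanisGoldbergHerrera}) then yields integers $j,k,m$ of magnitude $\poly(\log 1/\varepsilon,\size(y_0))$, so the resulting series-parallel graph $H$ has size $\poly(\size(y_0,\varepsilon))$. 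Once $H$ is constructed, the value $Z^{\mathrm{dif}}(H;q,y)$ is computed exactly by induction on the series-parallel decomposition, maintaining both $Z^{\mathrm{same}}$ and $Z^{\mathrm{dif}}$ throughout using the recursions of Lemma~\ref{lem:basic Zsame and Zdif}.

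\textbf{Main obstacle and remaining technicalities.} The hardest part is the quantitative density statement: one must exhibit, for every $y_0$ and $\varepsilon$, a lattice combination whose integer coordinates are polynomially bounded in $\log 1/\varepsilon$, while keeping the underlying logarithms computable to sufficient precision; this is where the LLL machinery and careful bookkeeping of bit-precision are needed. A secondary obstacle is ensuring $H\in\mathcal{H}^*_{q,y}$, i.e.\ that the construction does not accidentally produce $y_H\in\{1,\infty\}$ or $Z^{\mathrm{dif}}(H;q,y)=0$; since these forbidden values form a discrete set, I would handle them by a small perturbation (e.g.\ adding one extra parallel copy of $G$ and re-running the approximation with tolerance $\varepsilon/2$), which preserves the polynomial size bound.
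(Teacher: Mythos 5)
Your proposal diverges from the paper's argument in a way that introduces a genuine gap at the density step, and a second subtler problem later.

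\textbf{The density argument fails because you only have a semigroup, not a group.} Your family $\mathcal{S}$ is generated by taking products $a^{j}\cdot f_q(b^m)^k$ with $j,k,m\in\mathbb{N}_{\geq 0}$. Series and parallel composition realize only \emph{products} of interactions (in $y$-space or in $f_q$-space respectively), never inverses. If $|a|>1$ and the relevant $|f_q(b^m)|>1$, then all nonconstant products have modulus bounded away from $1$ and tend to infinity; the set $\{j\log a + k\log f_q(b^m): j,k\in\mathbb{N}\}$ is a discrete subset of a half-plane and is nowhere dense. The step where you ``multiply through by $a^{-n}$, realized by a further series-parallel composition that inverts a power of $a$'' is not realizable: from $f_q(y_{G_1\bowtie G_2})=f_q(y_{G_1})f_q(y_{G_2})$ you can never produce $y_H=a^{-1}$, and there is no composition that inverts an effective interaction. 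The paper gets around precisely this issue by iterating a \emph{contracting} M\"obius map $g(z)=f_q(f_q(z)f_q(y_G))$ with attracting fixed point $1$ and non-real multiplier $g'(1)=1/f_q(y_G)$, producing a sequence $y_i\to 1$ whose logarithms are small \emph{and} whose arguments rotate; then a convex-cone lemma (Lemma~\ref{lem:dense linear combinations}) gives $\epsilon$-density of $\mathbb{N}$-linear combinations. That is the missing idea: one needs small generators spanning a full cone, not generators of modulus $>1$.

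\textbf{Secondary issues.} Your ``WLOG $|a|=|y_G|>1$'' is not a clean reduction, because the closure properties are asymmetric (parallel multiplies $y$'s, series multiplies $f_q(y)$'s), and in any case the paper actually needs the stronger reduction to $f_q(y_G)$ being non-real with $|f_q(y_G)|>1$ to make the dynamical argument work; you never isolate which quantity must be non-real. The claim that ``if they were dependent for all $m$ one could deduce that the M\"obius orbit of $a$ is confined to a real one-dimensional set'' is asserted without proof and is not obviously true. Finally, the plan to avoid $y_H\in\{1,\infty\}$ and $Z^{\mathrm{dif}}(H;q,y)=0$ by ``a small perturbation'' does not account for the fact that these conditions must hold at every intermediate stage of the inductive construction, not just for the final graph; the paper handles this by verifying $G_j\in\mathcal{H}^*_{q,y}$ inductively via Proposition~\ref{prop:non-zero dif}. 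On the algorithmic side, the paper replaces your LLL/Diophantine machinery with a contracting-IFS covering lemma (Lemma~\ref{lem:maps covering U}); an LLL approach might be salvageable with a correct density statement, but as written your quantitative step rests on the broken qualitative one.
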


%\begin{theorem}\label{thm:gadget escape special cases}
%Let $q\in \mathbb{C}\setminus \mathbb{R}$ be an algebraic number such that $|q-1|>1$ or $\Re(q)>3/2$. 
%Then there exists an algorithm that on input of an algebraic number $y\in\mathbb{C}$ and a rational $\epsilon>0$ computes a series-parallel gadget $H$ with $|y_H(q)-y| < \epsilon$. Both the size of $H$ and the running time of the algorithm are $\poly(\size(y),\size(\eps))$.
%\end{theorem}

\subsection{Proof of~\ref{thm:main_result planar}}\label{subsec:proof of main thm}
In this section we combine all ingredients mentioned above to provide a proof of~\ref{thm:main_result planar}

To utilize these ingredients, we need to introduce some concepts regarding algebraic numbers. 
We closely follow~\cite{GalanisGoldbergHerrera} in doing so.

We will introduce these concepts for polynomials with integer coefficients, and then automatically obtain definitions for algebraic numbers by applying the definition to their minimal polynomial. In particular, we can talk about the degree $d(\alpha)$ of an algebraic number $\alpha$.
We further define several variants of the height of a polynomial $p\in\Z[x]$: the usual height $H(p)$ is the largest absolute value among the coefficients, and the length $L(p)$ is the sum of the absolute value of the coefficients (similarly we define these notions for multi-variable polynomials).
If we let $d$ be the degree of $p$, let $a_d$ be the leading coefficient of $p$ and let $\alpha_i$ be the roots of $p$, we define the Mahler measure $M(p) \coloneqq |a_d|\prod_{i=1}^d\max(1,|\alpha_i|)$ and finally the absolute logarithmic height $h(p) \coloneqq \frac{1}{d}\log (M(p))$.
Here and in what follows, $\log$ denotes the natural logarithm.

From the definitions we can deduce for any algebraic number $\alpha$ that $d(1/\alpha)=d(\alpha)$, $h(1/\alpha)=h(\alpha)$, and for every rational function $f\in\Q(x)$ that $d(f(\alpha))\leq d(\alpha)$.
%We also see that for any two algebraic numbers $\alpha,\beta$ it holds that $h(\alpha\beta)\leq h(\alpha)+h(\beta)$.
The next lemma records some more intricate results about these heights, and gives some bounds for algebraic numbers of bounded height. Note in particular that item (b) yields the special case $h(\alpha_1\alpha_2) \leq h(\alpha_1)+h(\alpha_2)$ for any algebraic numbers $\alpha_1,\alpha_2$.

\begin{lemma}\label{lem:results algebraic numbers}
\begin{enumerate}[(a)]
\item For any non-zero algebraic number $\alpha$ we have $-d(\alpha)h(\alpha)\leq\log (|\alpha|) \leq d(\alpha)h(\alpha)$.
\item Let $p\in\Z[x_1,\ldots,x_t]$ be a polynomial, $\alpha_1,\ldots,\alpha_t$ algebraic numbers, and define the algebraic number $\beta=p(\alpha_1,\ldots,\alpha_t)$. Write $d_i(p)$ for the degree of $p$ as polynomial in $x_i$, then $h(\beta) \leq \log (L(p)) + \sum_{i=1}^t d_i(p)h(\alpha_i)$.
\item For any polynomial $p\in\Z[x]$ we have
\[
\tfrac{1}{d(p)}\log (H(p)) -\log(2) \leq h(p). %\leq \tfrac{1}{d}\log H(p) + \tfrac{1}{2d}\log(d+1).
\]
\end{enumerate}
\end{lemma}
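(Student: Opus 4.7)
The plan is to handle the three parts in increasing order of difficulty. For part~(a), I would read the upper bound directly off the definition of $M$: writing the minimal polynomial of $\alpha$ as $p(x)=a_{d(\alpha)}\prod_i(x-\alpha_i)$ with $\alpha=\alpha_1$, every factor $\max(1,|\alpha_i|)$ in $M(p)$ is at least $1$ and $|a_{d(\alpha)}|\ge 1$, so $M(p)\ge\max(1,|\alpha|)\ge|\alpha|$. Taking logarithms and dividing by $d(\alpha)$ gives $\log|\alpha|\le d(\alpha)h(\alpha)$. For the lower bound I would apply this same inequality to $1/\alpha$ and invoke $d(1/\alpha)=d(\alpha)$ and $h(1/\alpha)=h(\alpha)$, which are recorded just before the lemma, to convert $\log|1/\alpha|\le d(\alpha)h(\alpha)$ into $\log|\alpha|\ge -d(\alpha)h(\alpha)$.

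For part~(c), the plan is the classical Vieta/Landau argument. Expanding $p=a_d\prod_i(x-\alpha_i)$ shows that the coefficient of $x^k$ equals $\pm a_d\,e_{d-k}(\alpha_1,\ldots,\alpha_d)$; the triangle inequality bounds $|e_{d-k}(\alpha_1,\ldots,\alpha_d)|$ by $\binom{d}{d-k}\prod_i\max(1,|\alpha_i|)$, so $|a_k|\le\binom{d}{k}M(p)\le 2^{d(p)}M(p)$. Thus $H(p)\le 2^{d(p)}M(p)$, and taking logs and dividing by $d(p)$ yields exactly the claim.

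For part~(b), the main obstacle and substance of the lemma, the plan is to invoke the place-theoretic form of the absolute logarithmic height. Let $K$ be a number field containing all $\alpha_i$. For each normalized place $v$ of $K$, the appropriate (archimedean or non-archimedean) triangle inequality applied to $\beta=p(\alpha_1,\ldots,\alpha_t)$ gives $\max(1,|\beta|_v)\le\|p\|_v\prod_{i=1}^t\max(1,|\alpha_i|_v)^{d_i(p)}$, where $\|p\|_v\le L(p)$ at archimedean places and $\|p\|_v\le 1$ at non-archimedean ones. Taking logarithms, multiplying by the local weights $[K_v:\mathbb{Q}_v]/[K:\mathbb{Q}]$, and summing over all places of $K$ then yields $h(\beta)\le\log L(p)+\sum_{i=1}^t d_i(p)h(\alpha_i)$, using that the archimedean weights sum to $1$ and that the weighted sum of $\log\max(1,|\gamma|_v)$ over all places of $K$ equals $h(\gamma)$. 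The technical hurdle is precisely this last identification of the Mahler-measure definition of $h$ with the place-wise formula; that identification is standard but non-trivial, and once it is in place all three parts also admit clean citations to a standard reference such as Bombieri--Gubler or Waldschmidt.
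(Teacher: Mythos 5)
The paper itself does not prove this lemma; it simply cites Waldschmidt's book (Section~3.5.1 for (a), Lemma~3.7 for (b), Lemma~3.11 for (c), with the observation that the last proof does not require irreducibility). Your proposal reconstructs, correctly, exactly the arguments behind those citations, so in substance the routes coincide. Part~(a) via $M(p)\ge|a_d|\max(1,|\alpha|)\ge|\alpha|$ together with the invariances $d(1/\alpha)=d(\alpha)$, $h(1/\alpha)=h(\alpha)$ is sound. Part~(c) is the Landau-type bound $H(p)\le\binom{d}{\lfloor d/2\rfloor}M(p)\le 2^{d}M(p)$, which you derive cleanly from the factored form and the triangle inequality on elementary symmetric functions; this is also how Waldschmidt's Lemma~3.11 goes, and as the paper notes, nothing there uses irreducibility. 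Part~(b) is correctly identified as the only genuinely nontrivial step: one needs the equivalence between the Mahler-measure definition $h(\gamma)=\tfrac1d\log M(p_\gamma)$ used in the paper and the place-theoretic formula $h(\gamma)=\tfrac{1}{[K:\Q]}\sum_v[K_v:\Q_v]\log\max(1,|\gamma|_v)$, after which the local estimates $\max(1,|\beta|_v)\le L(p)\prod_i\max(1,|\alpha_i|_v)^{d_i(p)}$ (archimedean) and $\max(1,|\beta|_v)\le\prod_i\max(1,|\alpha_i|_v)^{d_i(p)}$ (non-archimedean, using that $p$ has integer coefficients) sum to the claimed bound because the archimedean weights sum to~$1$. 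That equivalence is precisely what Waldschmidt's Lemma~3.7 relies on, so you have correctly isolated the one place where a citation (or a product-formula computation) is still required. In short: your proof is correct, and it is a faithful, self-contained expansion of the paper's one-line citation rather than a different strategy.
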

\begin{proof}
All parts can be found in \cite{waldschmidt2013diophantine}:
parts (a) and (b) are respectively in Section~3.5.1 and Lemma~3.7; part (c) is in Lemma~3.11 for the special case that $p$ is irreducible, but the proof works for any polynomial $p$.
\end{proof}

\begin{corollary}\label{cor:bounding algebraic numbers}
Let $q,y$ be algebraic numbers and let $G$ be a graph with $n$ vertices and $m$ edges.
Write $d,h_q,h_y$ for respectively $d(q)\cdot d(y),h(q),h(y)$.
\begin{enumerate}[(a)]
\item If $Z(G;q,y)\neq 0$, then \[\big|\log (|Z(G;q,y)|) \big|\leq \left(nh_q+mh_y + 2m\log (2) \right)d.\]
\item Let $e$ be an edge of $G$ and assume that $Z(G/e;q,y)\neq 0$.
%and let $y^*=Z(G;q)/Z(G/e;q)$.% and assume that not both $Z(G;q)/Z(G/e;q)$. 
Then
\[
\log (H\left(\frac{Z(G;q,y)}{Z(G/e;q,y)}\right))\leq \left(2nh_q+2mh_y+4m\log (2) \right)d.
\]
\end{enumerate}
\end{corollary}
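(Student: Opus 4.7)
The plan is to view $Z(G;q,y)$ as the evaluation at $(q,y)$ of the bivariate integer polynomial
\[
p_G(X,Y) := \sum_{A\subseteq E}(Y-1)^{|A|} X^{k(A)} \in \Z[X,Y],
\]
and then invoke Lemma~\ref{lem:results algebraic numbers}. The partial degrees clearly satisfy $d_1(p_G)\leq n$ and $d_2(p_G)\leq m$. Expanding $(Y-1)^{|A|}$ monomial by monomial (and using that addition can only decrease the length) yields $L(p_G)\leq \sum_{A}2^{|A|}=3^m\leq 4^m$, so $\log L(p_G)\leq 2m\log 2$. Moreover, since $Z(G;q,y)\in \Q(q,y)$ and $[\Q(q,y):\Q]\leq d(q)d(y)=d$, we have $d(Z(G;q,y))\leq d$.

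For part~(a), Lemma~\ref{lem:results algebraic numbers}(b) applied to $p_G$ at $(q,y)$ gives
\[
h(Z(G;q,y))\leq 2m\log 2 + nh_q + mh_y,
\]
and combining with Lemma~\ref{lem:results algebraic numbers}(a) yields $\bigl|\log|Z(G;q,y)|\bigr|\leq d(Z(G;q,y))\cdot h(Z(G;q,y))\leq d(nh_q+mh_y+2m\log 2)$, as required.

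For part~(b), set $\beta=Z(G;q,y)/Z(G/e;q,y)$. The graph $G/e$ has at most $n$ vertices and exactly $m-1$ edges, so the same argument applied to $p_{G/e}$ gives $h(Z(G/e;q,y))\leq 2(m-1)\log 2 + nh_q + (m-1)h_y$. Using the special case $h(\alpha_1\alpha_2)\leq h(\alpha_1)+h(\alpha_2)$ of Lemma~\ref{lem:results algebraic numbers}(b) together with the identity $h(1/\alpha)=h(\alpha)$ noted in the text, I obtain
\[
h(\beta)\leq h(Z(G;q,y))+h(Z(G/e;q,y))\leq (4m-2)\log 2 + 2nh_q + (2m-1)h_y.
\]
Since $\beta\in\Q(q,y)$ we again have $d(\beta)\leq d$, so rearranging Lemma~\ref{lem:results algebraic numbers}(c) yields
\[
\log H(\beta)\leq d(\beta)\bigl(h(\beta)+\log 2\bigr)\leq d\bigl((4m-1)\log 2 + 2nh_q + (2m-1)h_y\bigr)\leq d\bigl(4m\log 2 + 2nh_q + 2mh_y\bigr),
\]
where the last inequality uses $h_y\geq 0$. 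There is no real technical obstacle here: the argument is a direct application of Lemma~\ref{lem:results algebraic numbers}. The only delicate point is the bookkeeping in part~(b), where the spurious $+\log 2$ coming from Lemma~\ref{lem:results algebraic numbers}(c) is absorbed by the savings provided by the fact that $G/e$ has one edge fewer than $G$.
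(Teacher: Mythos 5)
Your proof is correct and follows essentially the same route as the paper: bound the absolute logarithmic height of $Z(G;q,y)$ via Lemma~\ref{lem:results algebraic numbers}(b), then convert to bounds on $\log|Z|$ and $\log H$ via parts (a) and (c). The only cosmetic difference is that you expand $p_G$ directly in the variable $Y$ (getting $L(p_G)\leq 3^m$) whereas the paper works with the monomial basis in $y-1$ (getting $L=2^m$ but then paying an extra $m\log 2$ via $h(y-1)\leq h(y)+\log 2$); the two yield the same final bound $2m\log 2$.
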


\begin{proof}
Let $G=(V,E)$ be our graph, then let us recall the definition $Z(G;q,y)=\sum_{A\subseteq E}q^{k(A)}(y-1)^{|A|}$.
From this we see that $Z(G;q,y)$ is a polynomial of degree $n$ in $q$ and degree $m$ in $y-1$. 
In the variables $q$ and $y-1$, we also see that the sum of the absolute value of the coefficients is $2^m$. 
Note that part (b) of the previous lemma gives $h(y-1)\leq h(y)+\log (2)$, and applying it again to $Z$ gives $h(Z(G;q,y))\leq m\log(2)+n h_q + m(h_y+\log(2))$.
Further $d(Z(G;q,y))\leq d$.
Then by part (a) when $Z(G;q,y)\neq 0$, we see that $\big|\log(|Z(G;q,y)|)\big| \leq (nh_q + mh_y + 2m\log(2))d$, proving part (a).

For part (b) we first look at the absolute logarithmic height of the ratio. We have
\begin{align*}
h\left(\frac{Z(G;q,y)}{Z(G/e;q,y)}\right)&\leq h(Z(G;q,y)) + h(Z(G/e;q,y))\\
&\leq (2n-1)h_q + (2m-1)h_y + (4m-2)\log(2).
\end{align*}
Again the ratio has degree at most $d$, so part (c) of the previous lemma gives 
\begin{align*}
\log \left(H\left(\frac{Z(G;q,y)}{Z(G/e;q,y)}\right)\right) &\leq \left(h\left(\frac{Z(G;q,y)}{Z(G/e;q,y)}\right) + \log(2)\right)d\\ &\leq \left(2nh_q+2mh_y+4m\log (2) \right)d. \qed
\end{align*}
\end{proof}

The following result follows from a slight modification of a result of~\cite{kannan1984polynomial}.

\begin{proposition}\label{prop:determine minimal polynomial}
Let $d,H\in \mathbb{N}_{\geq 2}$ and $\bar{\alpha}\in\mathbb{Q}[i]$.
There exists an algorithm that on input $d,H$ and $\bar{\alpha}$ outputs a polynomial $p$ such that if there exists an algebraic number $\alpha$ of degree at most $d$, height at most $H$ such that $\log(|\alpha-\bar{\alpha}|)\leq -(d^2+5d+2d\log (H))$, then $p$ is the minimal polynomial of $\alpha$. The running time is bounded by $\poly(d,\log (H),\size(\bar{\alpha}))$.
\end{proposition}
\begin{proof}
We first describe the algorithm and then prove its correctness.

We first assume $|\bar{\alpha}|\leq 1$. 
We will apply Algorithm~1.16 from \cite{kannan1984polynomial}. 
If the algorithm terminates within $d$ steps and outputs a polynomial $p$, we output $p$. If after $d$ steps the algorithm does not return anything we output $0$. 
If $|\bar{\alpha}|\geq 1$ we replace $\bar{\alpha}$ by $1/\bar{\alpha}$ and we refer to the proof~\cite[Theorem 1.19]{kannan1984polynomial} how to modify the output of Algorithm~1.16 in this case.

To prove correctness, we may assume $|\bar{\alpha}|\leq 1$ and we may further assume there exist an algebraic number $\alpha$ of degree at most $d$ and height at most $H$ such that 
\begin{equation}\label{eq:alpha close}
|\alpha-\bar{\alpha}|\leq  e^{-(d^2+5d)}\cdot H^{-2d}.
\end{equation}
Otherwise there is nothing to prove.

We next note that as $d\ge 2$ we have
\[
    d^2+5d \ge \log(48d(d+1)^{(3d+4)/2}2^{d^2/2}),
\]
therefore there exists a non-negative integer $s\geq 2$ such that
\begin{equation}
     \tfrac{1}{12d} e^{d^2+5d} \cdot H^{2d}\ge 2^s \ge 2(d+1)^{(3d+4)/2}2^{d^2/2}\cdot H^{2d}. \label{eq:bound dH}
\end{equation}

If $|\alpha|\leq 1$ the conclusion follows immediately from the correctness of~\cite[Algorithm~1.16]{kannan1984polynomial}.
Unfortunately we do not have this information, but we will argue that even in the case $|\alpha|>1$ the algorithm is still correct. We will show that by assuming that a bound of $1+1/(2d)$ on $|\alpha|$ is sufficient. 
By~\ref{eq:alpha close} and \ref{eq:bound dH} this bound is clearly satisfied and it implies that for any $k=1,\ldots,d$, $|\alpha|^k\leq 2$.
With this bound on $|\alpha|$~\cite[Proposition 1.6]{kannan1984polynomial} remains true if we multiply the right-hand side by $1/2$.
By our slightly stronger lower bound on $s$ we see that also the conclusion of~\cite[Lemma 1.9]{kannan1984polynomial} is also valid for this bound on $|\alpha|$.
Finally as in~\cite[Explanation 1.17]{kannan1984polynomial} the conditions of~\cite[Theorem 1.15]{kannan1984polynomial} are still met for this bound on $|\alpha|$. 
Therefore the output of~\cite[Algorithm 1.16]{kannan1984polynomial} is indeed the minimal polynomial of $\alpha$, as desired

The running time bound follows from (the proof of) Theorem 1.19 from~\cite{kannan1984polynomial} where we note that~\cite{kannan1984polynomial} in fact does not require the input of the number $\bar{\alpha}$, but only a certain number of bits of it and therefore the size of $\bar{\alpha}$ does not appear in the running time of the statement of the theorem.
To avoid dealing with how to feed $\bar{\alpha}$ to the algorithm we just allow the algorithm to `read' it completely. \end{proof}

%\subsubsection{Proof of Theorem~\ref{thm:main_result planar}}
 %Now we are ready to prove Theorem~\ref{thm:main_result}. The main idea of the proof is that we compute the ratio $Z(G;q)/Z(G/e;q)$ exactly, and then use a telescoping product to compute $Z(G;q)$. There are however some subtleties that arise when some quantity is zero.

% \begin{proof}[Proof of Theorem~\ref{thm:main_result}]
% We first recall the deletion-contraction relation
% \[
% Z(G;q)=Z(G\setminus e;q) - Z(G/e;q).
% \]
% In particular this tells us, if $Z(G/e;q)=0$, then $Z(G;q)=Z(G\setminus e;q)$.
Now we are ready to prove the following result, from which we will derive~\ref{thm:main_result planar} after giving the proof.
For the theorem let us recall that $\mathcal{H}_{q,y}^*$ is the family of all series-parallel graphs $H$ that satisfy $Z^\mathrm{dif}(H;q,y)\neq 0$ and $y_H(q,y)\not\in \{1,\infty\}$.
\begin{theorem}\label{thm:general main result}
Let $(q,y)\in \mathbb{C}^2\setminus \mathbb{R}^2$ be algebraic numbers, such that $q\not\in\{0,1,2\}$.
Assume there exists $G\in \mathcal{H}^*_{q,y}$ for which $y_G(q,y)$ or $f_q(y_G(q,y))$ is finite and has absolute value strictly bigger than $1$.
Then the problems \textsc{$(q,y)$-Planar-Abs-RC} and \textsc{$(q,y)$-Planar-Arg-RC} are \textsc{\#P}-hard.
\end{theorem}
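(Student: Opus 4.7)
The plan is to show that a polynomial-time algorithm for either \textsc{$(q,y)$-Planar-Abs-RC} or \textsc{$(q,y)$-Planar-Arg-RC} yields a polynomial-time algorithm for computing $Z(G;q,y)$ exactly on input of a planar graph $G$. Under the hypotheses $(q,y)\notin \mathbb{R}^2$ and $q\notin\{0,1,2\}$, the parameters $(q,y)$ lie outside the exceptional set of Vertigan~\cite{VertiganplanarTutte}, so exact planar evaluation is \textsc{\#P}-hard, and the approximation problems inherit this hardness.

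To produce the exact algorithm I would invoke Theorem~\ref{thm:allow telescope}: for every planar graph $G$ and every edge $e\in E(G)$ it suffices to construct, in polynomial time, a pair $(r,b)$ satisfying the three bullet points of that theorem. Fix such $G,e$ and set $A:=Z(G/e;q,y)$ and $B:=Z(G;q,y)$. A rational $C=2^{\poly(\size(G))}$ fulfilling the hypothesis of Theorem~\ref{thm:box shrinking} can be read off from Corollary~\ref{cor:bounding algebraic numbers}(a). To verify the approximation hypothesis of Theorem~\ref{thm:box shrinking}, combine Theorem~\ref{thm:gadget finding} (whose hypotheses are precisely those of Theorem~\ref{thm:general main result}) with Lemma~\ref{lem:implementing gadget alg}: given $y_0\in\mathbb{Q}[i]$ and $\epsilon>0$, first compute a rational approximation to the algebraic number $y_0+y$ within $\epsilon/2$, then use Theorem~\ref{thm:gadget finding} to obtain a series-parallel gadget $H$ with $y_H(q,y)$ within $\epsilon$ of $y_0+y$, and finally apply Lemma~\ref{lem:implementing gadget alg} to approximate $A\hat{y}+B$ for $\hat{y}:=y_H(q,y)-y\in B(y_0,\epsilon)$.

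Now run Theorem~\ref{thm:box shrinking} at a precision $\delta$ to be chosen. If it declares ``$A=0$'', set $b:=0$ and $r:=1$; this is correct for the second bullet of Theorem~\ref{thm:allow telescope} (and anything is allowed in the exceptional case $A=B=0$). Otherwise it outputs $\bar{y}\in\mathbb{Q}[i]$ with $-B/A\in B_\infty(\bar{y},\delta/2)$, so $-\bar{y}$ is an approximation of the algebraic number $r:=B/A=Z(G;q,y)/Z(G/e;q,y)$. To upgrade this to the exact value, I would apply Proposition~\ref{prop:determine minimal polynomial}: since $d(r)\leq d(q)d(y)$ is a constant and $\log H(r)\leq \poly(\size(G))$ by Corollary~\ref{cor:bounding algebraic numbers}(b), a choice of $\log(1/\delta)=\poly(\size(G))$ is sufficient for Proposition~\ref{prop:determine minimal polynomial} to return the minimal polynomial of $r$; together with $-\bar{y}$ this identifies $r$ uniquely. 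Setting $b:=1$ and this $r$ satisfies the first bullet of Theorem~\ref{thm:allow telescope}, and the entire chain of invocations runs in $\poly(\size(G))$ time, yielding the desired exact algorithm.

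The main technical obstacle I anticipate is purely quantitative: one must check that the precision $\delta$ demanded by Proposition~\ref{prop:determine minimal polynomial} in order to recover $r$ exactly, which scales with the degree and logarithmic height of $r$, has bit-length polynomial in $\size(G)$. This amounts to chaining the height bound of Corollary~\ref{cor:bounding algebraic numbers}(b) with the precision condition $\log(1/\delta)\gtrsim d^2+5d+2d\log H$ of Proposition~\ref{prop:determine minimal polynomial}, and then verifying that the polynomial dependencies propagate cleanly through Theorem~\ref{thm:box shrinking}, Theorem~\ref{thm:gadget finding}, and Lemma~\ref{lem:implementing gadget alg}. A minor conceptual point is that the box-shrinking output is unconstrained when $A=B=0$, but this is matched exactly by the third bullet of Theorem~\ref{thm:allow telescope}, so no inconsistency arises.
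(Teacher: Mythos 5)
Your proposal follows the paper's proof almost step for step: reduce exact planar evaluation (hard by Vertigan) to approximate planar evaluation by chaining Theorem~\ref{thm:gadget finding} $\to$ Lemma~\ref{lem:implementing gadget alg} $\to$ Theorem~\ref{thm:box shrinking} $\to$ Corollary~\ref{cor:bounding algebraic numbers} + Proposition~\ref{prop:determine minimal polynomial} $\to$ Theorem~\ref{thm:allow telescope}. The quantitative bookkeeping you sketch is the same as the paper's.

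There is, however, one genuine gap, at the very first step. You assert that the hypotheses $(q,y)\notin\mathbb{R}^2$ and $q\notin\{0,1,2\}$ already place $(q,y)$ outside Vertigan's exceptional set, so that exact planar evaluation is \textsc{\#P}-hard. This is false: Vertigan's exceptions also include $y=1$ and $(q,y)=(3,e^{\pm 2\pi i/3})$ (and $(4,-1)$, but that one is real and hence excluded). The pairs $(3,e^{\pm 2\pi i/3})$ satisfy both $q\notin\{0,1,2\}$ and $(q,y)\notin\mathbb{R}^2$, and so does $y=1$ with $q$ non-real. Your reduction therefore cannot simply invoke Vertigan at these points. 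The missing observation is that the \emph{third} hypothesis of the theorem — the existence of a gadget $G\in\mathcal{H}^*_{q,y}$ with $|y_G(q,y)|>1$ or $|f_q(y_G(q,y))|>1$ — fails at exactly these remaining exceptions, so that the statement is vacuously true there: when $y=1$ the family $\mathcal{H}^*_{q,y}$ is empty, and when $(q,y)=(3,e^{\pm 2\pi i/3})$ one can show $y_G(q,y)\in\{e^{2\pi i/3},e^{-2\pi i/3}\}$ for every $G\in\mathcal{H}^*_{q,y}$ (an argument using the product closure of this set appears in the proof of Corollary~\ref{cor:##P-hard parameters}). You need to supply this reasoning rather than claim the explicit hypotheses on $(q,y)$ already dispose of the exceptions.

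A second, smaller imprecision: after Proposition~\ref{prop:determine minimal polynomial} returns the minimal polynomial $p$ of $-y^*=B/A$, the pair $(p,\bar{y})$ does not automatically identify the algebraic number — one must argue that the rectangle $B_\infty(-\bar{y},\delta/2)$ contains exactly one root of $p$. This is exactly where Mahler's root-separation bound enters: the paper shows that with $\log(\delta^{-1})\geq (d^2+5d)+2d\log H$ the rectangle is smaller than the minimal gap between roots of $p$, so the representation is valid. You should make this explicit; saying ``together with $-\bar{y}$ this identifies $r$ uniquely'' skips the step that justifies it.
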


\begin{proof}
We will assume there exists a polynomial-time algorithm for either the problem \textsc{$(q,y)$-Planar-Abs-RC} or \textsc{$(q,y)$-Planar-Arg-RC}, and then we will find a $\poly(\size(G))$-time algorithm to compute $Z(G;q,y)$ exactly for planar graphs $G$.
Since the latter problem is \textsc{\#P}-hard by \cite{VertiganplanarTutte}, apart from several exceptional values, this implies that \textsc{$(q,y)$-Planar-Abs-RC} and \textsc{$(q,y)$-Planar-Arg-RC} are \textsc{\#P}-hard as well.
The exceptional values are $y=1$, $q=0,1,2$, $(q,y)=(3,e^{\pm\frac{2\pi i}{3}})$ and $(q,y)=(4,-1)$. 
We explicitly excluded most of them in the statement, while for $y=1$ the family $\mathcal{H}_{q,y}^*$ is empty, and for $(q,y)=(3,e^{\pm\frac{2\pi i}{3}})$ we can show that $y_G(q,y)$ is contained in $\{e^{\frac{2\pi i}{3}},e^{-\frac{2\pi i}{3}}\}$ for any $G\in\mathcal{H}_{q,y}^*$ (see the proof of~\ref{cor:##P-hard parameters} below).

The proof essentially consists of linking together~\ref{thm:gadget finding}, ~\ref{lem:implementing gadget alg}, ~\ref{thm:box shrinking}, ~\ref{cor:bounding algebraic numbers}, ~\ref{prop:determine minimal polynomial} and ~\ref{thm:allow telescope}, roughly in that same order.

%It is our plan to combine the assumed algorithm with two previously mentioned algorithms to devise an algorithm as in the assumption of Theorem~\ref{thm:allow telescope}. The result then follows from that theorem.

To keep track of the running times of all the separate algorithms we describe and analyze the resulting algorithm in one go and prove its correctness afterwards.

The assumption in the theorem means that we can apply \ref{thm:gadget finding} to find a gadget $F$ with effective edge interaction $y_F$ close to $y_0+y$ for any $y_0$ (note that $\size(y_0+y)=O(\size(y_0))$). We use $F$ and $Z^\mathrm{dif}(F;q,y)$ in \ref{lem:implementing gadget alg} to approximate $(y_F-y)Z(G/e;q,y)+Z(G;q,y)$. Combined we obtain an algorithm that on input of $y_0\in\Q[i]$, rational $\epsilon >0$, planar graph $G$ and an edge $e$, will compute an $0.25$-abs-approximation (resp.\@ $0.25$-arg-approximation) to $\hat{y} Z(G/e;q,y)+Z(G;q,y)$ for some algebraic number $\hat{y} \in B(y_0,\epsilon)$, in $\poly(\size(G,y_0,\epsilon))$-time.

As in \ref{cor:bounding algebraic numbers}, we write $h_q,h_y$ for the absolute logarithmic height of $q$ and $y$, and $d=d(q)d(y)$.
We now want to apply \ref{thm:box shrinking} with $A=Z(G/e;q,y)$ and $B=Z(G;q,y)$.
To do so we must find an common upper bound for them.
Note that by \ref{cor:bounding algebraic numbers}, we have that $|\log(|A|)|$ and $|\log(|B|)|$ are both bounded by $(nh_q+mh_y+2m\log (2))d$, where $n$ resp.\@ $m$ denote the number of vertices resp.\@ edges of $G$.
Now taking $C$ such that $\log (C)=(nh_q+mh_y+2m\log (2))d$ we satisfy the assumption of \ref{thm:box shrinking}. 
We thus have that $C=O(\size(G))$ (since $q,y$ and hence $h_q,h_y,d$ are considered to be constant).

Take $H$ such that $\log(H)=2(nh_q+mh_y+2m\log (2))d$.
We now apply the algorithm from \ref{thm:box shrinking} with $\log(\delta^{-1})=(d^2+5d)+2d\log(H)$ and as output we get either the statement that ``$A=0$'', or  ``$A\neq 0$'' and a number $\bar{y}$ such that $Z(G;q,y)/Z(G/e;q,y)\in B_\infty(-\bar{y},\delta/2)$.
The running time is bounded by $\poly(\size(C,\delta))=\poly(\size(G))$ (using that $h_q,h_y,d$ are constants).

We now turn this algorithm into an algorithm as required in \ref{thm:allow telescope}.
In case we get ``$A=0$'', we output the pair $(0,1)$.
In case we get ``$A\neq0$'', we run the algorithm of \ref{prop:determine minimal polynomial} on input of $d,H$ and $-\bar{y}$ and let $p$ be the output of this algorithm.
We output the pair $(1,(p,B_\infty(-\bar{y},\delta/2)))$.
The running time is $\poly(\size(G))$, using that $h_q,h_y,d$ are constants and the size of $\bar{y}$ is $\poly(\size(G))$.
This implies that the overall running time is $\poly(\size(G))$, as desired.

We next turn to proving correctness of our algorithm. 
What remains is to show that the algorithm as required in \ref{thm:allow telescope} is correct.
If both $Z(G/e;q,y)$ and $Z(G\setminus e;q,y)$ are $0$, there is nothing to prove.
So let us first assume that $Z(G/e;q,y)=0$ and $Z(G\setminus e;q,y)\neq 0$.
Then by correctness of \ref{thm:box shrinking} we know that our algorithm gives the desired output, namely the pair $(0,1)$. 
Similarly, if $Z(G/e;q,y)\neq 0$, then by \ref{cor:bounding algebraic numbers}(b) we know that $y^*=-Z(G;q,y)/Z(G/e;q,y)$ is an algebraic number of degree at most $d$ and height at most $H$ and by correctness of \ref{thm:box shrinking} it is contained in $B_\infty(\bar{y},\delta/2))$.
Therefore, by our choice of $\delta$, \ref{prop:determine minimal polynomial} implies that $p$ is indeed the minimal polynomial of $-y^*$ in this case.
By a result of \cite{Mahlerrootdistance}, the absolute value of the logarithm of the distance between any two distinct zeros of $p$ is upper bounded by 
\begin{align*}
-\log(\sqrt{3}) + \tfrac{d+2}{2}\log(d)&+(d-1)\log((d+1)H) \\&\leq \tfrac{3}{2}d\log(d+1)+d\log (H)
\leq \log(\delta^{-1}).
\end{align*}
Therefore the rational rectangle $B_\infty(-\bar{y},\delta/2)$ together with $p$ is a representation of the algebraic number $-y^*$, and so the output of our algorithm is as desired.

This finishes the proof.
\end{proof}

In the following corollary, we indicate several regions of $q,y$-parameters for which \ref{thm:general main result} applies. Part (a) includes the (complex-valued) ferromagnetic Potts model. \ref{thm:main_result planar} is included in part (b). Note that part (b) also contains Theorem~1.5 in \cite{GalanisGoldbergHerrera}, our methods merely give a different proof.
Note however that part (b) is not tight, as \ref{fig:chromatic hardness region} shows a larger region of $q$-values where both problems are \textsc{\#P}-hard.

\begin{corollary}\label{cor:##P-hard parameters}
Let $(q,y)\in\C^2\setminus \R^2$ be algebraic numbers, such that $q\not\in\{0,1,2\}$. The problems \textsc{$(q,y)$-Planar-Abs-RC} and \textsc{$(q,y)$-Planar-Arg-RC} are \textsc{\#P}-hard in both of the following cases:
\begin{enumerate}[(a)]
    \item $|y|>1$;
    \item $|1-q|>1$ or $\Re(q)>3/2$, except when $y=1$ or $(q,y)=(3,e^{\pm\frac{2\pi i}{3}})$.
\end{enumerate}
\end{corollary}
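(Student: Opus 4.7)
My plan is to verify the hypothesis of Theorem~\ref{thm:general main result} in each of the two ranges of $(q,y)$ by exhibiting an explicit series-parallel gadget $G\in\mathcal{H}^*_{q,y}$ whose effective or virtual edge interaction is finite and has absolute value strictly greater than $1$. Throughout I write $\hat y := f_q(y) = (y+q-1)/(y-1)$ for the virtual interaction of the single-edge gadget, and note that $|f_q(z)|>1$ is equivalent to $|z-(1-q)|>|z-1|$, i.e.\@ $z$ lies strictly on the $1$-side of the perpendicular bisector $L$ of $\{1,1-q\}$.

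For case (a) the single edge works: since $q\notin\{0,1\}$ we have $Z^{\mathrm{dif}}(G;q,y)=q(q-1)\neq 0$, while $|y|>1$ rules out $y_G=y\in\{1,\infty\}$, so $G\in\mathcal{H}^*_{q,y}$ with $|y_G|>1$, and Theorem~\ref{thm:general main result} applies. For case (b) I would use the gadget families produced from single edges via Lemma~\ref{lem:formulas effective}: the $k$-parallel gadget has effective interaction $y^k$ and virtual $f_q(y^k)$, and the $k$-series gadget has virtual $\hat y^k$ and effective $f_q(\hat y^k)$. If $|y|>1$ or $|\hat y|>1$ the single edge already qualifies. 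Otherwise, when $|y|<1$ the sequence $y^k$ tends to $0$ and $f_q(y^k)\to 1-q$; under the hypothesis $|1-q|>1$ some large-$k$ parallel gadget satisfies $|\hat y_G|>1$. Symmetrically, when $|\hat y|<1$ a large-$k$ series gadget works under $|1-q|>1$. In the remaining subcases of $\Re(q)>3/2$ (where $|1-q|$ may be $\leq 1$) I adapt the $y=0$ argument used for Theorem~\ref{thm:main_result planar}: the two-edge series gadget gives $y_G=(y^2+q-1)/(2y+q-2)$, which at $y=0$ equals $(q-1)/(q-2)$ with $|y_G|>1$ exactly when $\Re(q)>3/2$. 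For general $y$ in this subcase one verifies the two-edge series or the triangle (a 2-path in parallel with an edge) directly, or composes these further using Lemma~\ref{lem:formulas effective}.

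The most delicate subcase is $|y|=|\hat y|=1$, where every $k$-parallel and $k$-series gadget stays on the unit circle for one of $y_G,\hat y_G$. If $y$ is not a root of unity, Weyl's equidistribution theorem shows that the powers $y^k$ are dense on the unit circle, so some $y^k$ lies strictly on the $1$-side of $L$, whence the corresponding $k$-parallel gadget has $|f_q(y^k)|>1$. If $y$ is a root of unity, the orbit is finite and I would carry out a direct case analysis on the order of $y$ together with the interaction of $\{y^k\}$ with the arc $L\cap\{|z|=1\}$, concluding that the orbit can remain entirely on the $(1-q)$-side only when $q=3$ and $y\in\{e^{\pm 2\pi i/3}\}$. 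To justify this final exclusion, set $\omega=e^{2\pi i/3}$; one checks $f_3(\omega)=\omega^2$ and that for any $a,b\in\{\omega,\omega^2\}$, both $ab$ and $f_3(a)f_3(b)$ lie in $\{\omega,\omega^2,1\}$. By induction on the series-parallel construction using Lemma~\ref{lem:formulas effective}, together with the exclusion of $y_G\in\{1,\infty\}$ from $\mathcal{H}^*$, any $G\in\mathcal{H}^*_{3,\omega^{\pm 1}}$ has $y_G\in\{\omega,\omega^2\}$, so $|y_G|=|f_3(y_G)|=1$ and Theorem~\ref{thm:general main result} genuinely cannot be applied for these pairs. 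The principal obstacle of the proof is this root-of-unity analysis: one must show that no finite orbit of $y$ under squaring, combined with $f_q$, remains entirely within the closed unit disk apart from the listed exception.
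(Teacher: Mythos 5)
Your proof of part (a) is identical to the paper's (the single edge $K_2$ works). For part (b) you follow the same broad strategy as the paper—drive powers of $y$ or $\hat y$ towards $0$, apply $f_q$ (and $f_q(\,\cdot\,^2)$) to land near $1-q$ or $\tfrac{q-1}{q-2}$, and invoke Theorem~\ref{thm:general main result}—but there are two genuine gaps that the paper's argument avoids.

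First, your treatment of $\Re(q)>3/2$ with $|1-q|\leq 1$ is not actually an argument. Observing that the two-edge series gadget evaluates to $\tfrac{q-1}{q-2}$ \emph{at $y=0$} says nothing about a general $y$ with, say, $|y|=0.9$; the statement ``for general $y$ in this subcase one verifies the two-edge series or the triangle \dots directly, or composes these further'' is a placeholder, not a proof. The fix (and this is what the paper does) is a limit: take the $N$-fold parallel gadget $G_N$ (so $y_{G_N}=y^N\to 0$ when $|y|<1$), and note
\[
f_q\bigl(f_q(y_{G_N})^2\bigr)\longrightarrow f_q\bigl(f_q(0)^2\bigr)=\frac{q-1}{q-2},
\]
which has modulus $>1$ precisely when $\Re(q)>3/2$; hence the double-series composition of two copies of $G_N$, for $N$ large, is the required gadget. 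Your argument as written does not establish this.

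Second, and more seriously, your handling of the case $|y|=|\hat y|=1$ is incomplete. The Weyl equidistribution step is fine for non-roots of unity, but the ``direct case analysis on the order of $y$'' for roots of unity is asserted, not carried out, and it is not a priori a finite check — there are infinitely many orders, and the constraint that the whole orbit stays weakly inside the half-plane $\{|f_q(z)|\leq 1\}$ is an intricate diophantine condition between $q$ and the order. The paper sidesteps this entirely with a much cleaner observation: in the remaining case one may assume $|y_G|=1$ \emph{and} $|f_q(y_G)|=1$ for \emph{every} $G\in\mathcal{H}^*_{q,y}$. Since $\{|z|=1\}$ is a circle and $\{|f_q(z)|=1\}$ is a line (the preimage of the unit circle under the Möbius map $f_q$ passes through $\infty=f_q^{-1}(1)$), these meet in at most two points, so the set $\{y_G\}$ has at most two elements; and by Lemma~\ref{lem:formulas effective} together with Proposition~\ref{prop:non-zero dif} it is closed under products $\neq 1$. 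A two-element subset of $\C^\times\setminus\{1\}$ closed under products (mod $1$) must be $\{-1\}$ or $\{e^{\pm 2\pi i/3}\}$, giving $(q,y)\in\{(4,-1),(3,e^{\pm 2\pi i/3})\}$, all excluded. This replaces your open-ended case analysis with a short algebraic argument and is the key idea you are missing. Your verification that $(3,e^{\pm 2\pi i/3})$ genuinely is an exception (products of $\omega,\omega^2$ staying in $\{1,\omega,\omega^2\}$) is correct and matches the paper's parenthetical remark in the proof of Theorem~\ref{thm:general main result}.
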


\begin{proof}
For part (a), we simply have to note that $K_2 \in \mathcal{H}_{q,y}^*$ and $y_{K_2}(q,y)=y$, so the graph $K_2$ satisfies the requirement of \ref{thm:general main result}.

For part (b), the assumption yields that either $f_q(0)=1-q$ or $f_q(f_q(0)^2)=\frac{q-1}{q-2}$ will have absolute value strictly bigger than $1$. This means that if we find a $y_G$ or a $f_q(y_G)$ close to these values, we can apply \ref{thm:general main result}.

In what follows we implicitly use \ref{lem:formulas effective} and \ref{prop:non-zero dif} to see that certain numbers are the virtual or effective edge interaction of some series-parallel graph in $\mathcal{H}_{q,y}^*$.
We assume $y\neq 1$, which ensures that the family $\mathcal{H}_{q,y}^*$ is non-empty. If there now exists a graph $G$ in this family with $|y_G|>1$, we are done. However, if $|y_G|<1$, its powers converge to $0$. This means that for $N$ large enough, one of $f_q(y_G^N)$ and $f_q(f_q(y_G^N)^2)$ will have absolute value strictly bigger than $1$, and both are the (virtual) interaction of a series-parallel graph in $\mathcal{H}_{q,y}^*$. So we are left with the case that $|y_G|=1$ for all $G\in\mathcal{H}_{q,y}^*$.
Applying the same argument to $f_q(y_G)$, we may even assume that $|f_q(y_G)|=1$ for all $G\in\mathcal{H}_{q,y}^*$. 

Geometrically, the equations $|z|=1$ and $|f_q(z)|=1$ give a circle and a line in the complex plane, so there can be at most two common solutions $z$. 
This means that the sets $\{y_G(q,y) \mid G\in\mathcal{H}_{q,y}^*\}$ and $\{f_q(y_G(q,y)) \mid G\in\mathcal{H}_{q,y}^*\}$ both contain at most two elements. 
From \ref{lem:formulas effective} and \ref{prop:non-zero dif} below it follows that both sets are closed under taking products, as long as the product is not $1$. 
Therefore the only options for both sets are $\{-1\}$ and $\{e^{\frac{2\pi i}{3}} , e^{-\frac{2\pi i}{3}} \}$, corresponding to $(q,y) \in \{(4,-1), (3,e^{\frac{2\pi i}{3}}), (3,e^{-\frac{2\pi i}{3}}) \}$. All these options are excluded, so we conclude that this special case cannot occur.
\end{proof}

\section{Constructing series-parallel gadgets}\label{sec:constructing gadgets}
In this section we will prove \ref{thm:gadget finding}.
%We will in fact prove a stronger statement from which Theorem~\ref{thm:gadget escape} follows as a direct consequence.

\subsection{A family of potential gadgets}\label{subsec:family H of potential gadgets}
We start by exhibiting a family of series-parallel graphs that can serve as gadgets.

Let $(q,y)\in \mathbb{C}^2$ and recall that we denote by $\mathcal{H}_{q,y}^*$ the family of all series-parallel graphs $H$ that satisfy $Z^\mathrm{dif}(H;q,y)\neq 0$ and $y_H(q,y)\not\in \{1,\infty\}$.
%largest family $\mathcal{H}$ of series-parallel graphs that satisfy the following two assumptions:
%\begin{itemize}
%    \item If $H\in\mathcal{H}$ with $H=H_1 \parallel H_2$ or $H=H_1 \bowtie H_2$, then both $H_1$ and $H_2$ are in $\mathcal{H}$.
%    \item For every $H\in\mathcal{H}$, we have $y_H(q,y)\notin\{1,\infty\}$.
%\end{itemize} 
%Note that $\mathcal{H}_{q,y}^*$ is indeed unique; if $\mathcal{H}_1$ and $\mathcal{H}_2$ are two maximal such families, then $\mathcal{H}_1\cup \mathcal{H}_2$ also satisfies the assumptions. 
Note that if $y=1$, the family $\mathcal{H}_{q,y}^*$ is always empty; while for $y\neq 1$ and $q\not\in\{0,1\}$, the edge $K_2$ is contained in $\mathcal{H}_{q,y}^*$.
The next result says that this family is more or less closed under taking parallel and series composition.
%states that elements of $\mathcal{H}_{q,y}^*$ can in fact serve as gadgets.

\begin{proposition}\label{prop:non-zero dif}
Let $(q,y)\in\C^2$ such that $q\not\in \{0,1\}$, and let $H_1,H_2\in\mathcal{H}_{q,y}^*$.
\begin{itemize}
    \item If $y_{H_1 \parallel H_2}(q,y) \not \in \{1,\infty\}$, then $H_1\parallel H_2 \in \mathcal{H}_{q,y}^*$.
    \item If $y_{H_1 \bowtie H_2}(q,y) \not \in \{1,\infty\}$, then $H_1\bowtie H_2 \in \mathcal{H}_{q,y}^*$.
\end{itemize}
\end{proposition}

\begin{proof}
Write $H=H_1 \parallel H_2$. \ref{lem:basic Zsame and Zdif} yields
\begin{align*}
    Z^\mathrm{dif}(H;q,y) &= \tfrac{1}{q(q-1)} \cdot Z^\mathrm{dif}(H_1;q,y) \cdot Z^\mathrm{dif}(H_2;q,y).
\end{align*}
If $Z^\mathrm{dif}(H;q,y)=0$, then $Z^\mathrm{dif}(H_i;q,y)=0$ for some $i\in \{1,2\}$, which is a contradiction. So we conclude that $Z^\mathrm{dif}(H;q,y)\neq 0$, and hence that $H\in\mathcal{H}_{q,y}^*$.
% induction hypothesis
% implies that $Z^\mathrm{same}(H_1;q)\neq 0$, so $y_{H_1}(q)=\infty$ which is a contradiction.\\

The proof for $H=H_1\bowtie H_2$ is similar, we first recall from \ref{lem:basic Zsame and Zdif}:
\begin{align*}
    Z(H;q,y) &= \tfrac{1}{q} \cdot Z(H_1;q,y)\cdot Z(H_2;q,y),\\
    Z^\mathrm{same}(H;q,y) &=\tfrac{1}{q} \cdot Z^\mathrm{same}(H_1;q,y) \cdot Z^\mathrm{same}(H_2;q,y) \\
    &\quad+ \tfrac{1}{q(q-1)} \cdot Z^\mathrm{dif}(H_1;q,y) \cdot Z^\mathrm{dif}(H_2;q,y).
\end{align*}
Suppose that $Z^\mathrm{dif}(H;q,y)=0$. This implies that $Z^\mathrm{same}(H;q,y)=0$, since $y_H\neq\infty$.
Therefore $Z(H;q,y)=0$, and it follows without loss of generality that $Z(H_2;q,y)=0$. So $Z^\mathrm{same}(H_2;q,y)=-Z^\mathrm{dif}(H_2;q,y)$, and they are non-zero by assumption. Plugging this into the second equation, together with $Z^\mathrm{same}(H;q,y)=0$, yields $(q-1)Z^\mathrm{same}(H_1;q,y)=Z^\mathrm{dif}(H_1;q,y)$. We also assumed that $Z^\mathrm{dif}(H_1;q,y)\neq 0$, leading to $y_{H_1}(q,y)=1$, which is a contradiction. Therefore we can again conclude that $H\in\mathcal{H}_{q,y}^*$.
\end{proof}

\subsection{Values of \texorpdfstring{$q,y$}{q,y} with dense set of interactions}
In this subsection we determine for which values of $q,y$ the effective edge interactions of members of $\mathcal{H}_{q,y}^*$ are dense in the complex plane. 
The next subsection deals with making this algorithmic yielding a proof of \ref{thm:gadget finding}.

The following lemma will turn out to be useful to get density results. We say that a set $S\subseteq \C$ is $\epsilon$-dense if for every $z\in\C$ there exists a $z'\in S$ such that $|z-z'|<\epsilon$.
\begin{lemma}\label{lem:dense linear combinations}
Let $\eps>0$ and let $a,b,c\in B(0,\epsilon)$ such that the convex cone spanned by $a,b,c$ is $\mathbb{C}$. Then the set $a\mathbb{N}+b\mathbb{N}+c\mathbb{N}$ is $\epsilon$-dense in $\mathbb{C}$.
\end{lemma}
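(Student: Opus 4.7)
The plan is to reduce the problem of approximating $z\in\C$ by a non-negative integer combination of three vectors to approximation by just two of them, and to exploit nearest-integer rounding to save the crucial factor of $1/2$. First I would unpack the hypothesis: since the convex cone of $\{a,b,c\}$ equals $\C$, the three vectors must all be non-zero and have three distinct arguments whose three cyclic gaps are each strictly less than $\pi$. Equivalently, $\C$ is covered by three closed wedges, each spanned as a cone by a pair of the vectors (the two whose arguments are cyclically adjacent).

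For any $z \in \C$, I would locate a wedge containing it. After relabeling, say this is the wedge spanned by $b$ and $c$, so there exist unique $\beta, \gamma \geq 0$ with $z = \beta b + \gamma c$. The naive approach of taking $n_b = \lfloor \beta \rfloor$ and $n_c = \lfloor \gamma \rfloor$ only gives a residue of size at most $|b| + |c| < 2\eps$, which is too weak. Instead I would round to the \emph{nearest} non-negative integer: set $n_b = \lfloor \beta + 1/2 \rfloor$ and $n_c = \lfloor \gamma + 1/2 \rfloor$. Because $\beta,\gamma \geq 0$ these values lie in $\N$, and they satisfy $|\beta - n_b|, |\gamma - n_c| \leq 1/2$.

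Taking $n_a = 0$, the triangle inequality then yields
\[
|z - (n_a a + n_b b + n_c c)| \;\leq\; \tfrac{1}{2}|b| + \tfrac{1}{2}|c| \;<\; \tfrac{\eps}{2} + \tfrac{\eps}{2} \;=\; \eps,
\]
which is exactly the $\eps$-density claim. There is no serious obstacle: the decisive observation is that two of the three vectors already express every target as a non-negative real combination, so $a$ contributes nothing beyond the covering argument, and the factor $1/2$ coming from rounding to the nearest (rather than floor) integer is precisely what is needed to fit under $\eps$ rather than $2\eps$.
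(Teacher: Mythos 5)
Your proof is correct and follows essentially the same route as the paper's: express $z$ as a non-negative combination of two of the three vectors (the paper asserts this directly via "we can even assume one of $r,s,t$ is zero," which is the same wedge/Carathéodory observation you spell out), round each coefficient to the nearest non-negative integer, and apply the triangle inequality to bound the error by $\tfrac{1}{2}|b|+\tfrac{1}{2}|c|<\eps$. The extra detail you give on why the arguments of $a,b,c$ must have all cyclic gaps strictly less than $\pi$ is a correct and helpful unpacking, but the core argument is identical.
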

\begin{proof}
For any $z\in\mathbb{C}$ we can write $z=ra+sb+tc$ with $r,s,t\geq0$. We can even assume that one of $r,s,t$ is zero, say $t=0$. Now we round $r,s$ to the nearest integers $R,S$. Then $Ra+Sb\in a\mathbb{N}+b\mathbb{N}$ and
\[
|Ra+Sb - z| \leq |R-r||a|+|S-s||b| < \tfrac{1}{2}\epsilon + \tfrac{1}{2}\epsilon = \epsilon.\qed
\]
\end{proof}

Recall the definition $f_q(z)=1+\frac{q}{z-1}$. Also when $G$ is a two-terminal graph, we call $f_q(y_G(q,y))$ its \emph{virtual interaction}.
\begin{proposition}\label{prop:dense theta}
Let $(q,y)\in\mathbb{C}^2\setminus \mathbb{R}^2$, such that $q\not\in\{0,1\}$.
Assume there exists a series-parallel graph $G\in \mathcal{H}^*_{q,y}$ such that $y_G(q,y)$ or $f_q(y_G(q,y))$ is finite and has absolute value strictly bigger than $1$, then the set $\{y_H(q,y)\mid H\in \mathcal{H}^*_{q,y}\}$ is dense in $\mathbb{C}$.
\end{proposition}

\begin{proof}%[Proof of Proposition~\ref{prop:dense theta}]
We start by showing that we may assume that $f_q(y_G)\notin \mathbb{R}$ and  $|f_q(y_G)|>1$ for some $G\in\mathcal{H}_{q,y}^*$. 
In what follows we implicitly use \ref{lem:formulas effective} and \ref{prop:non-zero dif} to conclude that certain numbers are the virtual or effective edge interaction of some series-parallel graph in $\mathcal{H}_{q,y}^*$.

First assume that $|y_G|>1$ and $y_G$ is real. 
Let $\xi_n=y_G^n$ for $n\in\N$.
If $y\not\in \mathbb{R}$, then  $y\xi_n$ is non-real and it is converges to infinity as $n\to \infty$. 
If $y\in \mathbb{R}$, then $q\not\in\R$ and $f_q(f_q(y)^2)=1+\frac{(y-1)^2}{2y+q-2}$ is also non-real (note that $y \neq 1$ because $\mathcal{H}_{q,y}^*$ is non-empty). 
This means that $f_q(f_q(y)^2)\xi_n$ is non-real and converges to infinity as $n\to \infty$. 
In either case we find a non-real term in the sequence with absolute value bigger than $1$, which is the effective edge interaction of a graph in $\mathcal{H}_{q,y}^*$. The precise graphs is a parallel composition of sufficiently many copies of $G$, together with either one edge or one path of length 2.

For the second case, assume that $|y_G|>1$ and $y_G$ is not real. Then powers of $y_G$ converge to $\infty$ and moreover the arguments (modulo $2\pi$) of these powers take on at least $3$ values. 
If the number of these values is unbounded, it is clear that for some $n$, $f_q(y_G^n)$ is non real and $|f_q(y_G^n)|>1$.
Otherwise, if these values take on precisely $k\geq 3$ values, the values $y_G^n$ converge towards $\infty$ on the $k$ rays from $0$ through $e^{2\pi j/k}$, $j=0,\ldots,k-1$.
The images of these rays under $f_q$ are circular arcs from $f_q(\infty)=1$ to $f_q(0)=1-q$ making pairwise angles of $2\pi/k$ at $1$. Since $k\geq 3$, at least one of these arcs contains a open segment starting at $1$ that does not intersect the closed unit disk. 
Any element from this segment has absolute value bigger than $1$, thus there exists a series-parallel graph $G'\in \mathcal{H}^*_{q,y}$ such that $|f_q(y_{G'})|>1$, that is the parallel composition of sufficiently many copies of $G$.

Let $\tau$ be a non-real element of $\{f_q(y),f_q(y^2)\}=\{1+\tfrac{q}{y-1},1+\tfrac{q}{(y-1)(y+1)}\}$. Then $\tau \cdot \xi^n$ is not real, converges to infinity, and all are virtual interactions of graphs in $\mathcal{H}_{q,y}^*$ corresponding to series composition of copies of $G$ with either an edge or a digon (two parallel edges). Therefore we will find one of absolute value more than $1$, as desired.

From now on we thus assume that we have $G\in \mathcal{H}_{q,y}^*$ such that $f_q(y_G)$ is not real and $|f_q(y_G)|>1$.
Consider next the M\"obius transformation 
\begin{equation}\label{eq:mobius g}
g(z):=f_q(f_q(z)f_q(y_G))=\frac{q-1+zy_G}{q-2+z+y_G}.
\end{equation}
Note that $z=1$ is a fixed point of $g$ and that $g'(1)=1/f_q(y_G)$.
So by assumption $|g'(1)|<1$.

Now consider the sequence defined by $y_1=y_G$ and $y_{i+1}=g(y_i)$, which converges to $1$. 
The only reason that this could not be true is if $y_G$ would be equal to the other (repelling) fixed point of $g$.
The other fixed point of $g$ is given by $1-q$. 
Now $y_G\neq 1-q$ since $f_q(1-q)=0$, while $|f_q(y_G)|>1$.

We next claim that taking finite products of terms in this sequence, produces a dense subset of $\mathbb{C}$.
%The $2$-terminal graphs corresponding to this sequence are series connections of $G$ by Lemma~\ref{lem:formulas effective} and are contained in $\mathcal{H}_{q,y}^*$.
Indeed, note that 
\[\frac{\log(y_{i+1})}{\log(y_i)} =\frac{\log(g(y_i))-\log(g(1))}{\log(y_i)-\log(1)}\to g'(1)=1/f_q(y_G)
\]
as $i\to\infty$. 
Since the number $1/f_q(y_G)$ is non-real, at some point the difference between the arguments of the $\log(y_i)$ lie in a small interval around $\arg(1/f_q(y_G))$, which is nonzero $\mod \pi$. 
In particular, this means that for any $k$ the argument of the sequence $(\log(y_i))_{i\geq k}$ cannot be contained in a half plane. Then for every $\epsilon>0$, we can find three terms in the sequence that satisfy the requirements of \ref{lem:dense linear combinations}. 
% and from this conclude that the effective edge interactions of generalized theta-graphs are dense in $\mathbb{C}$.
This lemma then implies that products of the values $y_i$ are dense in $\mathbb{C}$. 
These products correspond exactly to parallel compositions of series connections of $G$ by \ref{lem:formulas effective}. 
By omitting any effective edge interactions equal to $1$ possibly obtained by taking products of the $y_i$ we still obtain density and since the corresponding $2$-terminal graphs are contained in $\mathcal{H}_{q,y}^*$, this finishes the proof.
\end{proof}

\subsection{Implementing a dense set fast}
In the previous subsection we saw for which values of $q,y$ we can obtain density.
We will show here how to exploit this algorithmically, by proving \ref{thm:gadget finding}.

First we record the following lemma from~\cite{galanis2022a}, which in fact is essentially Lemma 2.8 from~\cite{BGGSindsetcomplexplane} adapted to algebraic numbers.
A M\"obius transformation $\Phi$ is called \emph{contracting} on a set $U\subset \mathbb{C}$ if $|\Phi'(z)|<1$ for all $z\in U$.
%For $z_0\in \mathbb{C}$ and $r>0$ we denote by $B(z_0,r)$ the set of complex numbers $z$ such that $|z-z_0|<r.$
\begin{lemma}\label{lem:maps covering U}
Let $m\in\mathbb{Q}[i]$ and $r>0$ rational. 
Further suppose that we have M\"obius transformations with algebraic coefficients $\Phi_i:\hat{\mathbb{C}} \to \hat{\mathbb{C}}$ for $i\in[t]$, satisfying the following with $U=B(m,r)$:
\begin{enumerate}
    \item for each $i\in[t]$, $\Phi_i$ is contracting on $U$;
    \item $U \subseteq \bigcup_{i=1}^t \Phi_i(U)$.
\end{enumerate}
There is an algorithm which, on input of algebraic numbers $x_0,x_1\in U$ (respectively the target and starting point) and rational $\epsilon >0$, outputs in $\poly(\size(x_0,x_1,\epsilon))$-time an algebraic number $\hat{x}\in B(x_0,\epsilon)$ and a sequence $i_1,\ldots,i_k\in[t]$ such that
\begin{align*}
\hat{x}=\Phi_{i_k}&\circ \dots\circ \Phi_{i_1}(x_1), & k&=O(\log(\eps^{-1})),
\\ 
\text{and } \Phi_{i_j}&\circ\dots\circ \Phi_{i_1}  (x_1)\in U & \text{for all } j&=1,\ldots,k.
\end{align*}
\end{lemma}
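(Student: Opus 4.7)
The plan is to exploit the two hypotheses quantitatively: the uniform contraction to control errors, and the covering property to make sure that a suitable preimage always exists inside $U$. Concretely, since each $\Phi_i$ is a Möbius transformation that is contracting on the open disk $U$, a slight shrinking argument produces a uniform constant $\lambda\in(0,1)$ with $|\Phi_i(z)-\Phi_i(w)|\leq \lambda|z-w|$ for all $z,w\in U$ and all $i\in[t]$. From the covering condition $U\subseteq \bigcup_i\Phi_i(U)$, a Lebesgue-number argument (applied after a small inward shrinking so that we work with a compact set that still contains both $x_0$ and $x_1$) yields $\delta>0$ such that for every $y\in U$ there is some $i=i(y)\in[t]$ with $B(y,\delta)\subseteq \Phi_i(U)$. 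Both $\lambda$ and $\delta$ depend only on $m,r,\Phi_1,\dots,\Phi_t$ and can be computed once as part of the preprocessing.

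The algorithm itself proceeds by a backward pullback from the target. Setting $y^{(k)}:=x_0$, at each step $j=k,k-1,\dots,1$ I would search through $i\in[t]$ for an index $i_j$ with $y^{(j)}\in\Phi_{i_j}(U)$, which is guaranteed to exist, and set $y^{(j-1)}:=\Phi_{i_j}^{-1}(y^{(j)})\in U$. Choosing $k=\bigl\lceil\log_\lambda\bigl(\eps/(2r)\bigr)\bigr\rceil=O(\log(\eps^{-1}))$ iterations produces a sequence $i_1,\dots,i_k$, and the output is $\hat{x}:=\Phi_{i_k}(\cdots \Phi_{i_1}(x_1)\cdots)$, which is algebraic because each $\Phi_i$ has algebraic coefficients and $x_1$ is algebraic.

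The error analysis is immediate: by construction $\Phi_{i_k}(\cdots\Phi_{i_1}(y^{(0)})\cdots)=x_0$, both $y^{(0)}$ and $x_1$ lie in $U$ so $|y^{(0)}-x_1|\leq 2r$, and the uniform contraction bound applied $k$ times gives $|\hat{x}-x_0|\leq \lambda^k\cdot 2r\leq \eps$. Moreover, the intermediate points $\Phi_{i_j}(\cdots\Phi_{i_1}(x_1)\cdots)$ differ from $y^{(j)}\in U$ by at most $\lambda^j\cdot 2r<2r$ in absolute value scaled appropriately, but a cleaner way to ensure they stay in $U$ is to perform the contraction argument inside a slightly smaller concentric disk during preprocessing, so that the $\lambda^j$-neighborhoods of the $y^{(j)}$ are swallowed by $U$.

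The main obstacle, as in the analogous Lemma~2.8 of \cite{BGGSindsetcomplexplane}, is bit-complexity: the exact algebraic representations of the $y^{(j)}$ could blow up under repeated Möbius pullback, so one cannot naively maintain them. I would circumvent this by replacing $y^{(j)}$ with rational approximations $\tilde y^{(j)}\in\mathbb{Q}[i]$ of precision $O(k)$ bits, which costs $\poly(\size(x_0,\eps))$ time per step using the algebraic-number arithmetic recalled in Subsection~\ref{subsec:algebraic numbers}. The Lebesgue buffer $\delta$ absorbs the rounding: as long as $|\tilde y^{(j)}-y^{(j)}|<\delta/2$, the membership test $\tilde y^{(j)}\in \Phi_i(U)$, which reduces to checking $|\Phi_i^{-1}(\tilde y^{(j)})-m|<r$ for an algebraic number, selects a valid $i_j$. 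Only at the very end do I apply the sequence $\Phi_{i_1},\dots,\Phi_{i_k}$ exactly to $x_1$; the resulting $\hat{x}$ has representation of size polynomial in $\size(x_1)$ and $k$, so the overall running time is $\poly(\size(x_0,x_1,\eps))$ as required.
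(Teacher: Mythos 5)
Your proposal follows the standard backward–pullback strategy that the paper itself does not prove but simply cites from~\cite{galanis2022a} (essentially Lemma~2.8 of~\cite{BGGSindsetcomplexplane}): precompute a uniform contraction factor $\lambda<1$ and a Lebesgue number $\delta$, build a preimage chain $y^{(k)}=x_0,\dots,y^{(0)}$ in $U$, push $x_1$ forward along the same indices, and keep the bit-complexity under control by iterating on bounded-precision approximations and only forming the exact algebraic $\hat x$ at the end. All of that is the right skeleton and matches the cited argument.

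There is, however, a genuine gap in your treatment of the extra conclusion that all partial forward compositions lie in $U$ (exactly the ``last requirement'' the paper singles out). Your repair --- shrink to a concentric sub-disk so the $\lambda^j\cdot 2r$-neighbourhoods of the $y^{(j)}$ are ``swallowed by $U$'' --- fails already at $j=1$: the error bound is $\lambda\cdot 2r$, which is within a factor $\lambda$ of the full diameter of $U$ and hence exceeds any fixed buffer $r-r'$ once $\lambda$ is close to $1$, as it typically is in the paper's application (where $|\Phi_i'|$ may be arbitrarily near $1$ on $U$). Moreover, nothing in your pullback construction forces the preimages $y^{(j-1)}=\Phi_{i_j}^{-1}(y^{(j)})$ to stay inside the chosen sub-disk, so the recursion is not self-consistent as written. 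What actually makes this work is a sharper use of the Lebesgue buffer: since $\Phi_{i_j}$ is $\lambda$-Lipschitz on $U$ and $B(y^{(j)},\delta)\subseteq\Phi_{i_j}(U)$, one shows $B(y^{(j-1)},\delta)\subseteq\Phi_{i_j}^{-1}(B(y^{(j)},\delta))\subseteq U$, so every pullback point lies in the fixed compact $\bar B(m,r-\delta)$ and the chain never leaves it; and once $\lambda^j|x_1-y^{(0)}|<\delta$ the forward iterate is trapped near $y^{(j)}$ and stays in $U$. Controlling the first $O(\log(2r/\delta))$ forward steps (before the contraction wins) still requires an additional argument --- e.g.\ starting from $x_1$ near the centre, which is what happens in the paper's invocation with $x_1=1=m$ --- and this is missing from your write-up; in this sense your proposal is incomplete at precisely the point the paper flags as needing attention.
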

Technically the last requirement in the lemma is not stated as such in~\cite{galanis2022a} nor~\cite{BGGSindsetcomplexplane}, but it follows immediately from the proof.

%A \emph{generalized theta-graph} is a series-parallel graph consisting of a number of internally vertex disjoint paths between the start and terminal vertex.
To describe the algorithm for \ref{thm:gadget finding}, we first precompute some data which only depends on $q,y$ (and not on $y_0$ or $\epsilon$), so this counts as constant time. %Using this data, we describe the algorithm.
In the precomputation step, we find the following:
\begin{itemize}
    \item a rational number $r>0$ so that $U:=B(1,r)$ is an open around $1$;
    \item maps $\Phi_i$ as in the above lemma.%, which are of the form $z\mapsto f_q(f_q(z)f_q(0))y_i$, where $y_i$ is the effective edge interaction of a generalized theta-graph.
    %\item an integer $N$ such that $f_q(U^N)^2$ is all of $\hat{\mathbb{C}}$.
\end{itemize}
%Note that these maps are indeed M\"obius transformations with algebraic coefficients.
%Now we can easily find the maps as in Lemma~\ref{lem:maps covering U}. 
By the proof of \ref{prop:dense theta}, we may assume that $f_q(y_G(q,y))$ is non-real and has absolute value more than $1$ for some $G\in\mathcal{H}_{q,y}^*$.

Recall the definition of the M\"obius transformation $g$ in~\ref{eq:mobius g} and recall from the proof of \ref{prop:dense theta} that $g(1)=1$ and $|g'(1)|<1$. 
We can therefore choose an $\alpha$ such that $|g'(1)| < \alpha < 1$. 
Then take $U$ to be a ball $B(1,r)$ contained in the open set
\[
\{u\in\mathbb{C} \mid \alpha < |u| < 1/\alpha, |g'(u)|<\alpha\}.
\]
This ensures that for any $u\in U$, the map $z\mapsto g(z)u$ is a contraction on $U$.

Restricting to effective edge interactions $y_H(q,y)$ contained in $U$ for $H\in \mathcal{H}_{q,y}^*$, we see that by \ref{prop:dense theta} the open sets $g(U)y_H$ cover the compact set $\bar{U}$. 
By the proof of \ref{prop:dense theta} we can select a finite set $\{y_i\mid i\in I\}$ in finite time such that $\cup_{i\in I }g(U)y_i$ already covers $\bar{U}$. 
Now the $\Phi_i$ are defined to be the maps $z\mapsto g(z)y_i$.
Let us fix for each $i\in I$ a series-parallel graph $H_i\in \mathcal{H}_{q,y}^*$ whose effective edge interaction is equal to $y_i$. Using \ref{lem:basic Zsame and Zdif}, we also compute $Z^\mathrm{same}(H_i;q,y)$ and $Z^\mathrm{dif}(H_i;q,y)$.
% Observe that by construction, each of the paths in a generalized theta-graph $H_i$ corresponding to the effective edge interaction $y_i$ has effective edge interaction not equal to $1$ (since the map $(g(y)$ is a bijection and $g(1)=1$). 
% We take such $H_i$ with a minimal number of edges.
% We may therefore assume that $H_i$ is contained in $\mathcal{H}^*.$ (Otherwise the effective edge interaction of some subcollection of $s-t$ paths is equal to $1$ and we could just remove those.)

%The last part in the pre-computation is to find an $N$ such that $f_q(U^N)^2$ is $\hat{\mathbb{C}}$. For $N$ large enough, the power $U^N$ misses just small opens around 0 and $\infty$, so $f_q(U^N)$ misses just small opens around 1 and $1-q$. Since $q$ is non-real, we can take $N$ large enough such that these two opens are on one side of a line through 0. The other side of the line is all reached by $f_q(U^N)$, so $f_q(U^N)^2$ is indeed all of $\hat{\mathbb{C}}$.
%We now consider the case $|q-1|<1$ for $q\in \mathbb{Q}[i]$.

We now give a proof of \ref{thm:gadget finding}.
\begin{proof}[Proof of \ref{thm:gadget finding}]
We first consider the case where $y_0\in U$. 
We take as starting point $y_1=1$ in $U$ and run the algorithm of \ref{lem:maps covering U}. This yields in $\poly(\size(y_0,\eps))$ time a sequence $i_1,\ldots,i_k$ with $\hat{y}=\Phi_{i_k}(\cdots \Phi_{i_1}(y_1)\cdots) \in B(y_0,\eps)$. (Recall that $k=O(\log(\eps^{-1}))$.) 
We may assume that $\Phi_{i_j}(\cdots \Phi_{i_1}(y_1)\cdots)\neq 1$ for all $j=1,\ldots,k$. 
Otherwise we replace the sequence by $i_{j+1},\ldots,i_k$ with $j$ the largest index for which $\Phi_{i_j}(\cdots \Phi_{i_1}(y_1)\cdots)=1$.
% By the proof of \ref{lem:maps covering U} from~\cite{BGGSindsetcomplexplane} we also have $\Phi_{i_j}(\cdots \Phi_{i_0}(y_0)\in U$ for all $j=1,\ldots,k$ and therefore we have $\Phi_{i_j}(\cdots \Phi_{i_0}(y_0)\cdots)\neq\infty$ for all $j$. 

From the sequence $i_1,\ldots,i_k$ we can determine a sequence of series-parallel graphs $G_1,\ldots,G_k$. The sequence starts with $G_1=H_{i_1}$, which has effective edge interaction $\Phi_{i_1}(1)=y_i$.
Recall that every map $\Phi_i$ is of the form $z\mapsto y_i g(z)$, which by \ref{lem:formulas effective} corresponds to a series composition with $G$, and a parallel composition with a graph in \{$H_i\mid i\in I\}$. So we let $G_j = (G_{j-1} \bowtie G) \parallel H_{i_j}$, then $H\coloneqq G_k$ has effective edge interaction $\hat{y}$.
Since $G$ and $H_i$ only depend on $q,y$ they have constant size, and therefore the size of $H$ is $O(\log (\eps^{-1}))$.

Using \ref{lem:basic Zsame and Zdif} we inductively compute $Z^\mathrm{same}$ and $Z^\mathrm{dif}$ of $G_j$ in $\poly(k)$-time, so we can output $Z^\mathrm{dif}(H;q,y)$ along with $H$.

%$H$ of size $O(k)=O(\log(\eps^{-1}))$ with effective edge interaction $\hat{y}$ in time linear in $k$, as the result of these series- and parallel-compositions.

% We have to check that $\hat{y}$ is indeed the effective interaction of a series-parallel gadget $H$ with $O(\log(1/\epsilon))$ edges. 

% Hence every map $\Phi_i$ adds a constant number of edges to the gadget, and the final size is $O(k)=O(\log(1/\epsilon))$. 
By construction $\hat{y}\in B(y_0,\eps)$, so to prove correctness of the algorithm in this case it suffices to show that $H$ is indeed a gadget, that is, $Z^{\mathrm{dif}}(H;q,y)\neq 0$.
To do so we will inductively show that $G_j\in\mathcal{H}_{q,y}^*$ using \ref{prop:non-zero dif} and thereby in particular that $Z^{\mathrm{dif}}(G_j;q,y)\neq 0$. 
%Let us denote the graphs corresponding to the sequence $i_1,\ldots,i_k$, by $G_1,\ldots,G_k=H$.
Note that $G_{1}=H_{i_1}\in \mathcal{H}_{q,y}^*$ by assumption. % as it corresponds to $\Phi_{i_0}(y_0)=y_{i_0}$. 
Continuing inductively, if $G_{j-1}\in \mathcal{H}_{q,q}^*$, then the series composition of $G_{j-1}$ with $G$ is in $\mathcal{H}_{q,y}^*$ since its effective edge interaction is not equal to $1$ as $g(z)=1$ if and only if $z=1$.
Its effective edge interaction is contained in $U$ and therefore not equal to $\infty$.
Next taking the parallel composition with $H_{i_{j}}$ results in the series-parallel graph $G_{j}$ contained in $ \mathcal{H}_{q,y}^*$, as its effective edge interaction is contained in $U$ and therefore not equal to $\infty$; it is not equal to $1$ by construction.

Now we turn to the second case where $y_0\not\in U$ and $y_0\neq 0$.
The algorithm first determines a positive integer $n$ such that $y_0=u_0^n$ for some $u_0\in U$. 
Then it runs the algorithm for the first part on input of $u_0$ and error parameter 
\[\delta=\min\left(\frac{|u_0|}{n-1}, \frac{|u_0|}{e \cdot n|y_0|}\cdot \epsilon\right ),\] 
to obtain a series-parallel graph $H'\in \mathcal{H}_{q,y}^*$ with effective edge interaction $\hat{u} \in B(u_0,\delta)$.
We then output the series-parallel graph $H$ obtained as the $n$-fold parallel composition of $H'$ with itself, which has effective edge interaction $\hat{u}^n$, and we output $Z^\mathrm{dif}(H;q,y)$ computed using \ref{lem:basic Zsame and Zdif}. Clearly $H\in \mathcal{H}_{q,y}^*.$

To prove that the algorithm is also correct in this case, first note that $n=O(|\log(|y_0|)|)$:
the open $U$ contains a set of the form $\{z\in\C \mid -a\leq\arg(z)\leq a, b^{-1}\leq |z|\leq b \}$. 
Now it suffices to take $n=\max(\lceil\pi/a\rceil, \lceil \frac{|\log(|y_0|)|}{\log(b)} \rceil)$. 
We can thus compute $u_0$ in time $\poly(\size(y_0))$. Note that $\size(u_0)=O(n\cdot \size(y_0))$.

% Now we use the first part of this proof with some error bound $\delta$ to get an effective interaction $\hat{u}$ such that $|\hat{u}-u|<\delta$ and that $\hat{u}$ corresponds to a series-parallel gadget with $O(\log(\delta^{-1}))$ edges. 
The output of first procedure, $\hat{u}$, satisfies
%We can bound the difference between $\hat{u}^n$ and $y$, because 
\begin{align*}
|\hat{u}^n-u_0^n|&\leq |\hat{u}-u_0|\cdot n\max(|u_0|,|\hat{u}|)^{n-1}\leq \delta n(|u_0|+\delta)^{n-1}
\\
&\leq \delta n |u_0|^{n-1}e^{(n-1)\delta/|u_0|}=\delta n \frac{|y_0|}{|u_0|}e^{(n-1)\delta/|u_0|}.
\end{align*}
This means that with our choice of $\delta$ we have 
% \[\delta=\min(\frac{|u|}{n-1}, \frac{|u|}{en|y|}\cdot \epsilon),\] 
% because then 
\[
|\hat{u}^n-y_0| \leq \delta n \frac{|y_0|}{|u_0|}e^{(n-1)\delta/|u_0|}\leq \delta n \frac{|y_0|}{|u_0|}e\leq \epsilon.\]

The computation of $\hat{u}$ has a running time of $\poly(\size(u_0,\delta))$, which by construction is $\poly(\size(y_0,\eps))$. 
The series-parallel graph $H$ corresponding to $\hat{u}^n$ is the parallel composition of $n$ copies of the graph corresponding to $\hat{u}$. %It follows by the first part of the proof that $H$ is contained in $\mathcal{H}^*$ and therefore is a gadget.
The number of edges in the gadget is thus $O(n \log(\delta^{-1})) = \poly(\size(\eps,y_0))$.

Finally for $y_0=0$ we simply run the algorithm to find a gadget with an effective edge interaction in $B(\eps/2,\eps/2) \subset B(0,\eps)$, where $\eps/2$ is a non-zero target.
This finishes the proof.
\end{proof}

\section{Box shrinking: a proof of \ref{thm:box shrinking}}\label{sec:complex binary search}
Recall that we consider a linear function $f(y)=Ay+B$, with $A,B$ complex numbers.
Note that in this section we will use $y$ as a variable in this function $f$, and not as a variable in the partition function of the random cluster model.

Our goal is to approximate the root $y^*=-B/A$ of this function. 
We will assume initially that $y^*$ lies within a box of `radius' $D$ (when $A\neq 0$), and every step of the algorithm will shrink this box with a constant factor.
Recall the notation $B_\infty(m,r) = \{z\in\C \mid |\Re(z-m)|<r, |\Im(z-m)|<r \}$ for a square box in the complex plane with radius $r$ and center $m$. 
Also recall that we either have an algorithm at our disposal to find an $0.25$-abs-approximation or to find an $0.25$-arg-approximation to $f$.
We will in this section write $\tilde{f}_\mathrm{abs}(y)$ for any $0.25$-abs-approximation to $f(y)$, and similarly $\tilde{f}_\mathrm{arg}(y)$ for any $0.25$-arg-approximation to $f(y)$.
More concretely, the algorithm takes as input a rational number $y_0\in \Q[i]$ and a rational number $\epsilon>0$, and outputs $\tilde{f}_\mathrm{abs}(\hat{y})$ or respectively $\tilde{f}_{\mathrm{arg}}(\hat{y})$ for some $\hat{y}\in B(y_0,\epsilon)$.

We next describe two variants of our box shrinking procedure, one for the abs-approximator and one for the arg-approximator.
\begin{definition}
\begin{enumerate}[(a)]
    \item Given a square box $B_\infty(m,D)$ with center $m$ and radius $D$. 
With $\epsilon=0.1 D$, we use the algorithm to compute for any $\hat{y}_1\in B(m-\frac{5}{4}D,\epsilon)$, $\hat{y}_2\in B(m+\frac{5}{4}D,\epsilon)$, $\hat{y}_3\in B(m-\frac{5}{4}Di,\epsilon)$ and $\hat{y}_4\in  B(m+\frac{5}{4}Di,\epsilon)$ the values $\tilde{f}_\mathrm{abs}(\hat{y}_1),\ldots, \tilde{f}_\mathrm{abs}(\hat{y}_4)$.
If $\tilde{f}_\mathrm{abs}(\hat{y}_1)\leq \tilde{f}_\mathrm{abs}(\hat{y}_2)$ remove the strip of width $\frac{1}{4}D$ at the right side of the box and at the left side otherwise (see \ref{fig:box_shrinking}).
If $\tilde{f}_\mathrm{abs}(\hat{y}_3)\leq \tilde{f}_\mathrm{abs}(\hat{y}_4)$ remove the strip of width $\frac{1}{4}D$ at the top of the box and and the bottom of the box otherwise.
Denote the resulting box by $\mathcal{S}_\mathrm{abs}(B_\infty(m,D))=B_\infty(m',D')$.
    \item Given a square box $B_\infty(m,D)$ with center $m$ and radius $D$. 
With $\epsilon=0.1D$, we use the algorithm to compute for any $\hat{y}_1\in B(m-\tfrac{5}{4}D,\epsilon)$, $\hat{y}_2\in B(m+\tfrac{5}{4}D,\epsilon)$, $\hat{y}_3\in B(m-\tfrac{5}{4}Di,\epsilon)$ and $\hat{y}_4\in B(m+\tfrac{5}{4}Di,\epsilon)$ the values $\tilde{f}_\mathrm{arg}(\hat{y}_1),\ldots,\tilde{f}_\mathrm{arg}(\hat{y}_4)$. If $\tilde{f}_\mathrm{arg}(\hat{y}_1)-\tilde{f}_\mathrm{arg}(\hat{y}_2)$ is in the interval $(0,\pi)$ we remove the top strip of width $\tfrac{1}{4} D$ of the box, remove the bottom strip otherwise (see~\ref{fig:box_shrinking_arg}).
If $\tilde{f}_\mathrm{arg}(\hat{y}_3)-\tilde{f}_\mathrm{arg}(\hat{y}_4)$ is in the interval $(0,\pi)$, we remove the left strip of width $\tfrac{1}{4} D$, remove the right strip otherwise.
Denote the resulting box by $\mathcal{S}_\mathrm{arg}(B_\infty(m,D))=B_\infty(m',D')$.
\end{enumerate}
\end{definition}

\begin{figure}[ht]
\centering
%\begin{minipage}{0.48\textwidth}
  \begin{tikzpicture}[scale=0.8]
    \fill[fill=red!20] (3,-4) rectangle (4,4);
    \draw (-4,-4) -- (-4,4) -- (4,4) -- (4,-4) -- cycle;
    \draw[<->] (0,2.8) -- node[above]{$D$} (4,2.8);
    \draw[<->] (0,1.2) -- node[above]{$\frac{3}{4}D$} (3,1.2);
    \filldraw (0,0) circle[radius=1pt] node[right]{$m$};
    \filldraw (-5,0) circle[radius=1pt]
      (5,0)  circle[radius=1pt];
    \draw[dashed] (-5,0) circle[radius=0.4]
    (5,0) circle[radius=0.4];
    \node at (-5,-0.9) {$\hat{y}_1$};
    \node at (5,-0.9) {$\hat{y}_2$};
    \draw[<->] (-5,-1.5) -- node[below]{$\frac{5}{2}D$} (5,-1.5);
    \draw[<->] (-5.8,-0.4) -- node[left]{$2\epsilon$} (-5.8,0.4);
    \node at (-3,4.5) {$B_\infty(m,D)$};
  \end{tikzpicture}
  \caption{Box shrinking procedure $\mathcal{S}_\mathrm{abs}$. If $\tilde{f}_\mathrm{abs}(\hat{y}_1) \leq \tilde{f}_\mathrm{abs}(\hat{y}_2)$, the red-shaded strip is removed.}\label{fig:box_shrinking}
%\end{minipage}\hfill 
\end{figure}
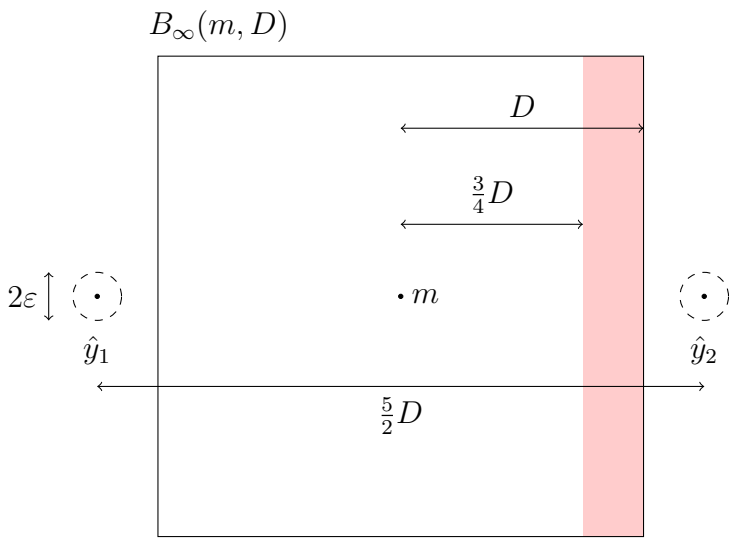
%\begin{minipage}{0.48\textwidth}
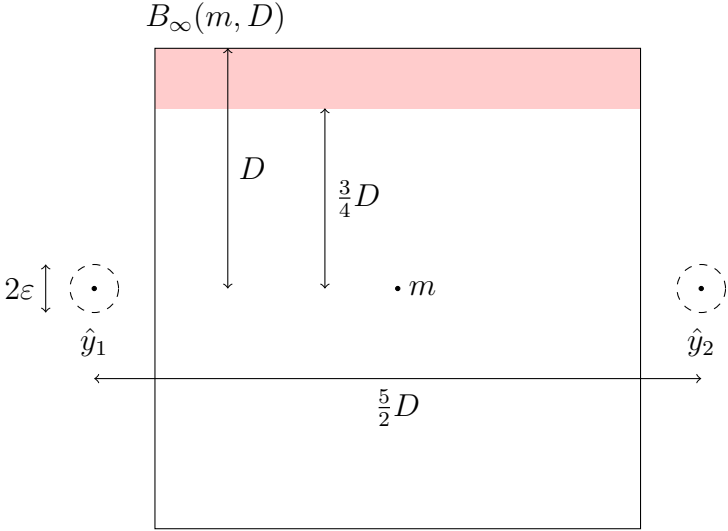
\begin{figure}
\centering
  \begin{tikzpicture}[scale=0.8]
    \fill[fill=red!20] (-4,3) rectangle (4,4);
    \draw (-4,-4) rectangle (4,4);
    \draw[<->] (-2.8,0) -- node[right]{$D$} (-2.8,4);
    \draw[<->] (-5,-1.5) -- node[below]{$\frac{5}{2}D$} (5,-1.5);
    \filldraw (0,0) circle[radius=1pt] node[right]{$m$};
    \filldraw (-5,0) circle[radius=1pt]
      (5,0)  circle[radius=1pt];
    \draw[dashed] (-5,0) circle[radius=0.4]
    (5,0) circle[radius=0.4];
    \node at (-5,-0.9) {$\hat{y}_1$};
    \node at (5,-0.9) {$\hat{y}_2$};
    \draw[<->] (-1.2,0) -- node[right] {$\frac{3}{4}D$} (-1.2,3);
    \draw[<->] (-5.8,-0.4) -- node[left]{$2\epsilon$} (-5.8,0.4);
    \node at (-3,4.5) {$B_\infty(m,D)$};
  \end{tikzpicture}
  \caption{Box shrinking procedure $\mathcal{S}_\mathrm{arg}$. If $\tilde{f}_\mathrm{arg}(\hat{y}_1)-\tilde{f}_\mathrm{arg}(\hat{y}_2)$ is in the interval $(0,\pi)$, the red-shaded strip is removed.}\label{fig:box_shrinking_arg}
%\end{minipage}
\end{figure}

In the next two lemmas we record important observations about this procedure. The first says we have a good understanding of the midpoint and radius of the resulting box, and the second says we are guaranteed to retain the zero $y^*$ in our box.
\begin{lemma}\label{lem:basic observations box shrinkin}
The resulting radius $D'$ satisfies $D'=\tfrac{7}{8}D$ and the resulting center $m'$ satisfies $m'=m+(\tfrac{\pm 1}{8}+\tfrac{\pm i}{8}) D$.
\end{lemma}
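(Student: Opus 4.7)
The statement is a direct geometric calculation about what removing a strip of width $\tfrac{1}{4}D$ from a square box of radius $D$ does to the center and radius, so I would just unpack the definitions.

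First I would observe that $B_\infty(m,D)$ is the square whose real and imaginary coordinates each range over an interval of length $2D$ centered at the coordinates of $m$. The box shrinking procedure (in either the abs or arg version) removes exactly one strip of width $\tfrac{1}{4}D$ from one of the two horizontal sides and one strip of width $\tfrac{1}{4}D$ from one of the two vertical sides, so I would just analyze these two removals independently along the two coordinate axes, since they do not interact.

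Next I would compute the horizontal case. Before the removal, the real part ranges over $[\Re(m)-D,\Re(m)+D]$. If the right strip is removed, the remaining interval is $[\Re(m)-D,\Re(m)+\tfrac{3}{4}D]$, which has length $\tfrac{7}{4}D$ and midpoint $\Re(m)-\tfrac{1}{8}D$. Symmetrically, removing the left strip gives an interval of the same length with midpoint $\Re(m)+\tfrac{1}{8}D$. So in either case the new horizontal half-width is $\tfrac{7}{8}D$ and the real part of the center shifts by $\pm\tfrac{1}{8}D$. Applying the identical argument in the imaginary direction gives a new vertical half-width of $\tfrac{7}{8}D$ and an imaginary shift of $\pm\tfrac{1}{8}D$.

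Combining the two, the resulting region $\mathcal{S}_\bullet(B_\infty(m,D))$ is again a square with equal side lengths $2 \cdot \tfrac{7}{8}D$, i.e.\ of the form $B_\infty(m',D')$ with $D' = \tfrac{7}{8}D$ and $m' = m + (\pm\tfrac{1}{8}\pm\tfrac{i}{8})D$, as claimed. There is no real obstacle here; this lemma is purely bookkeeping and its only role is to make the quantitative guarantees of the box shrinking recursion (geometric contraction rate $7/8$ and controlled displacement of the center) available in the subsequent correctness and running-time analysis.
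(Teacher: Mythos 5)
Your proof is correct; the paper states this lemma without proof as an obvious bookkeeping observation, and your coordinate-wise computation (each removal of a width-$\tfrac{1}{4}D$ strip shrinks that axis's interval from length $2D$ to $\tfrac{7}{4}D$ and shifts its midpoint by $\tfrac{1}{8}D$) is exactly the calculation that justifies it.
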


\begin{lemma}\label{lem:box shrinking does not throw away y*}
Suppose that $A\neq 0$ and $\mathcal{S} \in \{\mathcal{S}_\mathrm{abs},\mathcal{S}_\mathrm{arg}\}$. If $y^*\in B_\infty(m,D)$, then $y^*\in \mathcal{S}(B_\infty(m,D))$.
\end{lemma}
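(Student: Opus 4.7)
The plan is to normalize the setup and then verify the two variants by case analysis on where $y^*$ sits in the box. First I would translate and rescale so that $m = 0$ and $D = 1$; this is harmless because both the abs- and arg-approximation properties are invariant under multiplication of $f$ by a nonzero constant. The procedure makes two independent decisions (removing one horizontal and one vertical strip), and the setup is symmetric under reflections and $\pi/2$-rotations, so it suffices to verify the following: \emph{(i) abs-case}, if $y^* \in B_\infty(0,1)$ satisfies $\Re y^* \ge 3/4$ then $\tilde f_\mathrm{abs}(\hat y_1) > \tilde f_\mathrm{abs}(\hat y_2)$; and \emph{(ii) arg-case}, if $y^* \in B_\infty(0,1)$ satisfies $\Im y^* \ge 3/4$ then $\tilde f_\mathrm{arg}(\hat y_1) - \tilde f_\mathrm{arg}(\hat y_2)$ does not lie in $(0,\pi) \pmod{2\pi}$.

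For (i), since $f(y) = A(y - y^*)$ we have $|f(\hat y_i)| = |A|\,|\hat y_i - y^*|$. Writing $y^* = x + iy$ with $x \in [3/4, 1]$ and $|y| \le 1$, and using $|\hat y_i - (\mp 5/4)| \le 1/10$, the triangle inequality gives
\[
 \frac{|f(\hat y_1)|}{|f(\hat y_2)|} \;\ge\; \frac{\sqrt{(5/4+x)^2 + y^2} - 1/10}{\sqrt{(5/4-x)^2 + y^2} + 1/10}.
\]
This lower bound is non-decreasing in $x$ and non-increasing in $|y|$, so its minimum over the parameter region is attained at the corner $(x,|y|) = (3/4, 1)$, where it equals $(\sqrt{5} - 1/10)/(\sqrt{5/4} + 1/10) > 1.75$. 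Since $1.75 > e^{1/2} \approx 1.649$ and the two $0.25$-abs-approximations together distort the ratio by a factor of at most $e^{1/2}$, we conclude $\tilde f_\mathrm{abs}(\hat y_1)/\tilde f_\mathrm{abs}(\hat y_2) > 1$.

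For (ii), set $v_i = \hat y_i - y^*$, so that $\arg(f(\hat y_1)/f(\hat y_2)) = \arg(v_1 \overline{v_2}) \in \R/2\pi\Z$. Writing $v_i^{(0)} = \mp 5/4 - y^*$ for the ideal centers, a direct calculation gives
\[
 v_1^{(0)} \overline{v_2^{(0)}} = \bigl(x^2 + y^2 - 25/16\bigr) - (5y/2)\, i,
\]
whose imaginary part is strictly negative for $y \ge 3/4$, so its argument lies in $(-\pi, 0)$. A short case analysis on the sign of the real part shows this argument is bounded away from both $-\pi$ and $0$ by at least $\arctan(15/8) \approx 1.08$ throughout the parameter region (the worst case on the $-\pi$ side being $(x,y) = (0, 3/4)$). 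Passing from $v_i^{(0)}$ to $v_i$ perturbs $\arg v_i$ by at most $\arcsin(1/(10|v_i^{(0)}|))$, and since $|v_i^{(0)}| \ge \sqrt{10}/4$ on the region, the combined shift on the argument difference is at most $2 \arcsin(2/(5\sqrt{10})) < 0.26$. Combined with the $2 \cdot 0.25 = 0.5$ error contributed by the two arg-approximations, the total deviation is less than $0.76$, still below the $1.08$ safety margin, so the approximate difference remains in $(-\pi, 0)$ modulo $2\pi$.

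The main technical obstacle is that the numerical margins close only tightly: the abs ratio $1.75$ exceeds $e^{1/2}$ by only about $6\%$, and the arg separation $1.08$ exceeds the combined error budget $0.76$ by only about $0.3$ radians. The constants in the definition of the procedure ($5/4$ sample offset, $1/10$ input tolerance, $3/4$ strip cut-off, and $1/4$ approximation precision) appear to be finely calibrated, so careful tracking of the worst-case positions of $y^*$ at the corners of the allowed regions is required, though each individual sub-computation is elementary.
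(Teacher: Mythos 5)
Your argument is correct, and the abs-case is in the same spirit as the paper's (both bound the ratio $|f(\hat y_1)|/|f(\hat y_2)|$ from below and compare it to $e^{1/2}$; you locate the worst-case corner via a monotonicity observation, while the paper bounds $|\hat y_1-y^*|^2-|\hat y_2-y^*|^2$ and $|\hat y_2-y^*|$ separately, which is essentially the same computation). Your arg-case, however, takes a genuinely different and arguably cleaner route. The paper introduces the angle $\alpha=\arg(\hat y_2-y^*)-\arg(\hat y_1-y^*)$, applies the law of cosines in the triangle $\hat y_1,\hat y_2,y^*$, manipulates it into a formula $\cot(\alpha)=\frac{x^2+w^2-|\hat y_1-\hat y_2|^4}{2w|\hat y_1-\hat y_2|^2}$ with $x=d_1^2-d_2^2$ and $w=4\cdot\text{area}$, and then bounds each of $x$, $w$, and $|\hat y_1-\hat y_2|$ separately to conclude $|\cot\alpha|<\sqrt 3$, so $\alpha\in(\pi/6,5\pi/6)$. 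You instead compute $v_1^{(0)}\overline{v_2^{(0)}}=(x^2+y^2-25/16)-(5y/2)i$ directly and bound its argument by inspecting the sign of the real part; this makes the geometry immediate and avoids the auxiliary variables entirely. Both end up with margins of comparable tightness (paper: $\alpha$ is at least $\pi/6\approx 0.52$ from $0$ and $\pi$ with a $0.5$ error budget; you: at least $\arctan(15/8)\approx 1.08$ with a $0.76$ budget because you track the $\hat y_i$-perturbation in the angle rather than in $|\hat y_1-\hat y_2|$, $x$, $w$). One small omission worth filling in a full write-up: the claim that the abs-case lower bound is non-increasing in $|y|$ deserves a one-line justification (it reduces to $u-v>1/10$ where $u,v$ are the two square roots, which holds since $u-v=5x/(u+v)\geq\tfrac{15/4}{\sqrt{97}/4+\sqrt{5}/2}>1$ in the relevant range), and the "short case analysis" on $\mathrm{sgn}(x^2+y^2-25/16)$ should be spelled out, but both assertions are true as stated.
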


\begin{proof}
We prove the cases $\mathcal{S}=\mathcal{S}_\mathrm{abs}$ and $\mathcal{S}=\mathcal{S}_\mathrm{arg}$ separately.
\indent {\bf Case 1}[$\mathcal{S}=\mathcal{S}_\mathrm{abs}$]
\\
In this case we write $\tilde{f}$ for $\tilde{f}_\mathrm{abs}$. Suppose that $y^*\not\in \mathcal{S}(B_\infty(m,D))$, we will reach a contradiction and thereby prove the lemma. We may assume wlog that $\tilde{f}(\hat{y}_1)\leq \tilde{f}(\hat{y}_2)$, but that $y^*$ is in the strip of width $\frac{1}{4}D$ at the right side of the box. Note that $\hat{y}_1$ and $\hat{y}_2$ are certainly not equal to $y^*$.

Then we see that
\[
    \frac{\tilde{f}(\hat{y}_1)}{\tilde{f}(\hat{y}_2)}  > e^{-0.5} \frac{|A\hat{y}_1+B|}{|A\hat{y}_2+B|} = e^{-0.5} \frac{|\hat{y}_1 - y^*| }{ |\hat{y}_2 - y^*|}.
\]
We will now proceed by bounding $|\hat{y}_1-y^*|^2-|\hat{y}_2-y^*|^2$ and $|\hat{y}_2-y^*|$. For this denote $y_1=m-\frac{5}{4}D$ and $y_2=m+\frac{5}{4}D$. See~\ref{fig:box_shrinking} for a sketch of the situation.

We easily find the upper bounds $|y_1-y^*| \leq \frac{\sqrt{97}}{4}D$ and $|y_2-y^*|\leq\frac{\sqrt{5}}{2}D$ (the maxima are reached when $y^*$ is in a corner of the strip), so $|\hat{y}_2-y^*| \leq \frac{\sqrt{5}}{2}D + \epsilon < \tfrac{3}{2} D$, using that $\epsilon=0.1D$.

Next we see, using Pythagoras and the fact that $y_1$ and $y_2$ have the same imaginary part, that $|y_1-y^*|^2-|y_2-y^*|^2 $ is equal to
\begin{align*}
&\Re(y_1-y^*)^2 - \Re(y_2-y^*)^2  + \Im(y_1-y^*)^2 - \Im(y_2-y^*)^2  \\
=&(\Re(y_1-y^*) - \Re(y_2-y^*)) \cdot (\Re(y_1-y^*) + \Re(y_2-y^*)) \\
%&+ (\Im(y_1-y^*) - \Im(y_2-y^*)) \cdot (\Im(y_1-y^*) + \Im(y_2-y^*))\\
\geq & \tfrac{5}{2}D \cdot \tfrac{3}{2}D + 0\cdot 2D = \tfrac{15}{4}\cdot D^2.
\end{align*} 
Including the uncertainty in $\hat{y}_i$ versus $y_i$ yields that $ |\hat{y}_1-y^*|^2 - |\hat{y}_2-y^*|^2 $ is lower bounded by
\begin{align*}
   &(|y_1-y^*|-\epsilon)^2-(|y_2-y^*|+\epsilon)^2 \\
    &=|y_1-y^*|^2-|y_2-y^*|^2-2\epsilon(|y_1-y^*|+|y_2-y^*|)\\
    &\geq \tfrac{15}{4}D^2-0.2D\left(\tfrac{\sqrt{97}}{4}D+\tfrac{\sqrt{5}}{2}D\right) > 3D^2.
\end{align*}

%The uncertainty in $\hat{y}_i$ versus $y_i$ means that any term $y_i-y^*$ in this bound can change at most $\epsilon$ to obtain $\hat{y}_i-y^*$.
%These added $\epsilon$'s lead to the bound
%\begin{align*}
%    |\hat{y}_1-y^*|^2 - |\hat{y}_2-y^*|^2 & \geq \left(\tfrac{5}{2}D - 2\epsilon \right) \left( \tfrac{3}{2}D - 2\epsilon \right) - 2\epsilon \left(2D+2\epsilon\right) = 2.55D^2.
%\end{align*}
Putting this together yields
\[
\frac{|\hat{y}_1-y^*|^2}{|\hat{y}_2-y^*|^2}=1+\frac{|\hat{y}_1-y^*|^2 - |\hat{y}_2-y^*|^2}{|\hat{y}_2-y^*|^2}> 3 >e,
\]
which proves that $\tilde{f}(\hat{y}_1)/\tilde{f}(\hat{y}_2)>1$. This contradicts $\tilde{f}(\hat{y}_1)\leq \tilde{f}(\hat{y}_2)$, completing the proof of Case 1.

{\bf Case 2 }[$\mathcal{S}=\mathcal{S}_\mathrm{arg}$]
\\
In this case we write $\tilde{f}$ for $\tilde{f}_\mathrm{arg}$.
Suppose that $y^*\not\in\mathcal{S}(B_\infty(m,D))$, we will again reach a contradiction. This time we may assume that $\tilde{f}(\hat{y}_1)-\tilde{f}(\hat{y}_2)$ is in the interval $(0,\pi)$, but that $y^*$ is in the strip of width $\frac{1}{4}D$ at the top of the box.

We setup some notation
\begin{align*}
    d_1 &= |\hat{y}_1-y^*|, &   d_2 &= |\hat{y}_2-y^*|, & \alpha &=\arg(\hat{y}_2-y^*) - \arg(\hat{y}_1-y^*),\\
      x &= d_1^2-d_2^2, &         y &= 2d_1d_2, &              z &= d_1^2+d_2^2,\\
    y_1 &= m-\tfrac{5}{4}D, &   y_2 &= m+\tfrac{5}{4}D, &      w &= y\sin(\alpha).
\end{align*}
Note that $\arg(f(\hat{y}_2))-\arg(f(\hat{y}_1))$ is equal to
\begin{align*}
& \arg(A\hat{y}_2+B)-\arg(A\hat{y}_1+B)\\
=&\arg(A\hat y+B-(Ay^*+B))-\arg(A\hat{y}_1+B-(Ay^*+B))=\alpha,
\end{align*}
and we can interpret this geometrically as the angle between the segments $\hat{y}_1$ to $y^*$ and $\hat{y}_2$ to $y^*$, measured clockwise from $\hat{y}_2$ to $\hat{y}_1$.
By definition of the arg-approximation, the difference $\tilde{f}(\hat{y}_2)-\tilde{f}(\hat{y}_1)$ will differ at most $0.5<\frac{\pi}{6}$ from $\alpha$.
Also observe that $x,y,z$ satisfy the relation $x^2+y^2=z^2$.
The cosine rule in the triangle formed by $\hat{y}_1,\hat{y}_2,y^*$ yields $z=y\cos(\alpha)+|\hat{y}_1-\hat{y}_2|^2$. Together this implies the following relation:
\begin{align*}
    x^2&=z^2-y^2\\
    &= \left(y\cos(\alpha)+|\hat{y}_1-\hat{y}_2|^2\right)^2-y^2\\
    &= -y^2\sin^2(\alpha)+2y\sin(\alpha)\cdot |\hat{y}_1-\hat{y}_2|^2\cot(\alpha) + |\hat{y}_1-\hat{y}_2|^4 \\
    &= -w^2+2w\cdot |\hat{y}_1-\hat{y}_2|^2\cot(\alpha) + |\hat{y}_1-\hat{y}_2|^4,
    \end{align*}
and hence
\begin{equation}\label{eq:cot}
    \cot(\alpha) = \frac{x^2+w^2-|\hat{y}_1-\hat{y}_2|^4}{2w\cdot |\hat{y}_1-\hat{y}_2|^2}.
\end{equation}
The distance $|y_1-y_2|$ is exactly $\tfrac{5}{2}D$, so we see that
\[
2.3D=\tfrac{5}{2}D-2\epsilon<|\hat{y}_1-\hat{y}_2| < \tfrac{5}{2}D+2\epsilon = 2.7D.
\]
%$|\hat{y}_1-\hat{y}_2|$ is between $\tfrac{5}{2}D-2\epsilon$ and $\tfrac{5}{2}D+2\epsilon$.
We can also bound $|y_i-y^*|\leq \frac{\sqrt{97}}{4}D$ for both $i=1,2$.
Now we continue to bound $x$ and $w$. Using Pythagoras, we compute that $|y_1-y^*|^2-|y_2-y^*|^2$ is equal to
\begin{align*}
& \Re(y_1-y^*)^2 - \Re(y_2-y^*)^2  + \Im(y_1-y^*)^2 - \Im(y_2-y^*)^2  \\
&=(\Re(y_1-y^*) - \Re(y_2-y^*)) \cdot (\Re(y_1-y^*) + \Re(y_2-y^*)) \\
%&+ (\Im(y_1-y^*) - \Im(y_2-y^*)) \cdot (\Im(y_1-y^*) + \Im(y_2-y^*))\\
&=\tfrac{5}{2}D\cdot (\Re(y_1-y^*) + \Re(y_2-y^*)).
\end{align*}
The last factor has absolute value at most $2D$, so in total the absolute value is at most $5D^2$.
With the perturbations $\hat{y}_i$ from $y_i$ this gives the bound
\begin{align*}
|x|&=\left|d_1^2-d_2^2 \right| \\
&\leq \max\left\{\left|(|y_1-y^*|+\epsilon)^2-(|y_2-y^*|-\epsilon)^2\right| \right., 
\\
&\quad\quad\quad\quad\quad \left. \left|(|y_1-y^*|-\epsilon)^2-(|y_2-y^*|+\epsilon)^2\right|\right\}\\
&= \left| |y_1-y^*|^2 - |y_2-y^*|^2 \right| + 2\epsilon(|y_1-y^*|+|y_2-y^*|)\\
&\leq 5D^2 + 0.2D\cdot 2\tfrac{\sqrt{97}}{4}D < 6D^2.
\end{align*}

We note that $w$ is four times the area of the triangle formed by the points $\hat{y}_1,\hat{y}_2,y^*$.
To obtain a lower bound, we may assume that $\hat{y}_1=y_1+(\epsilon+\epsilon i)$, because this point is closer the the line through $y^*$ and $\hat{y}_2$ then any point in $B(y_1,\epsilon)$. Similarly we may assume that $\hat{y}_2=y_2+(-\epsilon+\epsilon i)$, and then the minimal area is attained when $\Im(y^*)=\Im(m)+\tfrac{3}{4}D$.
For the upper bound, we assume that $\hat{y}_1=y_1+(-\epsilon-\epsilon i)$ and $\hat{y}_2=y_2+(\epsilon-\epsilon i)$, and the maximum is when $\Im(y^*)=\Im(m)+D$.
This yields the bounds
\[
 2.99D^2 = 2(\tfrac{5}{2}D-2\epsilon)(\tfrac{3}{4}D-\epsilon) \leq w \leq 2(\tfrac{5}{2}D+2\epsilon)(D+\epsilon) = 5.94D^2.
\]

Plugging in all these bounds into~\ref{eq:cot} yields
\[
-\sqrt{3}<-1.40 < \cot(\alpha) < 1.37<\sqrt{3}.
\]
We see that $\arg(\hat{y}_1-y^*)\in (\pi,\tfrac{3}{2}\pi)$ and $\arg(\hat{y}_2-y^*) \in (\tfrac{3}{2}\pi,2\pi)$, so that $\alpha \in (0,\pi)$. Then the bounds on $\cot(\alpha)$ imply $\alpha \in \left(\frac{1}{6}\pi,\frac{5}{6}\pi\right)$.
Including the approximation error, we find that $\tilde{f}(\hat{y}_2)-\tilde{f}(\hat{y}_1)$ is in the interval $(0,\pi)$, contradicting the assumption that the opposite difference $\tilde{f}(\hat{y}_1)-\tilde{f}(\hat{y}_2)$ is in the interval $(0,\pi)$. This completes the proof.\qed
\end{proof}

We now use the box shrinking procedure to prove~\ref{thm:box shrinking}, which we restate for convenience of the reader.
\begin{namedtheorem*}{Theorem 3.4}
Let $A,B$ be complex numbers and let $C>0$ be a rational number such that 
$|A|$ and $|B|$ are both at most $C$, and both are either $0$ or at least $1/C$.
Assume one of the following: 
\begin{itemize}
\item there exists a $\poly(\size(y_0,\eps))$-time algorithm to compute on input of $y_0\in\Q[i]$ and a rational number $\epsilon>0$ an $0.25$-abs-approximation of $A\hat{y}+B$ for some algebraic number $\hat{y} \in B(y_0,\eps)$, or,
\item there exists a $\poly(\size(y_0,\eps))$-time algorithm to compute on input of $y_0\in\Q[i]$ and a rational number $\epsilon>0$ an $0.25$-arg-approximation of $A\hat{y}+B$ for some algebraic number $\hat{y}\in B(y_0,\epsilon)$.
\end{itemize}
Then there exists an algorithm that on input of a rational $\delta>0$ and $C>0$ as above that outputs ``$A=0$'' when $A=0$ and $B\neq 0$, and that outputs ``$A\neq 0$'' and a number $\bar{y}\in \mathbb{Q}[i]$ such that $-B/A\in B_\infty(\bar{y},\delta/2)$ when $A\neq 0$. 
When $A=B=0$ it is allowed to output anything. 
The running time is $\poly(\size(C,\delta))$.
\end{namedtheorem*}

\begin{proof}
We will first decide whether $A=0$ or $A\neq 0$.

If we are using the abs-approximation algorithm we do the following:
compute $\tilde{f}_\mathrm{abs}(\hat{y})$ for any $\hat{y}$ such that $|\hat{y}|>5C^2$.
If $\tilde{f}_\mathrm{abs}(\hat{y})< 2C$, we output ``$A=0$'' and terminate, else we output ``$A\neq 0$'' and continue with the rest of the algorithm.

Instead if we are using the arg-approximation algorithm:
with $\epsilon=0.1C^2$, we take $\hat{y}_1 \in B(-5C^2,\epsilon)$, $\hat{y}_2 \in B(5C^2,\epsilon)$ and compute $\tilde{f}_\mathrm{arg}(\hat{y}_1)$ and $\tilde{f}_\mathrm{arg}(\hat{y}_2)$.
If $\tilde{f}_\mathrm{arg}(\hat{y}_2)-\tilde{f}_\mathrm{arg}(\hat{y}_1)$ is in the interval $(-\frac{1}{4}\pi,\frac{1}{4}\pi)$, we output ``$A=0$'' and terminate, else we output ``$A\neq 0$'' and continue with the rest of the algorithm.

Next, the idea is to apply the box shrinking procedure, with $\mathcal{S}=\mathcal{S}_\mathrm{abs}$ or $\mathcal{S}=\mathcal{S}_\mathrm{arg}$ depending on the algorithm at our disposal.
We take as starting box the square box with center $0$ and `radius' $D=C^2$.
We apply the box shrinking procedure 
\[
n\coloneqq \left\lceil \log(2D/\delta)/\log(8/7) \right\rceil
%n\coloneqq 1+\max\{\log(D/2\delta)/\log(8/7),\log\left(\frac{1+\eta}{1-\eta}C^2D\right)/\log(8/7)\}
\]
many times.
Finally the algorithm outputs the center $m_n$ of the ball $B_\infty(m_n,D_n)=\mathcal{S}^{\circ n}(B_\infty(0,D))$.

We first argue correctness of the algorithm and deal with the running time after that.
We may assume that not both $A$ and $B$ are equal to $0$, otherwise the algorithm is allowed to output anything anyway.

If $A=0$ and $B\neq 0$ we have $f(y)=B\neq 0$ for any $y$. For the abs-approximator we see that $\tilde{f}_\mathrm{abs}(\hat{y})\leq e^{0.25}|B|<2C$ and then the algorithm indeed outputs ``$A=0$''.
On the other hand, for the arg-approximator we have that $\tilde{f}_\mathrm{arg}(\hat{y}_2)-\tilde{f}_\mathrm{arg}(\hat{y}_1)$ is at most $2\cdot0.25<\frac{\pi}{4}$ away from $\arg(f(\hat{y}_2))-\arg(f(\hat{y}_1))=0$, so again the algorithm correctly outputs ``$A=0$''.

If on the other hand $A\neq 0$, we see for the abs-approximator that $|f(\hat{y})|\geq |A||\hat{y}|-|B|>C^{-1}\cdot 5C^2 -C=4C$. This is in particular non-zero, so $\tilde{f}(\hat{y})\geq e^{-0.25}|f(\hat{y})|>2C$ and indeed the algorithm outputs ``$A\neq 0$''.
For the arg-approximator, we again adopt the notation from the proof of \ref{lem:box shrinking does not throw away y*}, and we will prove that $\cos(\alpha)<0$. First we see that $|\hat{y}_1-\hat{y}_2|^2 \geq (10D-2\epsilon)^2=96.04D^2$. We can bound $d_i^2$ by using Pythagoras, to see that $d_i^2\leq (6D+\epsilon)^2+(D+\epsilon)^2=38.42D^2$, yielding $z \leq 76.84D^2$. Then
\[
y\cos(\alpha)=z-\left| \hat{y}_1 - \hat{y}_2 \right|^2\leq 76.84D^2-96.04D^2<0.
\]
This means that $\alpha\in (\tfrac{1}{2}\pi,\tfrac{3}{2}\pi)$, and $\tilde{f}(\hat{y}_2) - \tilde{f}(\hat{y}_1)$ is in the interval $(\tfrac{1}{3}\pi,\tfrac{5}{3}\pi)$, since $2\cdot0.25<\pi/6$. Then the algorithm indeed outputs ``$A\neq0$''.

When $A\neq 0$, we also see that $y^*\in B_\infty(0,D)$ by our choice of $D$. 
By \ref{lem:box shrinking does not throw away y*} we then know that $y^*\in B_\infty(m_n,D_n)$.
Because $D_n=(7/8)^nD \leq \delta/2$, this box is small enough.
We thus conclude that our algorithm is correct and move on to the analysis of the running time.

The running time of the algorithm is dominated by the applications of the box shrinking.
First of all we note that $n=O(\log(C/\delta))$.
The smallest $\epsilon$ that we encounter is $0.05\delta$.
By \ref{lem:basic observations box shrinkin} and induction, it follows that after $k$ steps, the diagonal of the current box, $D_k$, is of the form $(7/8)^k D$ and its center, $m_k$, is of the form $0$ plus a rational multiple (of size $O(k)$) of $D$ and hence is itself rational. 
The values of $y_i$ that we encounter are of the form $m_k \pm \tfrac{5}{4}D_k$ and $m_k \pm \tfrac{5}{4}D_ki$.
Therefore the $y_i$ that are used as input for the assumed algorithm have their sizes bounded by $O(\log(D))+O(n)=O(\log(C/\delta))$.
Hence we obtain a running time of
\[
O\big(\log(C/\delta)\big)\poly(\log(C/\delta))=\poly(\log(C/\delta)).\qed
\]
\end{proof}

\section{Concluding remarks}\label{sec:conclusion}

\paragraph{Planar graphs: inside the disk $|q-1|<1$} An interesting question left open by our results is whether approximately computing the chromatic polynomial of planar graphs is \textsc{\#P}-hard for all non-real algebraic $q$. We refer to \ref{fig:chromatic hardness region} for a figure displaying the region for which we can prove hardness with the aid of a computer;
here we use the computer to try and verify the condition in \ref{thm:general main result}.
The family $\mathcal{H}_{q,y}^*$, appearing in this condition, is now restricted to series-parallel graphs. 
We can allow this family to contain planar, two-terminal graphs where the terminal are on the same face. Looking at wheel graphs this can be seen to improve \ref{fig:chromatic hardness region}, but is not yet enough to resolve the question completely.

\paragraph{Real evaluations of the chromatic polynomial for planar graphs}
The focus in the present paper has been on non-real evaluations. 
Using our techniques we can also obtain results for real evaluations of the chromatic polynomial.
It suffices for $q\in \mathbb{R}$ to be able to get density on the real line with effective edge interactions of planar graphs (where, as above, the two terminals are on the same face). 
Having this, the machinery of the present paper can be adapted in a straightforward way to prove hardness of approximating the absolute value at $q$. 
To obtain density of the effective edge interactions at $q$ it suffices to find an effective interaction that is finite and less than $-1$. %proof in in two separate case: 32/27<q<2 and $q>2$ where $1-q<-1$!
When restricting to the family $\mathcal{H}^*_{q,0}$, this is possible if and only if $q\in (32/27,2)$ by~\cite{bencs2022location}. 
When we replace the family $\mathcal{H}^*_{q,0}$ by the family of two-terminal planar graphs with the two terminals on the same face, one can get a negative effective interaction, thus density in $\mathbb{R}$, for $q$ inside the union of three intervals $(2,3)\cup (3,t_1)\cup (t_2,4)$, where $t_1\approx 3.618032$ and $t_2\approx 3.618356$ by results of \cite{thomassen1997zero} and \cite{PerrettThomassen18}.
Consequently, it is then \textsc{\#P}-hard to approximate the absolute value of the chromatic polynomial for planar graphs for any algebraic $q$ in any of these intervals. 

Interestingly, there is the value $\tau+2\in (t_1,t_2)$ (where $\tau$ is the golden ratio) at which the chromatic polynomial of any planar graphs is positive by a result of \cite{Tuttegolden}, see also~\cite{PerrettThomassen18}. 
This suggests that the computational complexity of approximating the absolute value of the chromatic polynomial at $\tau+2$ is an intriguing problem.\footnote{In a recent seminar talk (see \url{https://homepages.dcc.ufmg.br/~gabriel/AGT/wp-content/uploads/2021/02/30_Gordon_Royle.pdf}) it was announced that Gordon Royle and Melissa Lee proved that $t_1$ can be chosen to be $\tau+2$.}

Finally, we ask about the complexity of approximating the absolute value of the chromatic polynomial at large values of $q$ on planar graphs (computing the sign is hard for general graphs~\cite{GoldbergJerrumsign}).
\cite{BL} (see also \cite{Woodalllargest}) showed that planar graphs have no chromatic roots larger than $5$ indicating that the constructions that we employ in the present paper are not possible. This suggests that determining the complexity of approximating the chromatic polynomial for planar graphs in this regime is an interesting problem.

\paragraph{Almost bounded degree (planar) graphs} 
Combining our techniques with some ingredients from~\cite{bencs2022location} and some tools from complex dynamics we expect that for $\Delta\geq 3$ and non-real algebraic $q$ such that $1<|q-1|< \Delta-1$,  approximating the chromatic polynomial for planar graphs of maximum degree at most $\Delta$ with one vertex of potentially unbounded degree is \textsc{\#P}-hard. 
We leave the details for follow up work\footnote{Meanwhile this has been confirmed in \cite{SdR}.}.
This relates to a result of \cite{GSVinapprox}, who showed that for even positive integer $q$, corresponding to proper $q$-colorings, proved that it is \textsc{NP}-hard to approximate the evaluation of the chromatic polynomial at $q$ on all graphs of maximum degree $\Delta$ when $q<\Delta$.
This should be contrasted with a result from~\cite{PatelRegts} which shows that for all graphs of maximum degree at most $\Delta$ and any $q$ such that $|q|>6.91\Delta$ there exists an efficient algorithm to approximate the chromatic polynomial.
This algorithm is based on the Barvinok's interpolation method~\cite{barbook} and a zero-freeness result for the chromatic polynomial for bounded degree graphs~\cite{FP08,JPS13}.
The zero-free region can be extended to the family graphs of maximum degree at most $\Delta$ where one vertex may have unbounded degree at the cost of replacing $6.91\Delta$ by $7.97\Delta+1$ using~\cite[Corollary 6.4]{Sokalbounded}. 
Using Sokal's representation of the chromatic polynomial of a graph with one vertex of potentially unbounded degree as an evaluation of the partition function of a (multivariate) random cluster model with external fields of a bounded degree graph~\cite{Sokalbounded}, the algorithm from~\cite{PatelRegts} can be adapted to run in polynomial time for this class of graphs as well.

\paragraph{The reliability polynomial}

Recall that $T(G;x,y)$ denotes the Tutte polynomial of a graph $G$.
For a connected graph $G$ and $x=1$ we define
\begin{align*}
C(G;y) & \coloneqq (y-1)^{|V|-1}T(G;1,y)
=\sum_{\substack{ A\subseteq E \\ (V,A) \text{ connected}}}(y-1)^{|A|} 
\\
&= \lim_{q\to0} \tfrac{1}{q}Z(G;q,y).
\end{align*}
This is up to a transformation the reliability polynomial, i.e.\@ $(1-p)^{|E|}C(G;\frac{1}{1-p})$ gives the probability that the graph $G$ remains connected if edges are independently selected with probability $p$, and deleted with probability $1-p$ (see e.g. \cite{Sokalsurvey}). 
Clearly, the approach for proving density in the present paper does not apply directly. 
However, a variation of our methods can be applied in this setting. 
We will leave this for future work.

\begin{acknowledge}
This research was funded by the Netherlands Organisation of Scientific Research (NWO): VI.Vidi.193.068. We thank the anonymous referees for constructive feedback.
\end{acknowledge}
\bibliography{chromatic}%
%\bibliographystyle{plain}
%\bibliography{chromatic}
\end{document}